\newtheorem{theorem}{Theorem}
\newtheorem{lemma}{Lemma}
\newtheorem{fact}{Fact}
\newtheorem{remark}{Remark}
\newtheorem{definition}{Definition}
\newtheorem{example}{Example}
\newtheorem{proposition}{Proposition}
\DeclareMathOperator*{\argmin}{arg\,min}
\def\dd{\mathinner{.\,.}}
\newcommand{\cO}{\mathcal{O}}
\newcommand{\occ}{\textsf{occ}}
\newcommand{\dna}{\textsc{DNA}\xspace}
\newcommand{\EBOL}{\textsc{EBOL}\xspace}
\newcommand{\COR}{\textsc{COR}\xspace}
\newcommand{\xml}{\textsc{XML}\xspace}
\newcommand{\english}{\textsc{ENGLISH}\xspace}
\newcommand{\sources}{\textsc{SOURCES}\xspace}
\newcommand{\proteins}{\textsc{PROTEINS}\xspace}
\newcommand{\boost}{\textsc{BOOST}\xspace}
\newcommand{\wiki}{\textsc{WIKI}\xspace}
 \newcommand{\defDSproblem}[3]{
  \vspace{2mm}
\noindent\fbox{
  \begin{minipage}{0.96\columnwidth}
  \textsc{#1}\\
  {\bf{Input:}} #2  \\
  {\bf{Output:}} #3
  \end{minipage}
  }
  \vspace{2mm}
}
\newcommand{\BASELINE}{\textsc{RPM-DP}\xspace}
\newcommand{\RPM}{\textsc{RPM}\xspace}
\newcommand{\RPMST}{\textsc{RPM-ST}\xspace}
\newcommand{\RPMESA}{\textsc{RPM-ESA}\xspace}
\newcommand{\RFR}{RFR\xspace}
\newcommand{\FC}{\textsc{FC}\xspace}
\newcommand{\LR}{LR\xspace}
\newcommand{\per}{\textsf{per}}
\title{Resilient Pattern Mining}
\author[1]{Pengxin Bian}
\author[1]{Panagiotis Charalampopoulos}
\author[2]{Lorraine A. K. Ayad}
\author[3]{\\Manal Mohamed}
\author[4,5]{Solon P.\ Pissis}
\author[1]{Grigorios Loukides}
\affil[1]{King's College London, UK}
\affil[2]{Brunel University of London, UK}
\affil[3]{Birkbeck, University of London, UK}
\affil[4]{CWI, Amsterdam, The Netherlands}
\affil[5]{Vrije Universiteit, Amsterdam, The Netherlands}
\date{\vspace{-.5cm}}
\begin{document}

\maketitle

\thispagestyle{empty}

\begin{abstract}
Frequent pattern mining is a flagship problem in data mining.
In its most basic form, it asks for the set of substrings of a given string $S$ of length $n$
that occur at least $\tau$ times in $S$, for some integer $\tau\in[1,n]$.
We introduce a \emph{resilient} version of this classic problem, which we term the \textsc{$(\tau, k)$-Resilient Pattern Mining} (\RPM) problem. Given a string~$S$ of length $n$ and two integers $\tau, k\in[1,n]$, \RPM asks for the set of substrings of $S$ that occur at least $\tau$ times in~$S$, even when the letters at \emph{any} $k$ positions of $S$ are substituted by other letters. Unlike frequent substrings, resilient ones account for the fact that changes to string $S$ are often expensive to handle or are unknown.

We propose an exact $\cO(n\log n)$-time and $\cO(n)$-space algorithm for \RPM, which employs advanced data structures and combinatorial insights. We then present experiments on real large-scale  datasets from different domains demonstrating that: (I) The notion of resilient substrings is useful in analyzing genomic data and is more powerful than that of frequent substrings, in scenarios where resilience is required, such as in the case of versioned datasets; (II) Our algorithm is \emph{several orders of magnitude} faster and more space-efficient than a baseline algorithm that is based on dynamic programming; and (III) Clustering based on resilient substrings is effective. 
\end{abstract}

\clearpage
\setcounter{page}{1}

\section{Introduction}
Given a string $S=S[0\dd n-1]$ of length $n$ and an integer $\tau\in[1,n]$, the classic frequent pattern mining (FPM) problem~\cite{DBLP:conf/icdm/FischerHK05, DBLP:conf/pkdd/FischerHK06, DBLP:journals/tkde/DhaliwalPT12}
asks for the set of substrings of $S$ that occur at least $\tau$ times in~$S$. These substrings are referred to as \emph{$\tau$-frequent}.  
We introduce a \emph{resilient} version of FPM, which we term the \textsc{$(\tau, k)$-Resilient Pattern Mining} (\RPM) problem. Informally stated,  
given a string~$S$ of length~$n$ and integers $\tau, k\in[1,n]$, \RPM asks for the set of substrings of $S$ that occur at least $\tau$ times in $S$, even when \emph{any} $k$ positions of~$S$ undergo letter substitutions.
Clearly, \RPM is a generalization of the FPM problem; it corresponds to the $(\tau, 0)$-RPM case. 

\begin{example}
Let $S=\texttt{aaabaaaabbaaa}$,
$\tau=2$, and $k=1$. 
The output of \textsc{$(\tau, k)$-Resilient Pattern Mining} is $\{\texttt{aaa}, \texttt{aa}, \texttt{a}, \texttt{b}\}$. Each such substring is $(2,1)$-resilient as it has (at least) two occurrences in~$S$ even after the letter in any single position of $S$ is substituted.
For example, the substring \texttt{aaa} of $S$ is $2$-frequent in $S'=\texttt{\underline{aaa}ba\textcolor{red}{b}aabb\underline{aaa}}$, as even when letter $\textcolor{red}{\texttt{b}}$ substitutes~\texttt{a} at position $5$ of $S$, two occurrences of \texttt{aaa} (those underlined in $S'$) still exist.
\end{example}

Here, we consider \RPM under letter substitutions;
considering other edit operations is an intriguing research direction.

\paragraph*{Motivation.}~Unlike frequent substrings, resilient ones account for the fact that changes to a string are often expensive to handle (e.g., we may need to perform 
FPM on the new string from scratch even if it is highly similar to the original one) 
or are unknown (e.g., when caused by adversaries~\cite{clelland1999hiding} or they occur in the future).  
\RPM is motivated by applications in:

{\bf 1.} \emph{Bioinformatics:} In this application domain, there are large collections of very similar strings (also referred to as \emph{repetitive})~\cite{acmsurv1}.   
For example, conserved regions in DNA are stretches of nucleotides that remain relatively unchanged across different individuals, species, or time periods, indicating that these regions likely play essential biological roles~\cite{kryukov2005small}. 
However, even in conserved regions, there may be slight variations due to mutations, sequencing errors, or other factors~\cite{kryukov2005small,stojanovic1999comparison}. Similarly, a pangenome is a collection of DNA sequences each corresponding to the same part of DNA of a different individual~\cite{computational2018computational}. To find conserved regions in a DNA sequence collection, or motifs of interest in a pangenome, a straightforward approach is to solve FPM in each sequence of the collection and then output only the substrings that are frequent in all sequences. This, however, takes time proportional to the size of the collection, which can be in the order of TBs~\cite{boucher2021phoni,turnbull2018100}. 
An alternative approach is to exploit the fact that a bound on \emph{the number} of changes (letter substitutions) of the strings in the collection is known or can be estimated~\cite{acmsurv1,var} 
and apply \RPM into any of these strings. This approach is much  faster and approximates the set of patterns that are frequent in all strings remarkably well, as shown in \cref{example:case}.  

\begin{example}\label{example:case}
    Consider the collection $\mathcal{C}$ of $143,588$ SARS-CoV-2 DNA sequences from~\cite{ncbiGenomeDatasets} with total length $4.176\cdot 10^9$. \cref{fig:runtime:casestudy} shows that mining $(\tau, k)$-resilient substrings from an arbitrarily selected sequence in~$\mathcal{C}$ is more than \emph{four orders of magnitude} times faster on average than mining the $\tau$-frequent substrings that appear in all sequences in~$\mathcal{C}$.
    \cref{fig:jaccard:casestudy} shows the Jaccard Similarity~\cite{buyya2016big} $J_R$ between the set of $\tau$-frequent substrings in all sequences in~$\mathcal{C}$ and the set of $(\tau,k)$-resilient substrings. $J_R$ is $0.96$ on average, implying that the set of $(\tau,k)$-resilient substrings approximates the set of $\tau$-frequent ones very well (i.e., the former substrings are frequent in almost all sequences in $\mathcal{C}$).  
    \cref{fig:jaccard:casestudy} also shows the Jaccard Similarity $J_F$ between 
     the set of $\tau$-frequent substrings in all sequences and the set of $\tau$-frequent substrings in an arbitrarily selected sequence from $\mathcal{C}$. $J_F$ is $0.5$ on average and no more than $0.68$, implying that the latter substrings are not frequent in most of the sequences in $\mathcal{C}$.  
\begin{figure}[!ht]
    \centering
    \begin{subfigure}{0.45\textwidth}
        \includegraphics[width=\textwidth]{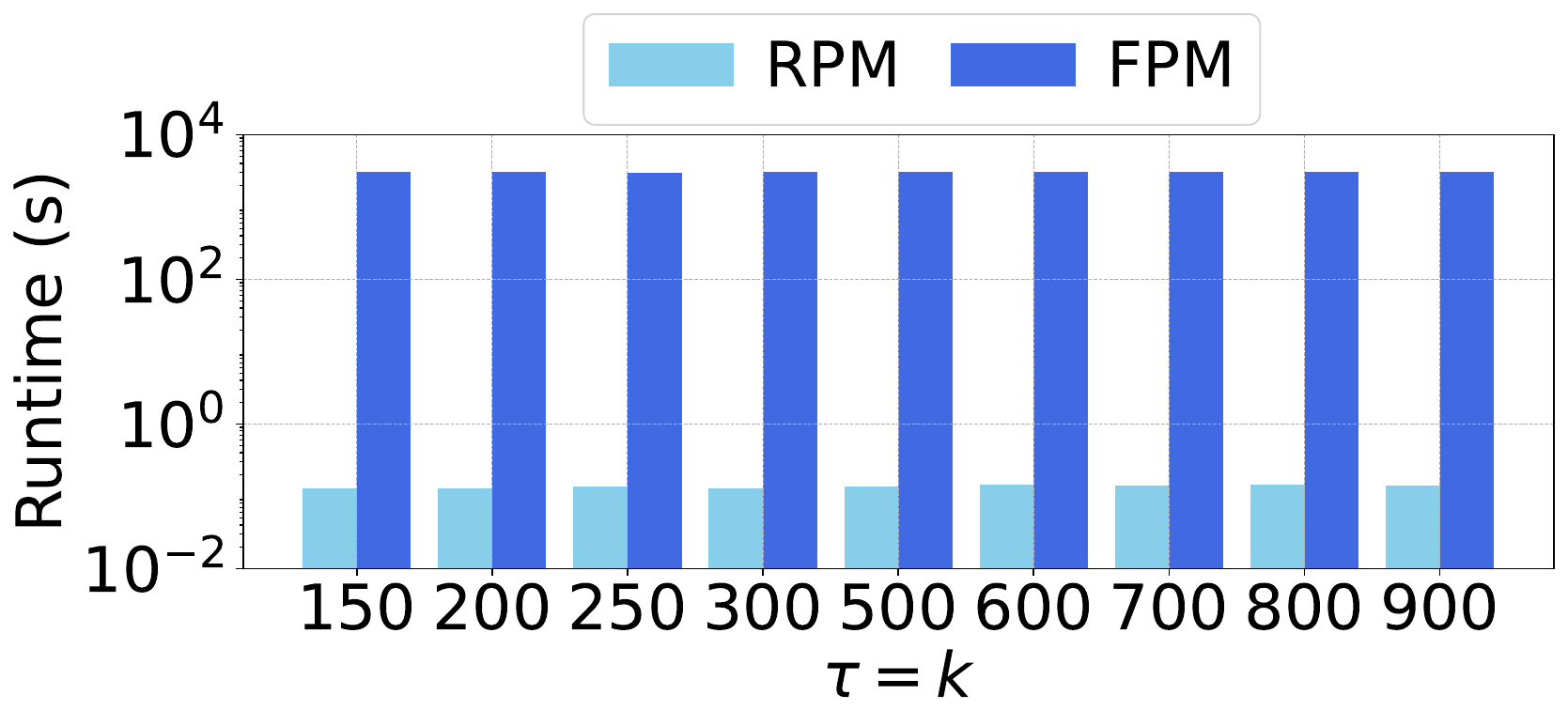}
        \caption{}\label{fig:runtime:casestudy}
    \end{subfigure}
    \hspace{0.05\textwidth}
    \begin{subfigure}{0.45\textwidth}
        \includegraphics[width=\textwidth]{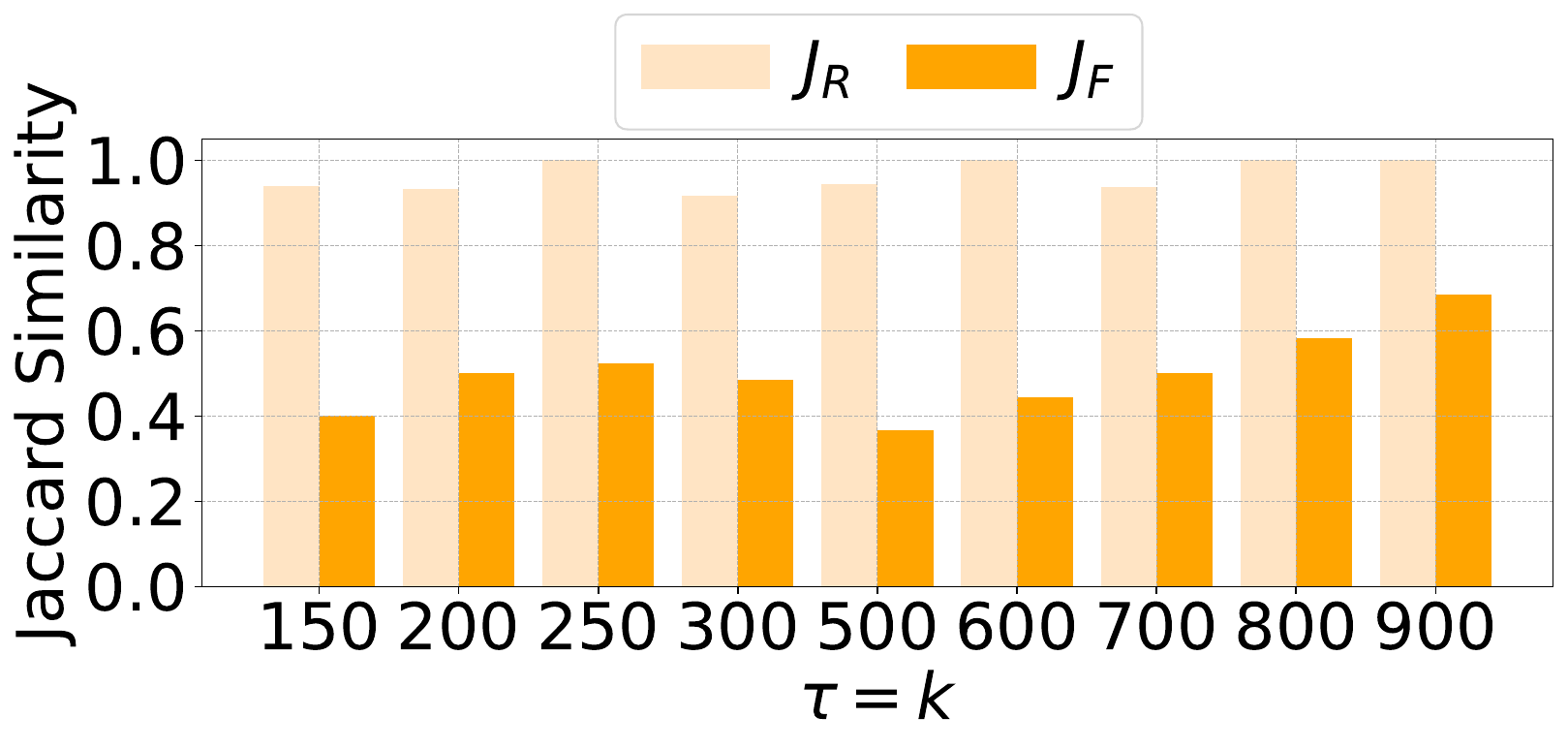}
        \caption{}\label{fig:jaccard:casestudy}
    \end{subfigure}
    \caption{(a) Running time and (b) Jaccard similarity.}
    \label{fig:casestudy}
\end{figure}
\end{example}

{\bf 2.} \emph{Text Analytics:} 
Text data representing webpages~\cite{herz}, tweets~\cite{tao13}, or code in software repositories~\cite{softref} often undergo changes due to updates, leading to slightly different versions of the same text~\cite{soda18}. Such updates may increase or decrease the substring frequencies.
Applying \RPM to a given version of the data ensures that the output substrings remain frequent even after $k$ letter substitutions may occur in later versions. This is useful when outsourcing frequent substrings~\cite{icdeout} or using them in downstream analyses~\cite{dmbook}, as $(\tau,k)$-resilient substrings  remain truthful for longer in these datasets.

{\bf 3.} \emph{Recommender Systems:} String letters are often substituted by adversaries  to force a recommender that is trained on the string to favor their items~\cite{sdm24recom}. Frequent 
 substrings mined from such modified strings may be spurious and others that are indeed frequent may be missed, as letter substitutions can increase or decrease substring frequencies~\cite{tkddsanit,hideandmine}. 
On the other hand, \RPM ensures that spurious substrings are not mined and that the mined ones are indeed frequent, provided that $k$ is set to a sufficiently large value.  

From a technical perspective, \RPM is motivated as follows: 
For any value of $k$, a brute-force algorithm for \RPM considers all possible $\Omega({n \choose k})$ substitutions in $S$, and, for each new instance $S'$ of~$S$, it computes the substrings occurring at least $\tau$ times in~$S'$ and outputs the substrings that are frequent in all $S'$ instances. Clearly, this algorithm is impractical for strings of realistic length~$n$ even for constant $k$. Furthermore, finding frequent patterns with Hamming (or edit) distance at most $k$ is NP-hard~\cite{DBLP:journals/tcs/EvansSW03}, and thus exact algorithms for this problem also have some exponential dependency on $k$~\cite{DBLP:conf/latin/PisantiCMS06}. It is thus interesting to investigate whether the \RPM problem can be solved in polynomial time when $k$ is superconstant.

\paragraph*{Contributions and Paper Organization.}
In this work, we introduce \RPM and make the following contributions:

{\bf 1.} We design a simple exact $\cO(n^3k\log n)$-time and $\cO(n^2)$-space algorithm for \RPM. The main idea is to efficiently find the longest $(\tau,k)$-resilient substring starting at each position of~$S$. These substrings form a compact $\cO(n)$-size representation of 
the $(\tau, k)$-resilient substrings. 
At the heart of our algorithm is a monotonicity property that allows us to apply binary search on the length of the substring starting from a certain position of $S$ and a reduction that allows us to use a dynamic programming  algorithm from~\cite{DBLP:journals/scheduling/ChrobakGLN21} that solves a scheduling problem. We call our algorithm \BASELINE. See \cref{sec:Baseline}.

{\bf 2.} We design an exact $\cO(n \log n)$-time and $\cO(n)$-space  algorithm for \RPM which employs advanced data structures and combinatorial insights. The algorithm outputs the same 
$\cO(n)$-size representation as \BASELINE but is theoretically and practically more time- and space-efficient. Surprisingly, the complexity of our algorithm is independent of $\tau$ and $k$. Its efficiency stems from: (I) the use of an index of $S$ to compactly represent and efficiently explore the space of solutions; and
(II) novel algorithms based on combinatorial insights for efficiently checking whether a \emph{periodic} (see \cref{sec:preliminaries} for a definition) or an \emph{aperiodic} substring is resilient. Interestingly, these checks are performed efficiently based on counting arguments; and not by modifying the string and then inferring the $\tau$-frequent substrings.
We provide two implementations of this algorithm: one called \RPMST that is based on a suffix tree~\cite{DBLP:books/daglib/0020103}; 
and another one called \RPMESA that is based on an enhanced suffix array~\cite{DBLP:journals/jda/AbouelhodaKO04}. See \cref{sec:fast-algo}. For conceptual simplicity, we henceforth assume that all substitutions in~$S$
are done using a letter $\#\notin\Sigma$.
However, in \cref{sec:alphabet}, we prove that the substituting letters can always be chosen from alphabet $\Sigma$, for all alphabets with $|\Sigma|\geq 4$.

{\bf 3.} We present experiments on \emph{seven real datasets} from different domains showing that: (I) Resilient substrings are a fundamentally different notion than frequent substrings, as frequent substrings are often not resilient and, unlike resilient substrings, they do not remain frequent for long in versioned datasets. (II) \RPMST and \RPMESA significantly outperform \BASELINE in terms of running time and space consumption
and, as expected, their running time is not affected by $\tau$ or $k$. For instance, \RPMST is more than \emph{two orders of magnitude} faster and more space-efficient than \BASELINE.
Also, \RPMESA is up to about an order of magnitude faster than \RPMST and it uses about \emph{three times} less space. (III) Clustering based on resilient substrings is effective, as it outperforms a widely-used clustering algorithm~\cite{DBLP:journals/bioinformatics/VingaA03} and a clustering algorithm~\cite{DBLP:journals/tkde/WuZLGZFW23} based on $\tau$-frequent patterns, achieving results very close to the ground truth.   
See \cref{sec:experiments}. 

Related work is discussed in \cref{sec:related}, and \cref{sec:conclusion} concludes the paper. 

\section{Preliminaries}\label{sec:preliminaries}

\paragraph*{Strings.} An \emph{alphabet} $\Sigma$ is a finite set of elements (e.g., integers,  characters, etc.), which we call \emph{letters}. 
A \emph{string} $S=S[0\dd n-1]$ of \emph{length} $|S|=n$ is a sequence of $n$ letters from $\Sigma$, where $S[i]$ denotes the $i$-th letter of the sequence. We refer to each $i\in [0,n)$ as a \emph{position} of~$S$.
We consider throughout, an alphabet $\Sigma=[0,\sigma)$ with $\sigma=n^{\cO(1)}$, which captures virtually any realistic scenario. 
A substring $R$ of $S$ may occur multiple times in $S$.
The set of the starting positions of its \emph{occurrences} in $S$ is denoted by $\occ_S(R)$; we define the \textit{frequency} of~$R$ to be $|\occ_S(R)|$ and may omit the subscript $S$ when it is clear from the context.  
An occurrence of $R$ in~$S$ starting at position $i$ is referred to as a \emph{fragment} of $S$ and is denoted by $R=S[i\dd i+|R|-1]$. Thus, different fragments may correspond to different occurrences of the same substring. 
 A \emph{prefix} of $S$ is a substring of the form $S[0\dd j]$, and a  \emph{suffix} of $S$ is a substring of the form $S[i \dd n-1]$. A prefix $S[0\dd j]$ (resp., suffix $S[i \dd n-1]$) is \emph{proper} if $j<n-1$ (resp., $i>0$). Thus any fragment of $S$ is a prefix of some suffix of $S$.
 The concatenation of two strings $S$ and  $S'$ is denoted by $S\cdot S'$ and the concatenation of $k$ copies of $S$ is denoted by $S^k$.  The \emph{Hamming distance} $d_H(S,S')$ of two strings $S, S'\in \Sigma^n$ of equal length $n$ is $|\{i \in [0, n) : S[i] \neq S'[i]\}|$.

\begin{definition}[Period]
An integer $p>0$ is a \emph{period} of a string $P$ if $P[i] = P[i + p]$ for all $i \in [0,|P|-p)$. The smallest period of $P$ is referred to as \emph{the period} of $P$ and is denoted by $\textsf{per}(P)$. String $P$ is called \emph{periodic} if and only if $\textsf{per}(P)\leq |P|/2$ and \emph{aperiodic} otherwise.\label{def:periodic}
\end{definition}

\begin{example}
For $P=\texttt{abaaabaaabaaaba}$, $\textsf{per}(P)=4$ as $P[i]=P[i+4]$ for all $i\in [0, 15-4)$; $p=8$ is also a period of $P$. Since $\textsf{per}(P)=4\leq |P|/2=15/2$, $P$ is periodic.   
\end{example}

\begin{definition}[Run]\label{def:run}
A fragment $F$ of a string $S$ is a \emph{run} of $S$ if and only if 
$\textsf{per}(F) \leq |F|/2$ and $F$ cannot be extended in either direction with its period remaining the same. 
\end{definition}

\begin{example}
Consider 
$S=\texttt{\underline{abaaabaaabaaaba}b}$. The \emph{underlined} fragment $F = S[0\dd 14]$ is a run, as its period $\textsf{per}(F)=4\leq |F|/2=15/2$ and it cannot be extended to the right by one letter with its period remaining the same, as then it would become equal to $S$ and $\textsf{per}(S)=14\neq 4$. 
\end{example}

\begin{definition}[Lyndon Root]
The \emph{Lyndon root} of a run $F$ is the lexicographically smallest rotation of $F[0 \dd \textsf{per}(F)-1]$.
\end{definition}

\begin{definition}\label{def:canonical}
A run $F$ with Lyndon root $L$ has a unique \emph{canonical representation} $L_1\cdot L^k\cdot L_2$, where $L_1$ is a proper suffix of $L$, $L_2$ is a proper prefix of $L$, and $k \geq 0$ is a nonnegative integer; either of $L_1$ and $L_2$ may be the empty string.
\end{definition}

\begin{example}
The run $F=\texttt{\textcolor{red}{ab}\underline{aaabaaabaaab}\textcolor{red}{a}}$ has Lyndon root $\texttt{aaab}$, as the rotations of $F[0\dd 4-1]$ are $\texttt{abaa}$, $\texttt{baaa}$, $\texttt{aaab}$, and $\texttt{aaba}$, and the lexicographically smallest one is $\texttt{aaab}$. 
The canonical representation of $F$ is $\texttt{ab}\cdot (\texttt{aaab})^{3}\cdot \texttt{a}$.
\end{example}

\paragraph*{Indexes.}
For a set $\mathcal{S}$ of strings, the \emph{trie} $\textsf{T}(\mathcal{S})$ is a rooted tree whose nodes are in one-to-one correspondence with the prefixes of the strings in $\mathcal{S}$~\cite{DBLP:books/daglib/0020103}. 
The edges of $\textsf{T}(\mathcal{S})$ are labeled with letters from $\Sigma$; the prefix corresponding to node $v$ is denoted by $\textsf{str}(v)$ and equals the concatenation of the letters labeling the edges on the root-to-$v$ path.
The node $v$ is called the \emph{locus} of $\textsf{str}(v)$. 

The order on $\Sigma$ induces an order on the edges outgoing from any node of $\textsf{T}(\mathcal{S})$.
A node $v$ is \emph{branching} if it has at least two children and \emph{terminal} if $\textsf{str}(v) \in \mathcal{S}$. 

A \emph{compacted trie} is obtained from $\textsf{T}(\mathcal{S})$ by dissolving all nodes except the root, the branching, and the terminal nodes. The dissolved nodes are called \emph{implicit}; the preserved ones are called \emph{explicit}.
The edges of the compacted trie are labeled by strings. The \emph{string depth} $\textsf{sd}(v)=|\textsf{str}(v)|$ of node $v$ is the length of the string it represents, i.e., the total length of the strings labeling the path from the root to~$v$. 
The compacted trie takes $\cO(|\mathcal{S}|)$ space provided that edge labels are stored as fragments of strings in $\mathcal{S}$ (a fragment $S[i\dd j]$ can be stored in $\cO(1)$ space~\cite{DBLP:conf/focs/Farach97}).  
Given the lexicographic order on~$\mathcal{S}$ along with the lengths of the longest common prefixes between any two consecutive (in this order) elements of~$\mathcal{S}$, one can construct the compacted trie of $\mathcal{S}$ in $\cO(|\mathcal{S}|)$ time~\cite{DBLP:conf/cpm/KasaiLAAP01}.  

The \emph{suffix tree} $\textsf{ST}(S)$ of a string $S$ is the compacted trie of the set of all suffixes of $S$; see \cref{fig:esa1}. 
Each terminal node $v$ of $\textsf{ST}$ is labeled by $n-\textsf{sd}(v)$, i.e., the starting position of the suffix it represents.
$\textsf{ST}(S)$ can be constructed in $\cO(n)$ time for any string $S$ of length~$n$ over $\Sigma$~\cite{DBLP:conf/focs/Farach97}.
After an $\cO(n)$-time preprocessing of $\textsf{ST}(S)$, we can compute the length of the longest common prefix of any two suffixes in $\cO(1)$ time.
We denote the length of the longest common prefix of $S[i\dd n-1]$ and $S[j\dd n-1]$ by $\textsf{lcp}(i,j)$.  
The \emph{suffix array} $\textsf{SA}(S)$ of $S$~\cite{DBLP:journals/siamcomp/ManberM93} is the permutation of $[0,n)$ such that $\textsf{SA}[i]$ is the starting position of the $i$-th lexicographically smallest suffix of $S$; see \cref{fig:esa2} and \cref{fig:esa3}.
It can be constructed in $\cO(n)$ time for any string~$S$ of length $n$ over $\Sigma$~\cite{DBLP:conf/focs/Farach97}.
The $\textsf{LCP}(S)$ array~\cite{DBLP:journals/siamcomp/ManberM93} of $S$ stores the length of the longest common prefix of lexicographically adjacent suffixes. For $j>0$, $\textsf{LCP}[j] = \textsf{lcp}(j-1,j)$  and $\textsf{LCP}[0]=0$; see \cref{fig:esa4}. Given \textsf{SA}$(S)$, $\textsf{LCP}(S)$ can be constructed in $\cO(n)$ time~\cite{DBLP:conf/cpm/KasaiLAAP01}.

\paragraph*{Problem Definition.} Let us now formalize the studied problem.

\begin{definition}[$\tau$-frequent substring]
A substring $X$ of $S$ is $\tau$-\emph{frequent} in $S$ if it occurs at least $\tau$ times in $S$.
\end{definition}

\begin{definition}[$(\tau,k)$-resilient substring]\label{def:resilient}  
A substring $X$ of $S\in \Sigma^n$ is $(\tau,k)$-\emph{resilient} in $S$ if $X$ is $\tau$-frequent in all strings $S'\in \Sigma^n$ that satisfy $d_H(S,S') \leq k$.
\end{definition}

\defDSproblem{$(\tau, k)$-Resilient Pattern Mining (\RPM)  }
{A string $S\in \Sigma^n$ and two integers $\tau, k \in [1,n]$.}
{An array $\textsf{OUTPUT}$ of size $n$ such that $\textsf{OUTPUT}[i]$ is the length of the longest prefix of $S[i\dd n-1]$ that is $(\tau,k)$-resilient.}

Parameter $k$ can be set based on domain knowledge (e.g., in bioinformatics, we can use the substitution rate~\cite{var}). The array \textsf{OUTPUT} is a \emph{compact representation} of the set of $(\tau,k)$-resilient substrings of $S$. It is based on \cref{fct:mono}.

\begin{fact}[Monotonicity]\label{fct:mono}
  If string $Z=X\cdot Y$ is $(\tau,k)$-resilient in $S$, where $X,Y$ are strings, then so are $X$ and $Y$.
\end{fact}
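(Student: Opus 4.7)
The plan is to unfold the definition of $(\tau,k)$-resilience and verify that the required frequency lower bound transfers from $Z$ to each of $X$ and $Y$ inside every admissible perturbation $S'$ of $S$. Concretely, I would fix an arbitrary $S'\in\Sigma^n$ with $d_H(S,S')\leq k$ and show, from the hypothesis that $Z$ occurs at least $\tau$ times in $S'$, that the same lower bound holds for the number of occurrences of $X$ and of $Y$ in $S'$. Since $S'$ was arbitrary, Definition~\ref{def:resilient} would then immediately give that $X$ and $Y$ are $(\tau,k)$-resilient.

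The first step is to exhibit an injection $\occ_{S'}(Z)\hookrightarrow \occ_{S'}(X)$: if $Z=X\cdot Y$ occurs at position $i$ of $S'$, then in particular $S'[i\dd i+|X|-1]=X$, so $i\in \occ_{S'}(X)$. The map $i\mapsto i$ is trivially injective, hence $|\occ_{S'}(X)|\geq |\occ_{S'}(Z)|\geq \tau$. Symmetrically, for $Y$ I would use the shift map $i\mapsto i+|X|$: the same occurrence of $Z$ at position $i$ yields $S'[i+|X|\dd i+|X|+|Y|-1]=Y$, so $i+|X|\in \occ_{S'}(Y)$, and this map is injective because a translation by the fixed amount $|X|$ is a bijection on $[0,n)$. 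Hence $|\occ_{S'}(Y)|\geq |\occ_{S'}(Z)|\geq \tau$.

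Combining the two inequalities with the fact that $S'$ ranges over all strings at Hamming distance at most $k$ from $S$ closes the argument. There is really no main obstacle here; the statement is a direct consequence of the set-theoretic fact that any occurrence of a concatenation is witnessed by an occurrence of each factor at a fixed offset, so the subtlety is limited to checking that the offset map is injective, which it is because it is a translation.
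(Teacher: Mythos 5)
Your proof is correct and uses the same core observation as the paper's: every occurrence of $Z=X\cdot Y$ in a perturbed string $S'$ witnesses an occurrence of $X$ (at the same position) and of $Y$ (at the shifted position), so the frequency bound transfers. The only cosmetic difference is that you argue directly and spell out the injection for $Y$ explicitly, whereas the paper argues by contradiction and declares the $Y$ case analogous; the mathematical content is identical.
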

\begin{proof}
Assume, toward a contradiction, that $Z$ is $(\tau,k$)-resilient and either $X$ or $Y$ is not $(\tau,k)$-resilient. Say $X$ is not $(\tau,k)$-resilient (the proof for $Y$ is analogous). Then there exists a string $S'$ that is obtained from $S$ after substitution $k$ letters with $\#$ such that $X$ is not $\tau$-frequent in $S'$. However, $Z$ is by definition $\tau$-resilient in $S'$ and each occurrence of $Z$ in $S'$ implies at least one occurrence of $X$ in $S'$ because~$X$ is a prefix of $Z$, this is a contradiction.
\end{proof}

In \cref{app:output}, we explain how to explicitly construct the set of $(\tau,k)$-resilient substrings of $S$ using the $\textsf{OUTPUT}$ array in $\cO(n)$ time plus time linear in the number of these substrings.

\begin{figure} 
\begin{minipage}[b]{.44\columnwidth}
\hspace{-3mm}
\scalebox{0.9}{
\subfloat[]{\label{fig:esa1}
 \begin{tikzpicture}[- , level distance=1.5cm]
\filldraw[black] (0,0) circle (2pt); 
\filldraw[black] (-2,-1) circle (2pt);
\filldraw[black] (1.59,-1) circle (2pt);
\filldraw[black] (-1.67,-2) circle (2pt);
\filldraw[black] (-0.67,-1) circle (2pt);
\filldraw[black] (-0.47,-2) circle (2pt);
\filldraw[black] (-1.2,-2.8) circle (2pt);
\filldraw[black] (0.55,-2.2) circle (2pt);
\filldraw[black] (-0.35,-3) circle (2pt);
\filldraw[black] (1.29,-2.2) circle (2pt);
\filldraw[black] (2.29,-2.2) circle (2pt);
  \draw (0,0)   -- (-2,-1) node[midway,sloped,above] {$\texttt{\$}$};
  \draw (0,0)   -- (-0.67,-1) node[midway,sloped,above] {$\texttt{a}$};
  \draw (0,0)   -- (0.55,-2.2) node[midway,sloped,above] {$\texttt{banana\$}$};
  \draw (0,0) -- (1.59,-1)   node[midway,sloped,above] {$\texttt{na}$};
    \draw (-1.67,-2)   -- (-0.67,-1) node[midway,sloped,above] {$\texttt{\$}$};
        \draw (-0.47,-2)   -- (-0.67,-1) node[midway,sloped,above] {$\texttt{na}$};
           \draw (-0.47,-2)   -- (-1.2,-2.8) node[midway,sloped,above] {$\texttt{\$}$};
              \draw (-0.47,-2)   -- (-0.35,-3) node[midway,sloped,above] {$\texttt{na\$}$};
    \draw (1.59,-1) -- (1.29,-2.2)   node[midway,sloped,above] {$\texttt{\$}$};
    \draw (1.59,-1) -- (2.29,-2.2)   node[midway,sloped,above] {$\texttt{na\$}$};

\node[] at (-1.65,-2.3) {$l_5$};
\node[draw, rectangle, fill=white, inner sep=1pt] at (-1.96, -2) {5};
\node[] at (-2,-1.3) {$l_6$};
\node[draw, rectangle, fill=white, inner sep=1pt] at (-2.3, -1) {6};
\node[] at (-1.2,-3.1) {$l_3$};
\node[draw, rectangle, fill=white, inner sep=1pt] at (-1.5, -2.8) {3};
\node[] at (-0.1,-3.1) {$l_1$};
\node[draw, rectangle, fill=white, inner sep=1pt] at (-0.63, -3) {1};
\node[] at (0.6,-2.5) {$l_0$};
\node[draw, rectangle, fill=white, inner sep=1pt] at (0.3, -2.2) {0};
\node[] at (1.3,-2.5) {$l_4$};
\node[draw, rectangle, fill=white, inner sep=1pt] at (1.02, -2.2) {4};
\node[] at (2.3,-2.5) {$l_2$};
\node[draw, rectangle, fill=white, inner sep=1pt] at (2, -2.2) {2};

\node[draw, circle, fill=white, inner sep=0.2pt] at (-0.36, -0.95) {3};

\node[] at (-0.7,-2) {$v$};
\node[draw, circle, fill=white, inner sep=0.2pt] at (-0.16, -1.95) {2};

\node[draw, circle, fill=white, inner sep=0.2pt] at (1.89, -0.95) {2};

\end{tikzpicture}  
}}
\end{minipage}
\hspace{+1.6mm}
\begin{minipage}[b]{.14\columnwidth}
\scalebox{0.93}{
\subfloat[]{\label{fig:esa2}
\setlength{\fboxsep}{1.5pt}
\definecolor{aqua}{rgb}{0.0, 1.0, 1.0}
\begin{tabular}{l|l|}
\cline{2-2}
\textcolor{gray}{0} & 6 \\
\textcolor{gray}{1} & 5 \\
\textcolor{gray}{2} & 3 \\
\textcolor{gray}{3} & 1 \\
\textcolor{gray}{4} & 0 \\
\textcolor{gray}{5} & 4 \\
\textcolor{gray}{6} & 2 \\
\cline{2-2}
\end{tabular}
}}
\end{minipage}
\begin{minipage}[b]{.14\columnwidth}
\centering
\scalebox{0.93}{
\subfloat[]{\label{fig:esa3}
\centering
\setlength{\fboxsep}{0.2pt}
\definecolor{aqua}{rgb}{0.0, 1.0, 1.0}
\begin{tabular}{l}
\hspace{-3mm} \colorbox{white}{\texttt{\$}}\\
\hspace{-3mm}  \colorbox{aqua}{\texttt{a}}\texttt{\$} \\
\hspace{-3mm}  \colorbox{aqua}{\texttt{a}}\colorbox{yellow}{\texttt{na}}\texttt{\$} \\
\hspace{-3mm} \colorbox{yellow}{\texttt{ana}}\texttt{na\$} \\
\hspace{-3mm} \colorbox{white}{\texttt{banana\$}}\\
\hspace{-3mm} \colorbox{green}{\texttt{na}}\texttt{\$}\\
\hspace{-3mm} \colorbox{green}{\texttt{na}}\texttt{na\$}\\
\end{tabular}
}}
\end{minipage}
\begin{minipage}[b]{.14\columnwidth}
\scalebox{0.93}{
\subfloat[]{\label{fig:esa4}
\setlength{\fboxsep}{1.5pt}
\definecolor{aqua}{rgb}{0.0, 1.0, 1.0}
\begin{tabular}{l|l|}
\cline{2-2}
\textcolor{gray}{0} & 0 \\
\textcolor{gray}{1} & 0 \\
\textcolor{gray}{2} & \colorbox{aqua}{1} \\
\textcolor{gray}{3} & \colorbox{yellow}{3} \\
\textcolor{gray}{4} & 0 \\
\textcolor{gray}{5} & 0 \\
\textcolor{gray}{6} & \colorbox{green}{2} \\
\cline{2-2}
\end{tabular}
}
}
\end{minipage}
\caption{$S=\texttt{banana\$}$: (a) \textsf{ST}$(S)$; each branching non-root node is labeled by its frequency (circle) and each terminal node $l_i$ by its associated starting position $i$ (square). Node $v$ is the locus of $\textsf{str}(v)=\texttt{ana}$ and $\textsf{sd}(v)=|\textsf{str}(v)|=3$. (b) \textsf{SA}$(S)$. (c) The lexicographically sorted suffixes of $S$. (d) \textsf{LCP}$(S)$.} \label{fig:esa}
\end{figure}
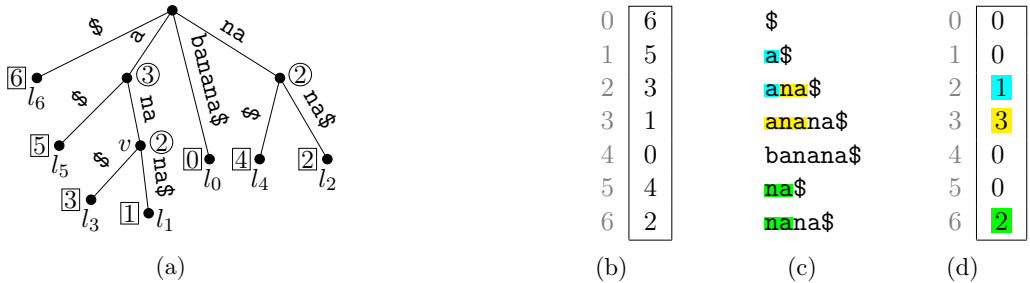

\section{A DP-based Polynomial Algorithm for \RPM} \label{sec:Baseline}

We design \BASELINE, a dynamic programming (DP) based polynomial-time algorithm for  \textsc{$(\tau,k)$-Resilient Pattern Mining}, for any integer $\tau,k\in[1,n]$. \BASELINE is founded on the following result.

\begin{theorem}[\hspace{0.2mm}\cite{DBLP:journals/scheduling/ChrobakGLN21}, Section 3.1]\label{the:intervals}
Given a family $\mathcal{F}$ of $f$ arbitrary intervals on the real line and a positive integer $k\leq f$, we can find a set of at most $k$ points on the line that intersect a maximum number of intervals from $\mathcal{F}$ in $\cO(kf^2)$ time and $\cO(f^2)$ space.
\end{theorem}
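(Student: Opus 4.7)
The plan is to devise a dynamic programming algorithm after a geometric observation that restricts candidate point positions to $O(f)$ values.

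First I would argue that in any optimal solution, every chosen point may be assumed to coincide with the right endpoint of some interval: given a point $p$ stabbing a set $I_p$ of intervals, shifting $p$ rightward to $\min\{r : [\ell,r] \in I_p\}$ preserves $I_p$ and places $p$ at a right endpoint. Hence the candidate positions reduce to the sorted distinct right endpoints $p_1 < p_2 < \cdots < p_m$ with $m \le f$.

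Next I would order the $k' \le k$ chosen points as $q_1 < \cdots < q_{k'}$ and \emph{charge} each stabbed interval to the leftmost chosen point that stabs it. If the $\ell$-th point equals $p_j$ and the $(\ell-1)$-th equals $p_i$ (with the convention $p_0 = -\infty$), then the intervals newly charged to $p_j$ are precisely those $[a,b]\in\mathcal{F}$ with $p_i < a \le p_j \le b$. Let $N(i,j)$ denote this count; the full table $N$ can be precomputed in $O(f^2)$ time and space by sorting endpoints and running a two-parameter sweep, for instance by fixing $i$ and scanning $j$ while maintaining a counter of currently-open intervals whose left endpoint exceeds $p_i$.

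I would then define $\mathrm{dp}[\ell][j]$ as the maximum total charged count using exactly $\ell$ points with the last placed at $p_j$, giving the recurrence $\mathrm{dp}[\ell][j] = \max_{0 \le i < j}\bigl(\mathrm{dp}[\ell-1][i] + N(i,j)\bigr)$ with base case $\mathrm{dp}[0][0] = 0$, and return $\max_{\ell \le k,\ j \le m} \mathrm{dp}[\ell][j]$. There are $O(kf)$ states, each updated in $O(f)$ time, yielding the claimed $O(kf^2)$ time and $O(f^2)$ space bounds.

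The main obstacle, I expect, is justifying that the charging scheme is both lossless and non-overcounting: every stabbed interval has a unique leftmost stabber (so the sum of $N(i_{\ell-1},i_\ell)$ across consecutive chosen points equals the total number of stabbed intervals), and conversely any sequence of candidates the DP explores corresponds to a valid feasible solution whose stabbed count equals the accumulated sum. Once this invariant is in hand and combined with the right-endpoint reduction, correctness and the complexity bounds follow directly.
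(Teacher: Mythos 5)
Your proposal is correct and matches the approach the paper sketches in \cref{app:DP}: reduce candidate points to (right) endpoints of intervals, precompute a table $W[a][b]$ (your $N(i,j)$) counting intervals stabbed by the $j$-th candidate but not the $i$-th, and run a DP indexed by (number of points used, last candidate chosen) with the recurrence $T[b][h]=\max_{a<b}\{T[a][h-1]+W[a][b]\}$. The paper leaves the correctness argument to the cited reference; your right-endpoint reduction and leftmost-stabber charging invariant fill in exactly the justification that is elided.
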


By \cref{fct:mono}, we can binary search on the length of the substring starting at position $i$, for all $i\in[0,n)$, to find the longest substring $S[i\dd j]$ that is $(\tau,k)$-resilient.
Thus, it suffices to check $\cO(n \log n)$ substrings of $S$.

Let us explain how we check whether a fixed substring $Z$ of~$S$ is $(\tau,k)$-resilient.  
We first find  $\occ_S(Z)$, the set of occurrences of $Z$ in $S$, in $\cO(n)$ time using the KMP algorithm~\cite{DBLP:journals/siamcomp/KnuthMP77}.
If $|\occ_S(Z)|$ is at least~$\tau$ we apply the DP algorithm from~\cite{DBLP:journals/scheduling/ChrobakGLN21}; we also describe it in \cref{app:DP}. 
This DP algorithm underlies \cref{the:intervals}; it finds a set of at most $k$ points on the line that intersect a maximum number $M$ of intervals in a given family of $f$ intervals, in $\cO(kf^2)$ time and $\cO(f^2)$ space.   
Specifically, for each occurrence 
$i\in \occ_S(Z)$, the interval $[i,i+|Z|-1]$ 
represents 
the fragment $S[i\dd i+|Z|-1]$ 
and the points that intersect the intervals represent the letters of the fragments that, if substituted by $\#\notin\Sigma$, change the frequency of 
$Z$ as much as possible. 
Hence, $f = |\occ_S(Z)|$ and $Z$ is \emph{not} $(\tau,k)$-resilient if and only if $f - M < \tau$ (in which case $Z$ would cease being $\tau$-frequent after the letter substitutions). 

Since we check $\cO(n\log n)$ substrings, and each check takes $\cO(kn^2)$ time and $\cO(n^2)$ space, as $f= |\occ_S(Z)|\leq n$, 
we obtain an $\cO(n^3 k\log n)$-time and $\cO(n^2)$-space algorithm for \RPM. 
The algorithm returns the $\textsf{OUTPUT}$ array as required.

\begin{proposition}
    For any string $S\in \Sigma^n$, with $|\Sigma|\geq 4$, and integers $\tau, k \in [1,n]$, the \textsc{$(\tau,k)$-Resilient Pattern Mining} problem can be solved in $\cO(n^3 k \log n)$ time and $\cO(n^2)$ space.
\end{proposition}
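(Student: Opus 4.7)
The plan is to construct the array $\textsf{OUTPUT}$ entry by entry, exploiting \cref{fct:mono}. Since a prefix of a $(\tau,k)$-resilient string is itself $(\tau,k)$-resilient, the set of lengths $\ell \in [1, n-i]$ for which $S[i\dd i+\ell-1]$ is $(\tau,k)$-resilient forms a (possibly empty) prefix of $[1, n-i]$. Hence I can locate $\textsf{OUTPUT}[i]$ by binary search on $\ell$, which reduces the problem to the decision version: given a specific fragment $Z=S[i\dd i+\ell-1]$, is $Z$ a $(\tau,k)$-resilient substring of $S$? Doing this for every $i \in [0,n)$ costs $\cO(n\log n)$ resilience checks in total.

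For the decision procedure on a fixed $Z$, I would first compute $\occ_S(Z)$ in $\cO(n)$ time using KMP. If $|\occ_S(Z)| < \tau$ then $Z$ is already not $\tau$-frequent in $S$, so it is not resilient. Otherwise, the adversary's task is to pick $k$ positions of $S$, substitute them with the fresh symbol $\#\notin\Sigma$, and minimize the remaining frequency of $Z$. A substitution at position $p$ destroys exactly those occurrences $i\in\occ_S(Z)$ with $p\in[i, i+|Z|-1]$, so the adversary is choosing $k$ points on the line that stab a maximum number of intervals from the family $\mathcal{F} = \{[i, i+|Z|-1] : i \in \occ_S(Z)\}$. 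This is exactly the problem solved by \cref{the:intervals}, which returns the maximum number $M$ of stabbed intervals in $\cO(kf^2)$ time and $\cO(f^2)$ space, where $f=|\occ_S(Z)|\leq n$. Writing $f$ for $|\occ_S(Z)|$, the substring $Z$ is $(\tau,k)$-resilient in $S$ if and only if $f - M \geq \tau$, because any substitution pattern of at most $k$ positions leaves at least $f-M$ occurrences of $Z$ intact, and this bound is achieved by the optimal stabbing set.

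Putting the pieces together, each decision call costs $\cO(n) + \cO(kn^2) = \cO(kn^2)$ time and uses $\cO(n^2)$ working space that can be reused across calls. With $\cO(n\log n)$ such calls over all starting positions and binary-search iterations, the total running time is $\cO(n^3 k\log n)$ and the total space is $\cO(n^2)$, matching the claimed bounds. The main conceptual step that needs care is the reduction to interval stabbing together with the tacit assumption that substituting with a fresh symbol $\#$ is without loss of generality; this assumption is what forces the alphabet restriction $|\Sigma|\geq 4$ and is discharged by the alphabet-reduction result mentioned in \cref{sec:alphabet}. Beyond that, the argument is mechanical: instantiate \cref{the:intervals} inside the binary search loop, compare $f-M$ with $\tau$, and record the largest resilient length in $\textsf{OUTPUT}[i]$.
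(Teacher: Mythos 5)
Your proposal is correct and takes essentially the same route as the paper: binary search on length at each starting position via \cref{fct:mono}, KMP to enumerate $\occ_S(Z)$, the reduction of resilience to $k$-point interval stabbing handled by \cref{the:intervals}, and the resulting $\cO(n^3 k\log n)$-time, $\cO(n^2)$-space accounting. You also correctly flag the $\#$-substitution convention as the source of the $|\Sigma|\geq 4$ hypothesis, which the paper likewise defers to \cref{sec:alphabet}.
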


\section{An Efficient Algorithm for \RPM}\label{sec:fast-algo}

\paragraph*{High-Level Idea.} The main idea is to \emph{implicitly} consider all substrings of $S$. Each substring is either \emph{aperiodic} or \emph{periodic}: (I) The occurrences of an aperiodic substring $R$ of $S$ do not overlap a lot, so we can use a simple greedy algorithm to count the maximal number of occurrences of $R$ that may be lost after $k$ substitutions in $S$.
If the frequency of $R$ in $S$ minus this number is at least $\tau$, then~$R$ is $(\tau,k)$-resilient; otherwise, it is not. (II) The occurrences of a periodic substring~$P$ of $S$ can overlap a lot but have a predictable structure: they form compact clusters of occurrences of~$P$ 
and the clusters do not overlap much. The technical difficulty is to retrieve these clusters efficiently, in time proportional to their number and \emph{not} to the total number of occurrences of $P$. After retrieving these clusters, we use a more involved version of the greedy algorithm to check whether $P$ is $(\tau,k)$-resilient in $S$ or not using counting arguments similarly to the aperiodic case. 

The total number of substrings of $S$ is $\Theta(n^2)$.
By \cref{fct:mono}, it suffices to check $\cO(n \log n)$ substrings: $\cO(\log n)$ for each starting position of $S$ in a binary-search fashion. This way we obtain a compact representation of the set of $(\tau,k)$-resilient substrings of $S$ in $\cO(n \log n)$ time using $\cO(n)$ space.

\paragraph*{Roadmap.} The main algorithm is described in \cref{sec:main}. 
It relies on two combinatorial lemmas (\cref{sec:lemmas}) and three functions (\cref{sec:functions}). 
We present the algorithm using a suffix tree~\cite{DBLP:conf/focs/Farach97}, for simplicity. It is well-known that the suffix tree functionality (e.g., bottom-up traversal) can be simulated using an enhanced suffix array~\cite{DBLP:journals/jda/AbouelhodaKO04} (see \cref{app:practical}), 
resulting in time and space improvements in practice.

\subsection{Combinatorial Lemmas}\label{sec:lemmas}

We give two crucial lemmas 
for our algorithm's efficiency; 
they are folklore but we 
give their proofs for completeness. 
\cref{lem:aperiodic} is useful for handling the aperiodic substrings of $S$. Intuitively, 
it  guarantees that any position of $S$ is contained in at most $2$ occurrences of an aperiodic substring $R$; see also \cref{fig:aperiodic}.

\begin{restatable}{lemma}{aper}\label{lem:aperiodic}
Let $S$ be a string and $R$ be an aperiodic substring of $S$.
For any three distinct occurrences of $R$ in $S$ at positions $x$, $y$, and $z$ with $x<y<z$, we have $z>x+|R|$.
\end{restatable}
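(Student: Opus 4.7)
The plan is to prove the contrapositive: assuming three distinct occurrences of $R$ at positions $x<y<z$ with $z\leq x+|R|$, I will derive that $R$ must be periodic, contradicting aperiodicity. Setting $p_1 := y-x$ and $p_2 := z-y$, we have $p_1,p_2 \geq 1$ and $p_1+p_2 = z-x \leq |R|$, so in particular $p_1,p_2 < |R|$; each pair of these three occurrences therefore overlaps inside $S$.

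The first step is the standard overlap-to-period observation: whenever $R$ occurs at positions $a$ and $a+p$ of $S$ with $p<|R|$, for each $i\in[p,|R|)$ we have $R[i]=S[a+i]=S[(a+p)+(i-p)]=R[i-p]$, so $p$ is a period of $R$. Applied to the pairs $(x,y)$ and $(y,z)$, this shows that both $p_1$ and $p_2$ are periods of $R$.

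The second step is to invoke the Fine--Wilf periodicity lemma. Since $p_1+p_2\leq |R|\leq |R|+\gcd(p_1,p_2)$, it follows that $\gcd(p_1,p_2)$ is also a period of $R$. Hence $\per(R)\leq \gcd(p_1,p_2)\leq \min(p_1,p_2)\leq (p_1+p_2)/2\leq |R|/2$, so by \cref{def:periodic} $R$ is periodic, which contradicts the hypothesis that $R$ is aperiodic.

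I do not foresee any serious obstacle: this is a short combinatorial argument combining the overlap-period correspondence with Fine--Wilf. The only minor point of care is to check the overlap inequalities strictly so that the induced shifts are genuine periods (i.e. $p_1,p_2<|R|$) and that the Fine--Wilf precondition $p_1+p_2\leq |R|+\gcd(p_1,p_2)$ is met, both of which are immediate from $p_1+p_2\leq |R|$.
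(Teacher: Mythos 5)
Your proof is correct, and its core idea — two overlapping occurrences of $R$ at distance $p<|R|$ force $p$ to be a period of $R$ — is exactly the paper's starting point. The difference is that you take a detour through the Fine--Wilf lemma, which is unnecessary: from $p_1+p_2\leq|R|$ you already have $\min(p_1,p_2)\leq(p_1+p_2)/2\leq|R|/2$, so $\per(R)\leq\min(p_1,p_2)\leq|R|/2$ and $R$ is periodic, with no need to combine the two periods into $\gcd(p_1,p_2)$. The paper uses this more elementary route, phrased slightly differently: it observes that any two occurrences of an aperiodic $R$ must start more than $|R|/2$ apart (otherwise $\per(R)\leq|R|/2$), and then simply adds the two gaps: $z>y+|R|/2>x+|R|$. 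Both arguments are valid; the paper's avoids Fine--Wilf and is a bit shorter, while yours is a correct if slightly heavier instantiation of the same reasoning.
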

\begin{proof}
If two occurrences of $R$ in $S$ started $p \leq |R|/2$ positions apart, then $\textsf{per}(R) \leq p \leq |R|/2$ and hence $R$ would be periodic, a contradiction. 
Thus, we have $z > y + |R|/2 > x +|R|$.
\end{proof}

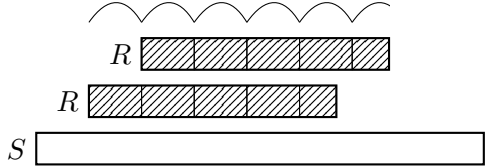
\begin{figure}[ht]
 \begin{center}
 \begin{tikzpicture}
      \begin{scope}[yshift=0cm, xshift=0cm, scale=.7]
				\draw[pattern = north east lines] (0,0.3) rectangle (1,0.9);
				\draw[pattern = north east lines] (1,1.2) rectangle (2,1.8);
				\draw[pattern = north east lines] (1,0.3) rectangle (2,0.9);
				\draw[pattern = north east lines] (2,1.2) rectangle (3,1.8);
				\draw[pattern = north east lines] (2,0.3) rectangle (3,0.9);
				\draw[pattern = north east lines] (3,1.2) rectangle (4,1.8);
				\draw[pattern = north east lines] (3,0.3) rectangle (4,0.9);
				\draw[pattern = north east lines] (4,1.2) rectangle (5,1.8);
				\draw[pattern = north east lines] (4,0.3) rectangle (4.7,0.9);
				\draw[pattern = north east lines] (5,1.2) rectangle (5.7,1.8);
            \draw[thick] (-1,0) rectangle (7.5,-0.6);
            \draw (-1,-0.3) node[left] {$S$};
            \draw[thick] (0,0.3) rectangle (4.7,0.9);
            \draw (0,0.6) node[left] {$R$};
                \draw[thick] (1,1.2) rectangle (5.7,1.8);
                \draw (1,1.5) node[left] {$R$};
                \clip (0,1) rectangle (5.7,2.5);
                \foreach \x in {0,1,2,3,4,5}{
                    \draw[xshift=\x cm,yshift=1.1cm] (0,1) .. controls (0.35,1.5) and (0.7,1.5) .. (1,1);
                }
    \end{scope}   
 \end{tikzpicture}
 \end{center}
 \caption{The periodic structure of $R$ in the case when it has two occurrences in $S$ that start at most $|R|/2$ positions apart.}
 \label{fig:aperiodic}
\end{figure}

\cref{lem:periodic} is useful for handling the periodic substrings of $S$.
Intuitively, it guarantees that any position of $S$ is contained in at most two runs with the same period.

\begin{restatable}{lemma}{periodiclem}\label{lem:periodic}
Let $S$ be a string.
Suppose that for positions $i<j<k$ there are runs $S[i\dd i']$, $S[j\dd j']$, and $S[k\dd k']$ in $S$ that have the same period.
We then have $k>i'$.
\end{restatable}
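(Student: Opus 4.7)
The plan is to derive $k > i'$ by combining two pairwise spacing bounds between consecutive runs that share period $p$, with the fact that the middle run has length at least $2p$.

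First, I would establish the following pairwise claim: if $S[a\dd a']$ and $S[b\dd b']$ are two distinct runs of $S$ with the same period $p$ and $a<b$, then $b \geq a' - p + 2$; equivalently, their overlap has length strictly less than $p$. The key observation is a gluing argument. Suppose, toward a contradiction, that the overlap has length at least $p$, i.e., $b \leq a' - p + 1$. Note first that $b' > a'$, for otherwise $S[b\dd b']$ would lie strictly inside $S[a\dd a']$ and could be extended to the right while retaining period $p$, contradicting its maximality as a run. Now, the periodicity of $S[a\dd a']$ gives $S[x]=S[x+p]$ for every $x \in [a, a'-p]$, and the periodicity of $S[b\dd b']$ gives $S[x]=S[x+p]$ for every $x \in [b, b'-p]$. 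Since $b \leq a'-p+1$, the two index ranges together cover $[a, b'-p]$ with no gap, so $S[a\dd b']$ has period $p$. But then $S[a\dd a']$ would not be maximal, contradicting its being a run.

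Second, I would invoke \cref{def:run}: any run $F$ satisfies $\per(F) \leq |F|/2$, hence $|F| \geq 2\per(F)$. Applied to the middle run, this gives $j' - j + 1 \geq 2p$, i.e., $j' \geq j + 2p - 1$.

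Finally, I would chain the bounds. Applying the pairwise claim to $(S[i\dd i'], S[j\dd j'])$ yields $j \geq i' - p + 2$, and applying it to $(S[j\dd j'], S[k\dd k'])$ yields $k \geq j' - p + 2$. Plugging in $j' \geq j + 2p - 1$ gives $k \geq j + p + 1$, and plugging in $j \geq i' - p + 2$ gives $k \geq i' + 3 > i'$, as desired. The main obstacle is the gluing step in the pairwise claim: one must verify both that the two index ranges on which periodicity is supplied actually partition $[a, b'-p]$ (using the overlap hypothesis $b \leq a'-p+1$), and that the degenerate case $b' \leq a'$ is ruled out by the maximality built into the definition of a run. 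Once this combinatorial fact is in hand, the rest is an arithmetic chain.
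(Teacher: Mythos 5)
Your proof is correct and takes essentially the same route as the paper's: establish the pairwise claim that two runs sharing period $p$ overlap by strictly fewer than $p$ positions, combine it with the bound $|F|\geq 2\per(F)$ applied to the middle run, and chain the resulting inequalities. You spell out the gluing argument behind the overlap bound, which the paper states as a one-liner; the only minor imprecision is that when $b'=a'$ the run $S[b\dd b']$ would need to be extended to the \emph{left} (not the right) to contradict maximality, but the conclusion is unaffected.
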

\begin{proof}
Let the common period of the considered runs be $p$.
First, note that $j' \geq j+2p-1$ since, for any run $R$, we have $|R| \geq 2\textsf{per}(R)$.
Now, two runs with the same period $p$ cannot overlap by $x \geq p$ positions; if this were the case, the portion of $S$ spanned by these runs would have period $p$ and this would contradict the maximality of at least one of the considered runs.
Due to the overlap constraint, we have $k > j' - p+1$ and $j > i' - p + 1$.
Putting everything together, we obtain
$k > j' - p +1 \geq (j+2p-1) - p+1 \geq (i'- p + 1) +p = i'$.
\end{proof}

\subsection{Main Algorithm}\label{sec:main}

The main algorithm has two phases and returns \textsf{OUTPUT}, a compact representation of the $(\tau,k)$-resilient substrings. 

\paragraph*{Preprocessing.} We construct the suffix tree $\textsf{ST}(S)$ of $S$ in $\cO(n)$ time and space. By construction (inspect \cref{fig:esa1} and \cref{fig:esa2}), the leaves of \textsf{ST}$(S)$ of $S$, read from left to right (e.g., using a post-order traversal), precisely form the suffix array \textsf{SA}$(S)$ of $S$.
Using such a traversal, for each branching node $u$ in \textsf{ST}$(S)$, we record the \textsf{SA} indices of the leftmost and rightmost leaves in the subtree rooted at $u$, denoted by $l_u$ and~$r_u$, respectively.
Then, $u$ stores interval $[l_u,r_u]$
where, $\textsf{str}(u) =S[\textsf{SA}[l_u]\dd \textsf{SA}[l_u]+\textsf{sd}(u) -1]=\ldots=S[\textsf{SA}[r_u]\dd \textsf{SA}[r_u]+\textsf{sd}(u) -1]$, and $r_u-l_u+1$ is the frequency of $\textsf{str}(u)$.
Lastly, we associate each node of \textsf{ST}$(S)$ with a flag initialized  to \textsf{FAIL}. This preprosessing takes linear time in the size of \textsf{ST}$(S)$, i.e., $\cO(n)$ time.

\paragraph*{Phase 1.} We perform a bottom-up (DFS) traversal of \textsf{ST}$(S)$ to find a \emph{cut} of \textsf{ST}$(S)$ (i.e., a set of nodes that partitions the suffix tree into disjoint subtrees) where the nodes that form the cut and their ancestors are exactly the nodes that represent $(\tau,k)$-resilient substrings.
When some node $u$ is explored during the traversal, we check its flag.
If the flag value is \textsf{FAIL}, then we read the \textsf{SA} interval $[l_u,r_u]$ stored at $u$, check whether $r_u-l_u+1\geq \tau + k$, and, if so, apply the \textsc{IsPeriodic} function to $\textsf{str}(u)$, which determines whether $\textsf{str}(u)$ is periodic or not. If \textsc{IsPeriodic} answers \textsf{YES}, we apply the \textsc{PeriodicResilient} function to $\textsf{str}(u)$; otherwise, we apply the \textsc{AperiodicResilient} function.
If the applied function answers \textsf{YES}, $\textsf{str}(u)$ is $(\tau,k)$-resilient, and the flag of $u$ is changed to \textsf{SUCCESS}; otherwise, $\textsf{str}(u)$ is not $(\tau,k)$-resilient and the flag of $u$ is kept unchanged as \textsf{FAIL}. Whenever a node $u$ has its flag changed  to \textsf{SUCCESS}, all the ancestors of $u$ should also have their flags changed to \textsf{SUCCESS}; we thus propagate the \textsf{SUCCESS} flag upward before proceeding with our traversal. The end of the traversal gives a \emph{preliminary cut}. The upper part including the cut is composed  of \textsf{SUCCESS} nodes and the lower part is composed of \textsf{FAIL} nodes.

\paragraph*{Phase 2.} 
We use \cref{fct:mono} to refine the preliminary cut from Phase~1. We try to extend the $(\tau,k)$-resilient substrings we found in Phase~1 as much as possible to compute the final output.  For each edge $(u,v)$ in \textsf{ST}$(S)$, such that: (a) $\textsf{str}(u)$ is $(\tau,k)$-resilient (i.e., the flag of $u$ is equal to \textsf{SUCCESS} and $u$ is on the cut); (b) $\textsf{str}(v)$ is not $(\tau,k)$-resilient (the flag of $v$ is equal to \textsf{FAIL} and $v$ is below the cut); and (c) $\textsf{str}(v)$ has at least $\tau+k$ occurrences in $S$, we perform binary search on the interval $[\textsf{sd}(u), \textsf{sd}(v)]$ to compute \emph{the longest prefix of $\textsf{str}(v)$ that is $(\tau,k)$-resilient}. (Note that if condition (c) does not hold, we can already conclude that the sought prefix of $\textsf{str}(v)$ is $\textsf{str}(u)$.)
In particular, in each iteration of the binary search, we first call \textsc{IsPeriodic}  and then either \textsc{PeriodicResilient} or  \textsc{AperiodicResilient} (similarly to Phase 1). 
In the end, we obtain our \emph{refined final cut}.
All in all, for each edge $(u,v)$ such that~$u$ is on the preliminary cut and $v$ is not, we have computed the longest prefix of $\textsf{str}(v)$ that is $(\tau,k)$-resilient; let us denote the length of this prefix by $m(u,v)$.
Then, for each such edge $(u,v)$,
we read the \textsf{SA} interval $[l_v,r_v]$ from~$v$,
and set $\textsf{OUTPUT}[\textsf{SA}[i]]:=m(u,v)$, for all $i\in[l_v,r_v]$, in an initially all-zeroes array $\textsf{OUTPUT}[0\dd n-1]$. 
 
As $\textsf{ST}(S)$ has $\cO(n/(\tau+k))$ nodes, each with at least $\tau + k$ leaf descendants, that call the three functions, we directly obtain:

\begin{restatable}{lemma}{lemred}\label{lem:reduction}
    The \textsc{$(\tau,k)$-Resilient Pattern Mining} problem can be reduced in $\cO(n \log n)$ time, using $\cO(n)$ space, to
    $\cO(n \log n/(\tau+k))$ calls to
    \textsc{IsPeriodic,} \textsc{PeriodicResilient}, and \textsc{AperiodicResilient}.
\end{restatable}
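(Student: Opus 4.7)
The plan is to analyze separately the three parts of the algorithm outlined in \cref{sec:main} — preprocessing, the Phase 1 bottom-up traversal, and the Phase 2 binary searches on cut-edges — and then sum their contributions to both the running time and the call count. Farach's suffix-tree construction gives $\textsf{ST}(S)$ in $\cO(n)$ time and space, and a single post-order sweep supplies each interval $[l_u,r_u]$ together with the initial \textsf{FAIL} flag. The bottom-up DFS of Phase 1 visits each node of $\textsf{ST}(S)$ in $\cO(1)$ amortized time, and the upward \textsf{SUCCESS} propagation can be implemented so that each node is relabeled at most once in total (for instance, by keeping, for each root-to-current-node path, a pointer to the current highest unprocessed ancestor), contributing another $\cO(n)$. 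Finally, writing into $\textsf{OUTPUT}$ along the SA-intervals of the cut-children touches each suffix-tree leaf exactly once and therefore costs $\cO(n)$ in aggregate.

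The heart of the proof will be the counting argument that bounds the total number of calls to \textsc{IsPeriodic}, \textsc{PeriodicResilient}, and \textsc{AperiodicResilient} by $\cO((n\log n)/(\tau+k))$. In Phase 1, a call happens only at a node $u$ with $r_u-l_u+1\geq \tau+k$ whose flag is still \textsf{FAIL} at visit time, i.e., whose descendants have not yet produced a \textsf{SUCCESS}. I would charge each such $u$ to a group of $\tau+k$ descendant leaves: the subtrees rooted at ``triggering'' nodes are effectively disjoint thanks to the upward propagation, and, combined with the branching-$\geq 2$ property of the compacted suffix tree, an inductive treatment of ``unary heavy chains'' (paths of heavy nodes each with exactly one heavy child) caps the length of any such chain either by the topmost \textsf{SUCCESS} node that truncates it upward or by the light siblings that must carry the excess leaves. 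This packs the triggering nodes into $\cO(n/(\tau+k))$ call sites. For Phase 2, each cut-edge $(u,v)$ with $v$ heavy launches a binary search on $[\textsf{sd}(u),\textsf{sd}(v)]$ with $\cO(\log n)$ iterations and $\cO(1)$ function calls per iteration; and the set of such edges is bounded by the same $\cO(n/(\tau+k))$ count, yielding $\cO((n\log n)/(\tau+k))$ Phase-2 calls.

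Summing the contributions yields a reduction time of $\cO(n)+\cO(n)+\cO((n\log n)/(\tau+k))+\cO(n)=\cO(n\log n)$ (the last equality because $\tau+k\geq 1$), a space budget of $\cO(n)$ for $\textsf{ST}(S)$, the interval and flag arrays, and $\textsf{OUTPUT}$, and a function-call count of $\cO((n\log n)/(\tau+k))$, as claimed. I expect the main technical obstacle to be the heavy-node charging just sketched: exploiting the interplay between the compacted-suffix-tree branching property and the upward \textsf{SUCCESS} propagation to avoid overcharging unary heavy chains. Once that bound is established, the rest of the argument reduces to routine suffix-tree accounting.
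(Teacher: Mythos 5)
Your proposal mirrors the paper's own accounting: linear preprocessing, a bottom-up Phase-1 sweep, a binary search with $\cO(\log n)$ iterations per cut edge in Phase 2, and then a node count of $\cO(n/(\tau+k))$ justified by the fact that every triggering node has at least $\tau+k$ leaf descendants. The paper itself treats that count as immediate; your proposal is actually more careful in explicitly flagging the obstacle posed by unary heavy chains, but the fix you sketch does not close the gap, and as a consequence the key bound on Phase-1 calls is not established.

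The assertion that ``the subtrees rooted at triggering nodes are effectively disjoint thanks to the upward propagation'' is incorrect: upward \textsf{SUCCESS} propagation prevents a node from being tested once some descendant has already succeeded, but it does nothing to stop a whole chain of \textsf{FAIL} descendants from being tested one after another on the way up, and on such a chain the leaf sets nest instead of being disjoint, so a triggering node cannot be charged to a fresh block of $\tau+k$ leaves. Your fallback that light siblings ``carry the excess leaves'' only guarantees a unary heavy chain sheds at least one leaf per step, which caps the chain length by $(\text{leaves at the top}) - (\tau+k)$ — something of order $n$, not $n/(\tau+k)$. Concretely, for $S=\texttt{a}^{n-1}\texttt{\$}$ the suffix tree is a single such chain whose internal node at string depth $m$ has $n-m$ leaf descendants; taking $\tau=k=\Theta(\sqrt{n})$, there are $\Theta(n)$ heavy nodes but only $\Theta(\sqrt{n})$ of them represent resilient prefixes of that chain, so the bottom-up sweep as described makes $\Theta(n)$ Phase-1 calls, far exceeding the target $\cO(n\log n/(\tau+k))=\cO(\sqrt{n}\log n)$. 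A sound argument would have to \emph{prune} rather than merely propagate, using the downward direction of \cref{fct:mono} (a non-resilient prefix has no resilient extension): for instance, test only the minimal heavy nodes and binary search upward along each heavy path, or run a top-down pass that stops descending as soon as a node fails, and charge the $\cO(\log n)$ tests per path to its $\geq\tau+k$ private leaves. Note, for calibration, that the paper's own proof is equally terse here and would need the same repair.
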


\begin{remark}
Conceptually, one can consider a simpler algorithm that performs binary search to compute, for each node $u$ of \textsf{ST}$(S)$ with at least $\tau+k$ leaf descendants, the longest prefix of $\textsf{str}(u)$ that is $(\tau,k)$-resilient.
The problem is that in order to do this, we would need random access to the sorted list of occurrences of $\textsf{str}(v)$ at each node~$v$ of \textsf{ST}$(S)$, which is costly. 
However, as we show below, we can maintain a sorted list of occurrences during a DFS traversal.
\end{remark}

\subsection{The Functions of the Algorithm}\label{sec:functions}

In this section, we present the theoretically efficient implementations of the \textsc{IsPeriodic}, \textsc{AperiodicResilient}, and \textsc{PeriodicResilient} functions. For each function, we also discuss our practical implementation when it differs from the theoretically efficient one. These functions assume that, during the bottom-up (DFS) traversal of $\textsf{ST}(S)$, at each node $v$,  we can retrieve a sorted list of the occurrences of $\textsf{str}(v)$ in $S$.
Any such list is the \emph{sorted} union of the disjoint lists of the children of $v$.
By implementing these lists using mergeable AVL trees, 
we can compute all of them in $\cO(n\log n)$ total time.
This is rather standard but we also describe it in \cref{app:AVL_trees}.

\paragraph*{The \textsc{IsPeriodic} Function.}~For any string $S$ of length $n$ and a fragment $F$ of~$S$, a \emph{periodic extension} query~\cite{DBLP:journals/siamcomp/KociumakaRRW24} decides whether $F$ is periodic, and, if so, returns  $\per(F)$. We rely on the following result.

\begin{theorem}[\hspace{0.2mm}\cite{DBLP:journals/siamcomp/KociumakaRRW24}]\label{the:2-period}
For any string of length $n$, after an $\cO(n)$-time preprocessing, we can answer periodic extension queries in $\cO(1)$ time.
\end{theorem}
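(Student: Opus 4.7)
The plan is to reduce periodic extension queries to locating an appropriate run of $S$. By \cref{def:periodic} and \cref{def:run}, a fragment $F = S[i\dd j]$ of length $\ell = j-i+1$ is periodic if and only if it lies inside some run $R$ of $S$ with $\per(R) \le \ell/2$, in which case $\per(F) = \per(R)$. Since every string of length $n$ contains $\cO(n)$ runs that can be enumerated in $\cO(n)$ time (the Runs Theorem), we already hold an $\cO(n)$-size witness set for every possible positive answer; the problem reduces to: given $(i,\ell)$, decide whether any run $R = S[a\dd b]$ with $a \le i$, $b \ge i+\ell-1$, and $\per(R) \le \ell/2$ exists, and if so return its period.

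For the preprocessing, I would bucket runs by scale, $\mathcal{R}_t = \{R : \per(R) \in [2^{t-1}, 2^t)\}$. By \cref{lem:periodic}, no position lies in more than two runs of any single period, hence in $\cO(1)$ runs of a single scale, and $\sum_t |\mathcal{R}_t| = \cO(n)$. For each scale $t$ independently, I would build a structure that in $\cO(1)$ time answers \emph{``is there a run in $\mathcal{R}_t$ containing $[i, i+\ell-1]$?''}; since runs of close periods overlap in a Fine--Wilf-compatible way, the intervals of $\mathcal{R}_t$ form a near-laminar family that admits an $\cO(|\mathcal{R}_t|)$-space, $\cO(1)$-query covering structure built on top of the suffix-tree LCA backbone. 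Candidate verification is then a single $\textsf{lcp}$ test, namely checking that $\textsf{lcp}(i, i+p) \ge \ell - p$ for the proposed $p$, which costs $\cO(1)$ after the standard $\cO(n)$ preprocessing of $\textsf{ST}(S)$.

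The main obstacle is unifying the answer across all $\Theta(\log n)$ scales in $\cO(1)$ total time: a direct binary search on the scale would cost $\cO(\log \log n)$ and blow the target complexity. To circumvent this, I would exploit that the periods of the runs containing a fixed position $i$ are Fine--Wilf compatible and therefore grow at least geometrically, so the sorted list of candidate periods at any position fits in a single $\cO(\log n)$-bit machine word. Packing these words together with a succinct encoding of the scale-to-period mapping lets the correct scale (and hence the correct candidate period) for a given $\ell$ be selected by a constant number of word-level bitwise operations. This yields the claimed $\cO(n)$-time preprocessing and $\cO(1)$-time query bounds; as in the original proof of Kociumaka et al., the intricate part is arguing that the global data still fits in $\cO(n)$ machine words while supporting constant-time scale lookup, and this is where I expect most of the work to lie.
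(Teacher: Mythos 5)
This theorem is a black-box citation in the paper: the authors import it wholesale from Kociumaka, Radoszewski, Rytter, and Wale\'n and give no proof, and in fact they explicitly remark just below it that the underlying data structure ``relies on an intricate string sampling scheme and other sophisticated (auxiliary) data structures'' and is impractical to implement --- which is precisely why they substitute an interval tree over the runs in practice. So there is no ``paper proof'' to compare against; what you have produced is an original sketch, and it should be judged on its own.

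Your reduction (a fragment $F$ is periodic iff it is contained in a run $R$ with $\per(R)\le |F|/2$, in which case $\per(F)=\per(R)$, so answering a query amounts to locating the right run) is correct, and organizing runs by dyadic period scale is a sensible starting point. But two steps do not go through as written. First, ``$\cO(1)$ runs of a single scale through a position'' does not follow from \cref{lem:periodic}: that lemma bounds runs of \emph{equal} period, while a scale bucket spans a factor-$2$ range of periods, so you need a strictly stronger statement (a Three-Squares/Fine--Wilf-type argument bounding runs whose periods lie within a constant factor). Second, and more seriously, the word-packing argument does not work as stated: $\cO(\log n)$ candidate periods at a position, each of $\Theta(\log n)$ bits, is $\Theta(\log^2 n)$ bits per position, so even a bitmask of ``which scales are present'' costs $\Theta(\log n)$ bits per position and $\Theta(n)$ words overall just for the mask, leaving no budget for mapping a scale to the concrete run and its period. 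Your sketch correctly identifies that this is ``where most of the work lies,'' but that is exactly the part that constitutes the theorem; without it you have an $\cO(n\log n)$-space (or $\cO(\log\log n)$-query) scheme, not the $\cO(n)$-space/$\cO(1)$-query bound being claimed. The actual proof in the cited work sidesteps per-position-per-scale storage entirely via a sampling scheme, so the sketch is not merely incomplete but would need a fundamentally different mechanism to close the gap.
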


Unfortunately, there is no available implementation of the data structure underlying \cref{the:2-period}, and it seems that an implementation of it would require heavy engineering in order to be practically efficient; \cref{the:2-period} relies on an intricate string sampling scheme and other sophisticated (auxiliary) data structures.
We thus propose a practical implementation that relies on first computing the runs of~$S$~\cite{DBLP:conf/icalp/Ellert021} and then indexing them using an interval tree~\cite{DBLP:books/daglib/0023376}.

We first compute the runs of~$S$ in $\cO(n)$ time using the algorithm of~\cite{DBLP:conf/icalp/Ellert021}. 
Since there are $\cO(n)$ runs~\cite{DBLP:journals/siamcomp/BannaiIINTT17},  
each represented by a subinterval of $[0,n)$, we index them in an interval tree in $\cO(n\log n)$ time and $\cO(n)$ space~\cite{DBLP:books/daglib/0023376}. 
Given a substring $S[i\dd j]$, we return all $m$ runs that contain $S[i\dd j]$ as a substring in $\cO(m + \log n)$ time using the interval tree. 
By iterating over these runs, 
we check if $S[i\dd j]$ is periodic (\cref{def:periodic}) and, if so, find its smallest period. 
The query time is thus $\cO(m + \log n)$; in real-world strings, each position is expected to be in a few runs, so $m$ is expected to be small.

\paragraph*{The \textsc{AperiodicResilient} function.}~ 
Let $R $ be an aperiodic substring of~$S$ that satisfies $|\occ_S(R)|\geq  \tau+k$. Function \textsc{AperiodicResilient} uses a greedy algorithm (see  \cref{alg:greedy}) to compute the maximal number of occurrences of $R$ that may be lost if $k$ letters of $S$ are substituted  with a letter~$\#\notin\Sigma$.
\cref{lem:aperiodic} proves that, for any aperiodic substring $R$ of~$S$, such a substitution could result in the loss of \emph{at most two occurrences} of~$R$.
We refer to each of the $k$ substitutions, with $\#\notin\Sigma$, in our budget as a \emph{token} and we say that a token \emph{destroys} one (or more) occurrence(s) of $R$ if the frequency of $R$ is reduced  by one (or more) as a result of (using) this token.
Our algorithm is greedy in the sense that it first computes the tokens needed for the destruction of as many pairs of overlapping occurrences as possible spending one token per pair of overlapping occurrences (Lines~\ref{alg:greedy:line4}-\ref{alg:greedy:line9}).
Note that an occurrence of~$R$ may overlap with at most two other occurrences of $R$ (the one that precedes it and the one that succeeds it) and in this case only one pair of overlapping occurrences can be destroyed by a single token (Line~\ref{alg:greedy:line8}).
The algorithm then uses the remaining tokens to destroy single occurrences (Line~\ref{alg:greedy:line10}).
Clearly, \cref{alg:greedy} takes linear time in the number of occurrences, if they are given sorted. 

\begin{algorithm}[t]\footnotesize
\caption{Compute the number of occurrences of a length-$\lambda$ string that can be destroyed with~$t$ tokens in a sorted list $\cO$ of occurrences}
\begin{algorithmic}[1]
    \Function{\textsc{Greedy}}{$\cO, t, \lambda$}
    \State $k'' \gets 0$ \label{alg:greedy:line2} \Comment{Max num.~of tokens that can each destroy 2 distinct occs}
     \State $i \gets 0$
        \While{$i < |\cO|$ and $k''<t$} \Comment{Iterate over all occurrences}        
        \label{alg:greedy:line4}
            \If{$i\leq |\cO|-2$ and $\cO[i+1]-O[i]\leq \lambda-1$ } 
                \State $k'' \gets k'' + 1$
                \State $i\gets i+1$ \Comment{Skip the $(i+1)$-st occurrence of $\mathcal{O}$} \label{alg:greedy:line8}
            \EndIf
             \State $i\gets i+1$ \label{alg:greedy:line9}
        \EndWhile    
        \State $k' \gets t - k''$     \Comment{Remaining number of tokens} \label{alg:greedy:line10}
        \State {\bf return} $k'+2k''$ \Comment{Max number of occs that can be destroyed}
    \EndFunction
\end{algorithmic}\label{alg:greedy}
\end{algorithm}

\cref{alg:handle-aperiodic} presents our implementation of \textsc{AperiodicResilient}.
For an aperiodic substring $R=S[i\dd j]$, we compute its  frequency $|\occ_S(R)| =r_u-l_u+1$ (Line~\ref{alg:aperiodic:linef}), where $[l_u,r_u]$ is the \textsf{SA} interval stored for node $u$ with $\textsf{str}(u)= R$. Note that $S[i\dd j]$ is   only  considered if its frequency is greater than or equal to $\tau +k$ (see Phase~1). 
If the frequency  is greater than or equal to $\tau + 2k$ then $R$ is $(\tau,k)$-resilient (Lines~\ref{alg:aperiodic:line2}-\ref{alg:aperiodic:line3});  this is because  at most $2k$ occurrences of $R$ can be destroyed by $k$ tokens (\cref{lem:aperiodic}), and hence at least $\tau$ occurrences survive.
Otherwise (Lines~\ref{alg:aperiodic:line6}-\ref{alg:aperiodic:line11}), we employ the greedy algorithm (\cref{alg:greedy}) after retrieving the sorted list of the occurrences of $R$ in $S$ (Line~\ref{alg:aperiodic:line9}). 
Recall that we maintain  
AVL trees that allow us to access the  sorted list of occurrences for each node $u$ considered by the DFS on $\textsf{ST}(S)$.
Finally, we check whether the remaining number of occurrences (after \cref{alg:greedy} has been applied) suffices for $R$ to be $(\tau,k)$-resilient and return \textsf{YES} if that is the case and \textsf{NO} otherwise (Lines~\ref{alg:aperiodic:line10}-\ref{alg:aperiodic:line11}).
Applying the greedy algorithm (\cref{alg:greedy}) in Line~\ref{alg:aperiodic:line9} costs an additional $\cO(\tau + k)$ time. 

\begin{algorithm}\footnotesize
\caption{Check if an aperiodic substring of $S$ is $(\tau,k)$-resilient, given access to its sorted list $\cO$ of occurrences in $S$}\label{alg:aperiodicsurvive}
\begin{algorithmic}[1]
\Function{AperiodicResilient}{$i,j, \tau, k, \textsf{SA}, l_u, r_u, \cO$}
     \State $|\occ_S(S[i\dd j])| \gets r_u - l_u + 1$\label{alg:aperiodic:linef}
    \If{$|\occ_S(S[i\dd j])| \geq \tau + 2k$} \label{alg:aperiodic:line2} 
        \State \Return \textsf{YES} \label{alg:aperiodic:line3}
    \Else \Comment{In this case, $|\occ_S(S[i\dd j])| =\cO( \tau + k)$} \label{alg:aperiodic:line6}
        \If{$|\occ_S(S[i\dd j])| - \textsc{Greedy}(\cO, k, j-i+1) \geq \tau$} \label{alg:aperiodic:line9}
            \State \Return \textsf{YES} \label{alg:aperiodic:line10}
        \EndIf
        \State \Return \textsf{NO}\label{alg:aperiodic:line11}
    \EndIf
\EndFunction
\end{algorithmic}\label{alg:handle-aperiodic}
\end{algorithm}

We thus obtain \cref{lem:aperes}. 
\begin{restatable}[\textsc{AperiodicResilient}]{lemma}{lemaper}\label{lem:aperes}
    Given the occurrences of an aperiodic substring $P$ of a string $S$ sorted, we can check whether $P$ is $(\tau,k)$-resilient in $\cO(\tau+k)$ time.
\end{restatable}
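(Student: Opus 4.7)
My plan is to verify that \cref{alg:handle-aperiodic} correctly decides whether an aperiodic substring $P$ is $(\tau,k)$-resilient and that it runs in $\cO(\tau+k)$ time given the sorted occurrence list $\cO$.

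\textbf{Step 1: the easy \textsf{YES} branch.} First I would handle the case $|\occ_S(P)| \geq \tau + 2k$. By \cref{lem:aperiodic}, any three distinct occurrences at positions $x<y<z$ satisfy $z > x+|P|$, so every position of $S$ is covered by at most two occurrences of $P$. Each of the $k$ substitutions then destroys at most two occurrences, so at most $2k$ occurrences are destroyed and at least $\tau$ survive, justifying the immediate \textsf{YES} answer in Lines~\ref{alg:aperiodic:line2}--\ref{alg:aperiodic:line3}.

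\textbf{Step 2: correctness of the greedy branch.} When $|\occ_S(P)| < \tau + 2k$, I would recast the adversary's problem on the overlap graph $G$ whose vertices are the occurrences in $\cO$ and whose edges connect pairs sharing at least one position of $S$. By \cref{lem:aperiodic}, overlap is only possible between \emph{consecutive} entries of $\cO$, so $G$ is a disjoint union of paths. Each substitution destroys either a single vertex (one token, one occurrence) or a single edge (one token, two occurrences); with $k$ tokens, the maximum number of destroyed occurrences is $\min(|\occ_S(P)|,\, k + m^{\star})$ where $m^{\star} = \min(k, \mu(G))$ and $\mu(G)$ is a maximum matching of $G$. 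On a disjoint union of paths, left-to-right greedy matching is optimal, and this is precisely what \cref{alg:greedy} performs: the skip in Line~\ref{alg:greedy:line8} prevents two chosen edges from sharing a vertex, and the loop terminates once $k''$ reaches $k$. Hence the returned value $k' + 2k'' = k + k''$ equals $k + m^{\star}$. Therefore $|\occ_S(P)|$ minus this quantity coincides with the exact number of survivors when it is nonnegative, and when it is negative the adversary can destroy all occurrences, in which case the algorithm still correctly returns \textsf{NO}.

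\textbf{Step 3: running time and main obstacle.} The fast branch is $\cO(1)$. In the slow branch, \cref{alg:greedy} makes one pass over $\cO$ with $\cO(1)$ work per step, so its cost is $\cO(|\cO|) = \cO(\tau+k)$ because we entered the branch with $|\occ_S(P)| < \tau + 2k$. The delicate part is Step~2: one must justify both that left-to-right greedy matching is globally optimal on the overlap graph and that the algorithm's output remains meaningful in the saturated regime where $k + m^{\star} \geq |\occ_S(P)|$. The path-graph reduction via \cref{lem:aperiodic} is what keeps both concerns manageable.
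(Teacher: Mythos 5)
Your proposal is correct and follows the same approach as the paper: the threshold check $|\occ_S(P)|\geq\tau+2k$ via \cref{lem:aperiodic}, then the left-to-right greedy when $|\occ_S(P)|<\tau+2k$, giving $\cO(\tau+k)$ time. The paper itself gives essentially no formal proof of this lemma (it just says the algorithm ``first computes the tokens needed for the destruction of as many pairs of overlapping occurrences as possible'' and concludes ``We thus obtain the lemma''), whereas you add a genuinely useful formalization: modeling the overlap structure as a disjoint union of paths by \cref{lem:aperiodic}, observing that the adversary's optimum is $\min\bigl(|\occ_S(P)|,\, k+\min(k,\mu(G))\bigr)$ where $\mu(G)$ is a maximum matching, and noting that left-to-right greedy matching is optimal on paths so \cref{alg:greedy} returns exactly $k+\min(k,\mu(G))$. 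Your explicit treatment of the ``saturated'' case $k+m^\star \geq |\occ_S(P)|$ --- where \cref{alg:greedy} may report a value exceeding $|\cO|$ but \cref{alg:handle-aperiodic} still correctly outputs \textsf{NO} --- is a detail the paper leaves implicit, so your write-up is tighter than the original on the one point where a real argument is needed.
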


\paragraph*{The \textsc{PeriodicResilient} function.}~Let $P$ be a periodic substring of $S$.
We say that an occurrence $S[i\dd j]$ of $P$ \emph{belongs} to a run $R = S[x \dd y]$ of $S$ when $\textsf{per}(R)=\textsf{per}(S)$ and $x \leq i \leq j \leq y$.
By \cref{lem:periodic}, we know that a single token can destroy occurrences of $P$ that belong to \emph{at most two different runs} of $S$.
Hence, \textsc{PeriodicResilient} relies on an efficient non-trivial preprocessing of $S$ to retrieve (at most) $\tau+2k$ runs of $S$ with period $\textsf{per}(P)$ that have $P$ as a substring.
If the number of such runs is greater than or equal to $\tau + 2k$, then we cannot destroy sufficiently many occurrences of $P$, and thus $P$ is $(\tau,k)$-resilient; otherwise, we employ a more involved version of the greedy algorithm discussed before to check whether $P$ is $(\tau,k)$-resilient.
We start by describing this preprocessing (and our practical implementation of it) before providing the full implementation of the \textsc{PeriodicResilient} function. 

\paragraph*{Preprocessing.} Let a periodic substring $P$ of $S$ with fewer than $\tau+2k$ runs to which its occurrences belong. The set $\mathcal{H}_P$ is defined as the set of all such runs.
For any periodic substring~$Q$ that does not satisfy this condition, we set $\mathcal{H}_Q := \varnothing$.
Recall that any run of $S$, being a fragment of $S$ (see \cref{def:run}), can be represented in $\cO(1)$ space by its endpoints.
We can thus represent $\mathcal{H}_P$ in $\cO(\tau+k)$ space.
The following lemma encapsulates a data structure that, given a periodic substring~$P$ of $S$, efficiently computes $\mathcal{H}_P$.

\cref{lem:pst} encapsulates a restricted variant of priority search trees~\cite{DBLP:journals/siamcomp/McCreight85}, which we use in the proof of \cref{lem:clusters}.

\begin{lemma}[\hspace{0.2mm}\cite{DBLP:journals/siamcomp/McCreight85}]\label{lem:pst}
Given $n$ points in $[1,n] \times [1,n]$, we can construct in $\cO(n \log n)$ time and $\cO(n)$ space, a data structure that supports the following type of queries (known as 3-sided range reporting queries) in $\cO(\log n + |\textsf{output}|)$ time: given $a,b,c,d \in \mathbb{Z} \cup \{-\infty, \infty \}$, such that $\{a,b,c,d\} \cap \{-\infty , \infty\} \neq \emptyset$, report all the points in $(a,b) \times (c,d)$.
\end{lemma}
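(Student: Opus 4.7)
The plan is to build a priority search tree (PST), the classical structure introduced by McCreight. Let $P$ denote the set of $n$ input points. First, I would sort $P$ by $x$-coordinate in $\cO(n \log n)$ time and build a balanced binary tree $T$ whose leaves correspond to the sorted $x$-values. The characteristic feature of a PST is what each internal node stores: at every node $v$, I would keep the point $p_v$ of $v$'s subtree that has minimum $y$-coordinate among the points not already promoted to an ancestor of $v$. Concretely, the root stores the globally $y$-minimum point; that point is then removed, and the remaining points are split at the $x$-median and recursed on. This produces $\cO(n)$ nodes and can be carried out in $\cO(n \log n)$ time and $\cO(n)$ space using any balanced BST discipline for $x$.

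To answer a query I would first assume by symmetry that it has the form $(a,b) \times (-\infty, d)$, where $a,b,d \in \mathbb{Z} \cup \{-\infty,\infty\}$. Starting at the root, at each visited node $v$ the procedure does three things: (i) if the stored point $p_v$ satisfies $p_v.y \geq d$, prune the entire subtree rooted at $v$, which is sound because, by the heap property on $y$, every point in the subtree has $y$-coordinate at least $p_v.y$; (ii) otherwise, report $p_v$ if its $x$-coordinate lies in $(a,b)$; (iii) recursively descend into the children whose $x$-ranges intersect $(a,b)$. To handle all four choices of which side is unbounded, I would maintain two PSTs in parallel, one min-heap-ordered on $y$ and one max-heap-ordered on $y$, and route a query to the appropriate one. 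A query that is unbounded in $x$ (i.e., $a=-\infty$ or $b=\infty$) is a special case where one of the BST branches is never pruned by the $x$-test.

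For the query-time bound of $\cO(\log n + |\textsf{output}|)$, the standard argument partitions the visited nodes into two groups. The first group are the nodes along the two search paths to $a$ and $b$ in $T$; these contribute $\cO(\log n)$. The second group are the nodes lying strictly between these two paths: such a node $v$ has its full $x$-range contained in $(a,b)$, so either its stored point $p_v$ is reported or pruning fires at $v$. A charging scheme then assigns each pruned node to the reported point of its parent (or to one of the two spine paths), which yields the claimed $\cO(\log n + |\textsf{output}|)$ bound.

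The main technical obstacle is making the charging argument watertight, since the bound must hold even when $|\textsf{output}|=0$; here one has to observe that if the $y$-test prunes at $v$, no descendant of $v$ is visited, so unreporting prunes cost only $\cO(1)$ each and can be charged either to the spine or to an enclosing reported ancestor. A secondary concern is implementing the construction so the median splits and the ``promote minimum $y$'' step together run in $\cO(n \log n)$ time and $\cO(n)$ space; this is straightforward given the initial $x$-sort and a standard linear-time selection or a preorder-style recursive construction over $T$.
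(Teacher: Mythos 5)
The paper does not give a proof of this lemma --- it is stated as a citation to McCreight's priority search tree paper and used as a black box in the proof of the subsequent lemma. Your reconstruction of the standard PST construction and the $\cO(\log n + |\textsf{output}|)$ query argument is the right (and essentially only) approach, and the pruning/charging reasoning is sound. However, there is a gap in the case analysis for the unbounded side. You propose two PSTs, both organized as BSTs on $x$ with a heap on $y$ (one min-heap, one max-heap). These handle queries in which $c=-\infty$ or $d=\infty$, i.e., where the infinite bound lies on the $y$-axis. But the lemma also allows queries in which \emph{only} an $x$-bound is infinite, e.g.\ $(-\infty, b)\times (c,d)$ with $b,c,d$ all finite. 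Such a query has a two-sided $y$-constraint, and a $y$-heap cannot prune against a two-sided $y$-range --- the search may visit $\Omega(n)$ nodes whose $y$-values lie below $c$ while reporting nothing, breaking the output-sensitive bound. Your remark that the $x$-unbounded case ``is a special case where one of the BST branches is never pruned'' only addresses the degenerate $x$-traversal; it does not make the two-sided $y$-test efficient. The fix is small: keep \emph{four} PSTs rather than two, the additional two being BSTs on $y$ with a min-heap and a max-heap on $x$, and dispatch each query to the tree whose heap axis matches the coordinate carrying the infinite bound. (Queries with two or more infinite bounds can be routed to any compatible tree.) For the specific queries actually issued in the paper's \textsc{ComputeHP} procedure, the infinite bound always falls on the $y$-axis, so your two trees would suffice there --- but they do not establish the lemma as stated.
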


\begin{restatable}[\textsc{ComputeHP}$(i,j,\tau,k)$]{lemma}{runsds}\label{lem:clusters}
For any string $S$ of length $n$ and integers $\tau,k \in [1,n]$, we can construct, in $\cO(n\log n)$ time using $\cO(n)$ space, a data structure that, for any periodic substring $P$ of $S$, computes the set $\mathcal{H}_P$ in $\cO(\log n + \tau + k)$ time.
\end{restatable}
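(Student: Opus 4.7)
The plan is to preprocess $S$ so that every query can be answered via 3-sided range reporting on a priority search tree (PST) as guaranteed by \cref{lem:pst}. First I would compute all $\cO(n)$ runs of $S$ in $\cO(n)$ time using the algorithm of~\cite{DBLP:conf/icalp/Ellert021}, and then canonicalize Lyndon roots: since two runs share a Lyndon root iff their length-$p$ rotations coincide, I can assign each distinct Lyndon root an integer identifier in $\cO(n \log n)$ total time using the suffix array of $S$ together with $\cO(1)$-time LCE queries. This partitions the $\cO(n)$ runs into equivalence classes by Lyndon root.

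Next, for each run $R = L_1' L^{K'} L_2'$ with Lyndon root $L$ I would create a point in $2$D whose $x$-coordinate encodes the pair (Lyndon root identifier, ``phase'' $|L_1'|$) and whose $y$-coordinate encodes the length $|R|$ (or equivalently $K'$). The key combinatorial fact driving the encoding is that a periodic $P = L_1 L^K L_2$ with the same Lyndon root $L$ occurs in $R$ iff $|R| \geq |P| + \delta$, where $\delta = (|L_1'| - |L_1|) \bmod |L|$; this follows directly from the canonical representations, since the only possible starting offsets of $P$ within $R$ are those congruent to $|L_1'| - |L_1|$ modulo $|L|$. With the $x$-coordinates laid out so that, after a query-dependent cyclic shift, the runs containing $P$ fill an $\cO(1)$-number of $3$-sided rectangles, each rectangle is reported in $\cO(\log n + \textsf{output})$ time by \cref{lem:pst}. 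I would abort reporting once $\tau + 2k$ points have been extracted and return $\varnothing$ in that case; otherwise the extracted set is exactly $\mathcal{H}_P$.

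For the query itself, given a periodic fragment $P = S[i \dd j]$, I would use a periodic-extension query (\cref{the:2-period}) to obtain $p = \per(P)$ and then locate the unique run of period $p$ enclosing the occurrence $S[i \dd j]$; this run shares its Lyndon root with $P$ and determines $P$'s phase $|L_1|$, so the query parameters are read off in $\cO(\log n)$ time via a position-to-runs index (e.g., the interval tree of runs already built for \textsc{IsPeriodic}). Issuing the $\cO(1)$ PST queries then yields $\mathcal{H}_P$ in $\cO(\log n + \tau + k)$ time, and the total construction cost is bounded by the PST build at $\cO(n \log n)$ time and $\cO(n)$ space.

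The main obstacle is the geometric encoding in the second step: the condition $|R| \geq |P| + \delta$ is \emph{not} a rectangle in $(|L_1'|, |R|)$-space, because $\delta$ depends cyclically on $|L_1'|$ through $\delta = (|L_1'| - |L_1|) \bmod |L|$. My plan to circumvent this is to store each run in the PST twice, once for the regime $|L_1'| \geq |L_1|$ and once for the complementary regime $|L_1'| < |L_1|$, with a coordinate transformation in each regime that converts $|R| \geq |P| + \delta$ into a linear inequality of the form $K' \geq K$ (respectively $K' \geq K+1$), with ties broken through $\beta_R$ vs.\ $\beta_P$. A constant number of $3$-sided reports per query then suffices, preserving both the $\cO(n)$ space bound and the $\cO(\log n + \tau + k)$ query time; verifying that this regime split enumerates $\mathcal{H}_P$ exactly, with neither duplicates nor omissions, is where most of the proof detail would lie.
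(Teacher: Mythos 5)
Your proposal shares the paper's skeleton: compute the runs, group them by Lyndon root, index the groups into priority search trees via \cref{lem:pst}, obtain $P$'s period and Lyndon root via periodic-extension queries, and abort reporting once $\tau+2k$ runs are found (returning $\varnothing$). Where you diverge is the geometric encoding. The paper further buckets each Lyndon-root group by the exponent $t$ (the number of full copies of the Lyndon root $V$ in the run), stores the point $(|U_1|,|U_2|)$ per run $U_1 V^t U_2$, and answers a query $P = V_1 V^h V_2$ by sequentially scanning all buckets with $t\geq h+2$ (every run there trivially contains $P$), issuing two 3-sided queries on the $t=h+1$ bucket for the disjunction $|V_1|\leq|U_1|$ or $|V_2|\leq|U_2|$, and one quarterplane query on the $t=h$ bucket for the conjunction. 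You instead avoid the subgrouping by $t$ entirely: one PST per Lyndon-root group, coordinates built from the phase $|L_1'|$ and the run length, with a regime split on $|L_1'|\geq|L_1|$ versus $|L_1'|<|L_1|$ that turns the containment test $|R|\geq|P|+\delta$ into a constant number of 3-sided reports. That is a genuinely different (and arguably simpler) decomposition which dispenses with the $t$-buckets and the sequential sweep over $t\geq h+2$, while attaining the same $\cO(n\log n)$ preprocessing, $\cO(n)$ space, and $\cO(\log n+\tau+k)$ query bounds.

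Two remarks on your execution. You do not need to store each run twice: pick the $y$-coordinate to be $|R|-|L_1'| = K'|L|+|L_2'|$ rather than $|R|$ or $K'$. Then the two regimes become the disjoint quarterplanes $[|L_1|,\infty)\times[\,|P|-|L_1|,\infty)$ and $(-\infty,|L_1|-1]\times[\,|P|-|L_1|+|L|,\infty)$, both answerable on a single PST holding one point per run; the regimes are $x$-disjoint (no duplicates) and the inequality on $|R|-|L_1'|$ encodes the length requirement exactly (no omissions), which closes the very verification you flagged as the remaining work. Your undefined tie-breaker ``$\beta_R$ vs.\ $\beta_P$'' is precisely what this coordinate choice absorbs: with $K'$ alone as the $y$-coordinate, the tie at $K'=K$ depends on $|L_2'|$ versus $|L_2|$, a quantity that is not a function of your stated coordinates and hence not expressible as a single 3-sided region; folding $|L_2'|$ into the $y$-coordinate removes the problem.
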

\begin{proof}
We first describe the preprocessing and then the query.

\paragraph*{Preprocessing.} We compute all runs of $S$ in $\cO(n)$ time using the algorithm of~\cite{DBLP:conf/icalp/Ellert021}.
We then group the runs by Lyndon root in $\cO(n)$ time~\cite[Theorem 6]{DBLP:journals/tcs/CrochemoreIKRRW14}.
Each group is further divided into subgroups according to the number of Lyndon root occurrences in the runs; these subgroups are stored in descending order with respect to this number.
Creating these subgroups and sorting them can be done in $\cO(n)$ time using radix sort since the numbers to be sorted are from $[1,n]$.
A subgroup containing runs with Lyndon root $V$ such that $V$ occurs $t$ times in each run is labeled with $(V,t)$.
For each subgroup $(V,t)$, we create a two-dimensional grid $\mathcal{G}_{(V,t)}$ where we insert, for each run $V_1\cdot V^t\cdot V_2$ in $(V,t)$, the point $(|V_1|,|V_2|)$ --- note that we may have a multiset of points.
Further, for each subgroup $(V,t)$, after reducing to rank space, we insert each point of $\mathcal{G}_{(V,t)}$ into a priority search tree $\mathcal{T}_{(V,t)}$ (cf.~\cref{lem:pst}).

Since we have no more than $n$ runs in $S$ in total~\cite{DBLP:journals/siamcomp/BannaiIINTT17}, we can construct all such trees~\cite{DBLP:journals/siamcomp/McCreight85} in $\cO(n \log n)$ time using $\cO(n)$ space.
Each tree supports $3$-sided range reporting queries in $\cO(\log n + f)$ time, where $f$ is the number of points in the output~\cite{DBLP:journals/siamcomp/McCreight85}.
We also preprocess $S$, in $\cO(n)$ time for constant-time periodic extension queries (cf.~\cref{the:2-period}) 
and constant-time minimal rotation queries~\cite{DBLP:conf/cpm/Kociumaka16} (that return the lexicographically smallest rotation of an arbitrary fragment of~$S$). The preprocessing time is $\cO(n\log n)$ and the space usage is $\cO(n)$.

\paragraph*{Querying.} Suppose that we are given a periodic substring $P = V_1\cdot V^h\cdot V_2$ of $S$, where~$V$ is the lexicographically smallest rotation of $P[0 \dd \textsf{per}(P)-1]$ and $|V_1|,|V_2| \leq \textsf{per}(P)$ (see Definition 4).

The period of $P$ of $S$ can be computed in $\cO(1)$ time using a periodic extension query.
The Lyndon root $V$ of $P$ can then be computed in $\cO(1)$ time using a minimal rotation query.

Recall that we would like to retrieve $\mathcal{H}_P$:
a set of (less than $\tau+2k$) runs such that each contains some occurrence of $P$ and has period $\textsf{per}(P)$.
If, during the course of the query procedure, we compute $\tau+2k$ such runs, we terminate the algorithm and return $\emptyset$.
In the remainder of the proof, we assume that this is not the case.
We first retrieve all runs in subgroups $(V,t)$ with $t\geq h+2$. 
Each of these runs contains at least one occurrence of $P$.

\setcounter{example}{7}
\begin{example}
Let $P=\texttt{bc}(\texttt{abc})^2\texttt{ab}$ with Lyndon root $V=\texttt{abc}$ and $h=2$.
The run $(\texttt{abc})^3$ does not have any occurrence of $P$.
The run $(\texttt{abc})^4$ has an occurrence of $P$ at position 1.
\end{example}

We retrieve all $f$ runs $U_1\cdot V^{h+1}\cdot U_2$ from $(V,h+1)$ such that either
$|V_1|\leq |U_1|$ or
$|V_2|\leq |U_2|$.
These runs are retrieved by computing the points of $\mathcal{G}_{(V,h+1)}$ in the union of $[|V_1|, \infty) \times (-\infty,\infty)$ and
$(-\infty, |V_1|-1] \times [|V_2|,\infty)$; these points can be reported using $\mathcal{T}_{(V,h+1)}$ in time $\cO(\log n + f)$.

Finally, we retrieve all $f'$ runs from $(V,h)$ such that
$|V_1|\leq |U_1|$ and $|V_2|\leq |U_2|$. We employ a range reporting query using $\mathcal{T}_{(V,h)}$ to compute the points that lie in the quarterplane $[|V_1|, \infty) \times [|V_2|, \infty)$ in $\mathcal{G}_{(V,h)}$.
This takes $\cO(\log n + f')$ time.

The total query time is $\cO(\log n +  \tau + k)$.
\end{proof}

The implementation of \cref{lem:clusters} is highly unlikely to be efficient in practice; see our previous discussion on \cref{the:2-period}. In practice, however, we can settle for a \emph{simple} $\cO((\tau +k)\log n)$-time algorithm as the total number of times we call the function \textsc{PeriodicResilient} in real-world strings is very small. In our experiments, this number was about $n/7000$ even for the smallest tested $\tau$ and $k$ values; specifically, it was about $29\cdot 10^3$ for $n=2\cdot 10^8$.

To realize our simple algorithm, we will view each run in $\mathcal{H}_P$ as a cluster of occurrences of $P$, and from now on we will be abusing the definition of $\mathcal{H}_P$ as follows: 
each run in $\mathcal{H}_P$ is represented by a pair $(f,m)$, where $f$ is the first occurrence of $P$ in the run and $m \geq 1$ is the total number of occurrences of $P$ in this run.
Hence, if $ (f,m) \in \mathcal{H}_P$ then there is a run in $S$ with period $\textsf{per}(P)$ such that the set of the starting positions of occurrences of~$P$ in $S$ that lie within this run is $\{f + i \cdot \textsf{per}(P): i \in [0,m)\}$. \cref{alg:clusters} presents our practical implementation of the preprocessing. 
It considers the occurrences of $P$ in $S$ in the left-to-right order and efficiently merges them into runs.

During the bottom-up traversal of  the suffix tree $\textsf{ST}(S)$ and at node $u$ such that $\textsf{str}(u) = P$ and $P$ is periodic,  if $P= S[i\dd j]$ then \cref{alg:clusters} takes as input the set $\cO$ of occurrences of $P$ in $S$ sorted from the left to the right.
 The set $\cO$ is then partitioned  in inclusion-maximal sequences of occurrences such that the occurrences in each sequence are at distance $p=\textsf{per}(S[i\dd j])$ or there exists a single occurrence of $P$ (Line~\ref{alg:clusters:line4}).
 We only need to compute at most $\tau+2k$ such sequences.
 If retrieving more sequences is possible, then $P$ is ($\tau,k$)-resilient in $S$ due to \cref{lem:periodic}.

 Otherwise, we construct $\mathcal{H}_P$ (Line~\ref{alg:clusters:line5}). This can be done  in $\cO((\tau +k)\log n)$ time.
 Before explaining this time bound, observe that, if $P$ occurs at position~$i'$ then  all occurrences of~$P$ that start in  $[i' ,i' + \textsf{lcp}(i',i'+\textsf{per}(P))]$  belong to one run.
 We achieve the stated time bound by iterating over the occurrences in $\cO$ in increasing order as follows, starting from the first such occurrence. 
 We process the $i$-th occurrence (if we have to) as follows; let $f$ be the corresponding starting position.
 We compute $m = \lfloor \textsf{lcp}(f ,f +  \textsf{per}(P))/\textsf{per}(P)  \rfloor +1$  and insert~$(f,m)$ to $\mathcal{H}_P$.
 The computed $m$ occurrences of $P$ belong to one run and can be skipped.
 We thus proceed to processing the $(i+m)$-th occurrence of $P$.
 Retrieving the starting position of the $j$-th occurrence for any~$j$ can be done in  $\cO(\log n)$ time from the sorted list $\cO$ (maintained as an AVL tree).
 The described process is repeated $\cO(\tau +k)$ times.
 Therefore the time complexity is $\cO((\tau +k)\log n)$.

\setcounter{algorithm}{2}
\begin{algorithm}\small
\caption{Create clusters for periodic fragment $P=S[i\dd j]$, given random access to $P$'s sorted list $\cO$ of occurrences in $S$}\label{alg:createclusters} 
\begin{algorithmic}[1]
\Function{CreateClusters}{$i,j,p, k,\tau, \cO$}   
     \State Construct a partition $\mathcal{P}$ of $\cO$, such that $|\mathcal{P}| < \tau+2k$,  in every set  $C\in \mathcal{P}$, $|C|=1$ or any two consecutive elements of $C$ are at distance $p=\textsf{per}(S[i\dd j])$ and $C$ is inclusion-maximal. \label{alg:clusters:line4}

     \If{ $\sum_{C\in \mathcal{P}} |C| < |\occ_S(S[i\dd j]|$} 
      \State \Return $\mathcal{H}_P =\varnothing$ 
     \Else \State \Return $\mathcal{H}_P = \{( C[0], |C|)  \text{ for all } C \in  \mathcal{P} \}$ \label{alg:clusters:line5}
     \EndIf
    \EndFunction 
    \end{algorithmic}\label{alg:clusters}
\end{algorithm}

We give a high-level description of function \textsc{PeriodicResilient}; see \cref{alg:psurvive} for pseudocode. 
\textsc{PeriodicResilient} first computes $\mathcal{H}_P$ ($|\mathcal{H}_P|< \tau+2k$)  either by \cref{lem:clusters} or by \cref{alg:clusters}.  
We start with two key concepts: 

\begin{enumerate}
    \item The maximum number of occurrences of $P$ that can be destroyed using a single token, denoted by $\alpha := \lceil \frac{|P|}{\textsf{per}(P)} \rceil$.
\begin{example}
Let $P=\texttt{aabaaba}$ and the run $\texttt{aabaabaabaaba}$ that contains 3 occurrences of $P$.
We have $\alpha = \lceil \frac{|P|}{\textsf{per}(P)} \rceil=3$ because all occurrences can be destroyed with a single token: $\texttt{aabaab\#abaaba}$.   
\end{example}

\item  The subset $\mathcal{H}^{(1,2)}_P$ of $\mathcal{H}_P$ defined as follows:
\[\mathcal{H}^{(1,2)}_P = \{(f,m) \in \mathcal{H}_P:  m \bmod \alpha =  1 \text{ OR } 2 \}.\]  
\end{enumerate}
These runs are treated specially by \textsc{PeriodicResilient}. 

Function \textsc{PeriodicResilient} maintains two counters $t$ and $d$, where~$t$ is the number of available tokens (initially, $t:=k$), and $d$ is the maximum number of occurrences   (initially, $d:=0$) that may be destroyed using $k-t$ tokens.

The algorithm has three main stages: 

In the first stage, we only want to spend a token if it results in the destruction of $\alpha$ occurrences of $P$. We process the set of runs $\mathcal{H}_P$ in the left-to-right order and, for each run $(f, m) \in \mathcal{H}_P$, we compute $\lfloor m/\alpha \rfloor$, which equals the maximum number of tokens that we can use on this run such that each token destroys $\alpha$ (unique) occurrences of $P$.

Specifically, while we have tokens available, we process $(f, m)$ as follows: first, we use $y:=\min\{t, \lfloor m/ \alpha \rfloor\}$ tokens on this run, destroying a total of $y\cdot \alpha$ occurrences of $P$. The counters $d$ and $t$ are updated accordingly: $d := d + y\cdot \alpha$ and $t := t -  y$.
\begin{enumerate}
   \item If $m \bmod \alpha \geq 3$, 
   $m \bmod \alpha$ occurrences of $P$ are yet to be destroyed. Their number is inserted into a global max-heap $\mathcal{R}$ along with its 
   counter that is incremented every time the number is encountered. These remaining occurrences are to be processed in the second stage.
    \item If $1 \leq m \bmod \alpha \leq 2$, we set $\mathcal{H}^{(1,2)}_P := \mathcal{H}^{(1,2)}_P \cup \{(f,m)\}$ ($\mathcal{H}^{(1,2)}_P$ is initially empty). The remaining at most two occurrences of $P$ are processed in the third stage.
\end{enumerate} 

After the first stage is finished, if $\occ_S(P) - d \geq \tau$ (that is, $P$ is still $\tau$-frequent) and $t > 0$ (that is, we still have tokens available), we proceed to the second stage. In this stage, we use the max-heap~$\mathcal{R}$ in order to maximize the number of occurrences destroyed by each token.
In particular, if the maximum element in $\mathcal{R}$ is $r$ with counter~$c_r$, we use $\min\{t, c_r\}$ tokens to destroy $r$ occurrences of $P$ with each of them.
(Observe that by construction $r <\alpha$.)

We next proceed to the third stage if $\occ_S(P) - d \geq \tau$ and $t > 0$.
In this stage, we process the runs of $\mathcal{H}^{(1,2)}_P$; each of them has one or two surviving occurrences of $P$.
Due to \cref{lem:periodic}, any position in $S$ is contained in at most two runs of $\mathcal{H}_P$.
We refer to a run $(f,m) \in \mathcal{H}^{(1,2)}_P$ as type one or type two depending on the value of $m\bmod \alpha$.
We can assume that the runs in~$\mathcal{H}^{(1,2)}_P$ are sorted with respect to their starting positions (due to how they are computed).
We process the remaining occurrences of $P$, which all belong to runs in $\mathcal{H}^{(1,2)}_P$, using \cref{alg:greedy}: this algorithm first destroys pairs of overlapping occurrences and then non-overlapping singleton occurrences, similarly to the aperiodic case.

Finally, the function outputs whether $P$ is $(\tau,k)$-resilient.

\begin{example}
Consider $\alpha=10$ and that for all runs $(f,m) \in \mathcal{H}_P$, $m \in \{61,60,43, 23,12,7\}$.
The max-heap $\mathcal{R}$ contains $[7,1],[3,2]$: 
$1$ occurrence as the remainder from $7$
and $2$ occurrences as the remainder from $43$ and~$23$. 
The remainders from $61$ and $12$ are processed by \cref{alg:greedy}.
\end{example}

$\mathcal{H}_P$ is computed in $\cO(\log n+\tau+k)$ time (\cref{lem:clusters}) or $\cO((\tau+k)\log n)$ time (\cref{alg:clusters}).
For each of the $\cO(\tau+k)$ runs in $\mathcal{H}_P$, we spend $\cO(\log n)$ time for a constant number of max-heap operations. Moreover, $\cO((\tau+k)\log n)$ time is needed for retrieving and deleting elements from $\mathcal{R}$ and $\cO(\tau+k)$ time analogously to \cref{alg:greedy}. We obtain the following result.

\begin{restatable}[\textsc{PeriodicResilient}]{lemma}{lemper}\label{lem:peres} 
    After an $\cO(n \log n)$-time preprocessing of a string $S$ of length $n$,
    we can check whether any given periodic substring $P$ of $S$ 
    is $(\tau,k)$-resilient in $\cO((\tau+k) \log n)$ time. 
\end{restatable}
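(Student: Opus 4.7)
\textbf{Proof plan for Lemma \ref{lem:peres}.} The claim has two parts: (i) the preprocessing takes $\cO(n\log n)$ time, and (ii) given a periodic substring $P$ of $S$, \textsc{PeriodicResilient} correctly decides in $\cO((\tau+k)\log n)$ time whether $P$ is $(\tau,k)$-resilient. Part (i) is immediate from stacking the preprocessing of \cref{lem:clusters} (for computing $\mathcal{H}_P$), \cref{the:2-period} (for the periodic extension queries used implicitly via $\alpha$ and $\per(P)$), and the mergeable AVL trees over $\textsf{ST}(S)$ that maintain the sorted occurrence list at the current node. For part (ii), P is $(\tau,k)$-resilient iff the maximum number $D$ of occurrences of $P$ destroyable by $k$ substitutions satisfies $|\occ_S(P)|-D\ge\tau$, so it suffices to show that the three-stage greedy returns exactly this $D$ (or at least a value $D'$ with $|\occ_S(P)|-D'\ge\tau \iff |\occ_S(P)|-D\ge\tau$), and to bound the running time.

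\textbf{Correctness.} I would first establish the two structural bounds that drive the greedy. By \cref{lem:periodic} every position of $S$ belongs to at most two runs of period $\per(P)$, so a single token (substitution) can destroy occurrences of $P$ only from at most two runs of $\mathcal{H}_P$. Inside one run $(f,m)\in\mathcal{H}_P$, the occurrences of $P$ form an arithmetic progression with step $\per(P)$, and a token at position $q$ destroys exactly the occurrences whose starting positions lie in $[q-|P|+1,q]$; this interval contains at most $\alpha=\lceil|P|/\per(P)\rceil$ progression elements, so a single token destroys at most $\alpha$ occurrences from one run. The algorithm's three stages exactly realize a matroid-exchange greedy under these two constraints: Stage~1 is optimal among single-run tokens because an $\alpha$-destruction is the best any single token can do in isolation, and $\lfloor m/\alpha\rfloor$ is the largest number of disjoint $\alpha$-windows inside a run of $m$ progression elements; Stage~2 optimally spends the remaining tokens on the residual runs (with residual $\ge 3$) by picking the largest residual first; Stage~3 invokes \cref{alg:greedy} on the residual-$\{1,2\}$ runs, which is exactly the subcase where a token straddling two adjacent runs may destroy more occurrences than staying inside one run.

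\textbf{Main obstacle.} The delicate point is justifying that Stages~1 and~2 may restrict attention to non-straddling tokens without loss of optimality. I would prove this by an exchange argument: take an optimal schedule of $k$ tokens and repeatedly replace each straddling token $q$ lying in the overlap of two runs $(f_1,m_1),(f_2,m_2)\in\mathcal{H}_P$ with a non-straddling token that destroys at least as many occurrences, unless both run contributions at $q$ are in $\{1,2\}$. Concretely, if $q$ destroys $r_1$ occurrences from the first run and $r_2$ from the second, then by \cref{lem:periodic} the total geometric span forces $r_1+r_2\le\alpha+1$, and if $\max(r_1,r_2)\ge 3$ one can slide $q$ into the interior of the run with the larger residual to destroy at least $\max(r_1,r_2)$ occurrences, dominating $r_1+r_2$ whenever $\min(r_1,r_2)\le 2$. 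This shows that any net benefit from straddling must come from pairs of residuals in $\{1,2\}$, which are precisely the configurations deferred to Stage~3; there \cref{alg:greedy} is optimal by the standard interval-greedy analysis because residual-$\{1,2\}$ blocks in consecutive runs of $\mathcal{H}_P$ form short fragments, and overlapping pairs of such blocks are non-crossing (their starting positions form an increasing sequence), so the simple ``pair-or-keep'' greedy matches the maximum.

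\textbf{Complexity.} Computing $\mathcal{H}_P$ takes $\cO(\log n+\tau+k)$ by \cref{lem:clusters}, and $|\mathcal{H}_P|<\tau+2k$. Stage~1 scans $\mathcal{H}_P$ once, doing $\cO(1)$ arithmetic per run plus $\cO(\log n)$ per max-heap insertion into $\mathcal{R}$, for a total of $\cO((\tau+k)\log n)$. Stage~2 performs $\cO(\tau+k)$ heap operations on $\mathcal{R}$, each in $\cO(\log n)$. Stage~3 runs \cref{alg:greedy} on the already-sorted list of residual occurrences, costing $\cO(\tau+k)$. Summing yields $\cO((\tau+k)\log n)$ per query after $\cO(n\log n)$ preprocessing, as claimed.
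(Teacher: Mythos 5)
Your proposal follows the same overall structure as the paper's proof: compute $\mathcal{H}_P$ via \cref{lem:clusters}, bound the running time of the three stages, and argue that the three-stage greedy is optimal. The complexity analysis is identical to the paper's and correct. The paper's own correctness discussion is quite terse (essentially a one-sentence assertion about the order in which occurrences are destroyed), so you are right to try to flesh it out with an exchange argument; this is where the issue lies.

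The quantitative claim driving your exchange step is wrong, and the gap it leaves is not vacuously filled by the rest of your argument. You claim that a straddling token $q$ destroys $r_1 + r_2 \le \alpha + 1$ occurrences, and your exchange only handles the case $\max(r_1,r_2)\ge 3$ and $\min(r_1,r_2)\le 2$, leaving the case $\min(r_1,r_2)\ge 3$ (which is consistent with $r_1+r_2\le\alpha+1$ when $\alpha$ is large) untreated. In fact the correct, and much sharper, bound is $r_1+r_2\le 2$, i.e.\ $r_1,r_2\le 1$. To see this, note that if $q$ destroys occurrences from two runs $R_1=S[a_1\dd b_1]$ and $R_2=S[a_2\dd b_2]$ with $a_1<a_2$ and common period $p=\per(P)$, then $q$ lies in the overlap $[a_2,b_1]$, which has length at most $p-1$ by \cref{lem:periodic}. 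Since $|P|\ge 2p$, we have $b_1-|P|+1<a_2$, so the start positions of occurrences destroyed from $R_1$ lie in a window of length $b_1-q+1\le p-1<p$, and those from $R_2$ lie in a window of length $q-a_2+1\le p-1<p$; each window contains at most one point of the step-$p$ arithmetic progression. Once you have $r_1+r_2\le 2$, the exchange becomes unconditional and simple: any Stage~1 token (destroying $\alpha\ge 2$ occurrences) and any Stage~2 token (destroying $\ge 3$ occurrences) is at least as good as a straddling token, so straddling can help only among the residual-$\{1,2\}$ configurations deferred to Stage~3. Without the sharper bound, your case analysis is incomplete, and the reader cannot tell that the $\min(r_1,r_2)\ge 3$ case is impossible rather than merely ignored.

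A smaller point: you implicitly use that an in-run token destroys at most $\alpha=\lceil|P|/\per(P)\rceil$ occurrences, which is correct but deserves a sentence (a window of length $|P|$ contains at most $\lceil|P|/p\rceil$ points of a step-$p$ progression). With the $r_1,r_2\le 1$ fact in hand, the rest of your outline (optimality of the per-run $\lfloor m/\alpha\rfloor$ packing, descending order on the max-heap for residuals $\ge 3$, and the interval greedy of \cref{alg:greedy} for the $\{1,2\}$ residuals) matches the paper's intent and is sound.
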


\begin{proof}
A pseudocode implementation of our algorithm is provided as \cref{alg:psurvive}.
The only lines that do not take $\cO(1)$ time are decorated with their time complexity. Recall that the preprocessing takes $\cO(n\log n)$ time. Thus, the time complexity follows from that $\mathcal{H}_P$ is computed in $\cO(\log n+\tau+k)$ time
(\cref{lem:clusters}) and that each while loop is executed $\cO(\tau+k)$ times, each of which needs $\cO(\log n)$ time. 

\begin{algorithm}\footnotesize
\caption{Check if a periodic substring of $S$ is $(\tau,k)$-resilient}\label{alg:psurvive}

\begin{algorithmic}[1]  
\Function{PeriodicResilient}{$i,j, l_u,r_u ,\tau, k,
\textsf{per}(S[i\dd j])$}
    \State $\mathcal{H}_P \gets \textsc{ComputeHP}(i,j,\tau,k)$ \Comment{See \cref{lem:clusters}; $\cO(\tau+k+\log n)$ time}
    \If{$\mathcal{H}_P = \varnothing$}
        \State {\bf return} \textsf{YES}
    \EndIf
    \State $\occ \gets r_u - l_u + 1$ \label{alg:periodic:line2}
    \State $\alpha \gets \lceil \frac{j-i+1}{\textsf{per}(S[i\dd j])} \rceil$ \Comment{Max no.~of occs that a token can destroy}
    \State  $\mathcal{H}_P^{(1,2)}\gets \varnothing$ \Comment{A subset of $\mathcal{H}_P$}
    \State $d\gets 0$ \Comment{Destroyed occurrences} 
    \State $t \gets k$ \Comment{Available tokens}
    \State $\mathcal{\mathcal{R}} \gets$ empty max-heap of positive integers  \Comment{With multiplicity counters}
    \State $(f,m) \gets$  the first   element in $\mathcal{H}_P$
    
    \While{ $(\occ -d \geq \tau) ~\text{and}~  (t>0) ~\text{and}~ (f,m) \neq nil$}    \Comment{Iterate over $\mathcal{H}_P$}
            \State $y \gets \min \{ t,\lfloor m/\alpha \rfloor \}$
            \State $d \gets d+y\cdot\alpha$
            \State $t \gets t- y $
            \If{$m \bmod \alpha \geq 3$} 
                \If {$m \bmod \alpha  \notin \mathcal{R}$ }
                \State Insert 
                $[m\bmod \alpha, 1]$ into $\mathcal{R}$ \Comment{$\cO(\log n)$ time}
                \Else  ~~Increment the  counter of $m \bmod \alpha$ in $\mathcal{R}$ by one \Comment{$\cO(\log n)$ time}
                \EndIf
        \ElsIf{$m \bmod \alpha = 1$~\text{or}~$m \bmod \alpha = 2$ }
            \State Insert $(f,m)$ into $\mathcal{H}_P^{(1,2)}$
        \EndIf    
        \State $(f,m) \gets$ the next element of $\mathcal{H}_P$ \Comment{$nil$ if we have processed all of  $\mathcal{H}_P$}\label{alg:periodic:line27}
    \EndWhile

    \If{$\occ -d \geq \tau ~\text{and}~ t>0$} 
         \While{$\mathcal{R} \neq \varnothing ~\text{and}~ \occ -d   \geq \tau ~\text{and}~ t > 0$}  \Comment{Iterate over $\mathcal{R}$} \label{alg:periodic:line21}
            \State  $[r,c_r] \gets  \mathcal{R}.\textsf{top}()$
            \State Pop $\mathcal{R}.\textsf{top}()$ \Comment{$\cO(\log n)$ time}
             \State $y \gets \min \{ t,c_r \}$
            \State $d \gets d+y\cdot r$ \label{alg:periodic:line24}
            \State $t \gets t- y$ \label{alg:periodic:line25}
        \EndWhile \label{alg:periodic:line26}
    \EndIf
         \State $\kappa \gets 0$ \label{alg:periodic:line30}\Comment{Max number of tokens that can each destroy two distinct occs} 
    \If{$\occ -d \geq \tau ~\text{and}~ t>0$} 
        \State $z \gets -\infty$ \Comment{Position of rightmost processed and surviving occ} 
         \ForAll{$(f,m) \in \mathcal{H}_P^{(1,2)}$} \Comment{In increasing order}
            \If{$f - z < j-i+1$} \Comment{Overlap between two runs} \label{alg:periodic:line34}
                 \State $\kappa \gets \kappa + 1$ \Comment{Destroy a pair of occs} \label{alg:periodic:line35}
                \If{$m \bmod \alpha = 2$}\label{alg:periodic:line36}
                    \State $z \gets f +(m-1) \textsf{per}(S[i\dd j])$\label{alg:periodic:line37}
                \EndIf
            \ElsIf{$m \bmod \alpha = 2$}\label{alg:periodic:line38}
                    \State $\kappa \gets \kappa + 1$\label{alg:periodic:line39} \Comment{Destroy the remaining pair of occs of $(f,m)$}
            \Else  \label{alg:periodic:line40}
            ~~\State $z \gets f +(m-1) \textsf{per}(S[i\dd j])$\label{alg:periodic:line41}
            \EndIf
        \EndFor
    \EndIf

        \If{$\occ - d - \min\{\kappa,t\} - \min\{\textsf{occ}-d,t\} \geq \tau$}\label{alg:periodic:line42}
        \State {\bf return} \textsf{YES}
        \EndIf
        \State {\bf return} \textsf{NO}
\EndFunction
\end{algorithmic}
\end{algorithm}

The rest of the proof is for correctness.
Our algorithm first constructs $\mathcal{H}_P$ and returns \textsf{YES} if it is $\varnothing$.
In the remainder of this section, we assume that this is not the case.

Our algorithm maintains the number $t$ of available tokens and a counter $d$ of the maximum number of occurrences of $P$ that can be destroyed with the number of tokens that have been used so far.
In Lines~\ref{alg:periodic:line2}-\ref{alg:periodic:line27}, we use tokens to destroy $\alpha$ occurrences of $P$ as long as this is possible, while 
populating
max-heap $\mathcal{R}$ and 
$\mathcal{H}^{(1,2)}_P$.
Then, we start popping elements from $\mathcal{R}$
(Lines~\ref{alg:periodic:line21}-\ref{alg:periodic:line26}).
If $\mathcal{R}.\textsf{top}()= [r,c_r]$, we use $y=\min\{t, c_r\}$ tokens to destroy a total of $y \cdot c_r$ occurrences of~$P$; the values of $d$ and $t$ are updated accordingly (Lines~\ref{alg:periodic:line24}-\ref{alg:periodic:line25}).
Elements are popped from~$\mathcal{R}$ as long as $|\occ_S(P)|-d \geq \tau$ and $ t>0$ (Line~\ref{alg:periodic:line21}).
If $\mathcal{R}$ is empty, $|\occ_S(P)|-d \geq \tau$, and $t>0$, then the remaining (one or two) occurrences of $P$
in each of the runs of type one and two are processed with a greedy procedure, analogously to 
Algorithm 1.
This is done as follows (Lines~\ref{alg:periodic:line30}-\ref{alg:periodic:line41}). 
We process the elements of $\mathcal{H}^{(1,2)}_P$ in a left-to-right manner, and maintain a counter~$\kappa$ for the maximum number of pairs of remaining occurrences that can be destroyed.
When we process a run $(f,m)$, we check whether this run overlaps with an occurrence of $P$ in an already processed run of  $\mathcal{H}^{(1,2)}_P$ that has not been destroyed yet:
\begin{itemize}
\item If that is the case, we increment $\kappa$ (Lines~\ref{alg:periodic:line34}-\ref{alg:periodic:line35}).
Additionally, if $m \bmod \alpha = 2$, we set the starting position of the rightmost occurrence that has been processed but has not been destroyed to $f +(m-1) \textsf{per}(S[i\dd j])$ (Lines \ref{alg:periodic:line36}-\ref{alg:periodic:line37}). 
\item Otherwise, when $m \bmod \alpha = 2$, we destroy the two remaining occurrences of $(f,m)$ (Lines \ref{alg:periodic:line38}-\ref{alg:periodic:line39}); in the remaining case when $m \bmod \alpha = 1$,  we set the starting position of the rightmost occurrence that has been processed but has not been destroyed to $f +(m-1) \textsf{per}(S[i\dd j])$ (Lines \ref{alg:periodic:line40}-\ref{alg:periodic:line41}).
\end{itemize}
Observe, that as we have $\textsf{occ}-d$ occurrences remaining when the greedy procedure commences (if it does), the total number of occurrences that we can destroy with the $t$ available tokens is $\min\{\kappa,t\} + \min\{\textsf{occ}-d,t\}$. Thus, the test of Line~\ref{alg:periodic:line42} allows us to conclude whether $P$ is $(\tau,k)$-resilient.

The correctness of \textsc{PeriodicResilient} is based on the fact that it first destroys the batches of occurrences of~$P$ of size $\alpha$; then it processes $\mathcal{R}$, destroying numbers of occurrences of $P$ greater than~2 (and less than $\alpha$) \emph{in decreasing order}, and then it processes the remaining occurrences (from $\mathcal{H}^{(1,2)}_P$) using a variant of 
Algorithm 1;
the correctness of the latter follows from 
\cref{lem:periodic}.    
\end{proof}

\paragraph*{Wrapping up.}~We obtain our main result by plugging \cref{the:2-period} and \cref{lem:aperes,lem:peres} to \cref{lem:reduction}.
However, to shave a $\log n$ factor, we replace the max-heap implementation in the proof of \cref{lem:peres} with \emph{global sorting} and \emph{avoid recomputing the set} $\mathcal{H}_P$ in each step of a binary search along an edge when \textsc{PeriodicResilient} is called; see below for the proof.

\begin{restatable}{theorem}{thmmain}\label{the:main}
For any string $S\in \Sigma^n$, with $|\Sigma|\geq 4$, and integers $\tau, k \in [1,n]$, the \textsc{$(\tau,k)$-Resilient Pattern Mining} problem can be solved in $\cO(n \log n)$ time using $\cO(n)$ space.
\end{restatable}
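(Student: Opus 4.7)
The plan is to plug \cref{the:2-period,lem:aperes,lem:peres} into \cref{lem:reduction} and then shave the residual $\log n$ factor from the periodic case via two local optimizations. Direct substitution already gives $\cO(n\log^2 n)$ time and $\cO(n)$ space: the reduction overhead is $\cO(n\log n)$, the $\cO(n\log n/(\tau+k))$ constant-time \textsc{IsPeriodic} calls contribute $\cO(n\log n/(\tau+k))$, the same number of $\cO(\tau+k)$-time \textsc{AperiodicResilient} calls contribute $\cO(n\log n)$, whereas the $\cO((\tau+k)\log n)$-time \textsc{PeriodicResilient} calls contribute $\cO(n\log^2 n)$. Only the periodic case needs to be improved.

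The first optimization avoids recomputing $\mathcal{H}_P$ at every step of the $\cO(\log n)$-step binary search performed along a suffix-tree edge $(u,v)$. All probes along such an edge share the same occurrence set in $S$ (the leaves below $v$), and, by \cref{lem:periodic}, each of these occurrences lies in at most two runs of any fixed period. I would therefore call \cref{lem:clusters} only once per edge in $\cO(\log n + \tau+k)$ time to compute $\mathcal{H}_{\textsf{str}(v)}$, and for every shorter probe $P$ derive $\mathcal{H}_P$ in $\cO(\tau+k)$ time by updating only the per-run occurrence counts $m$, each a simple linear function of $|P|$ given the already-known run endpoints and period. Summed over the $\cO(n/(\tau+k))$ edges that trigger \textsc{PeriodicResilient} and their $\cO(\log n)$ probes, this contributes $\cO(n\log n)$ time in total.

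The second optimization replaces the per-call max-heap $\mathcal{R}$ of \cref{lem:peres} with global sorting, so as to remove the $\cO(\log n)$ cost of each heap operation. The total number of remainders ever fed into such heaps across all calls is $\cO(n\log n)$, and they are nonnegative integers less than $n$; in the word-RAM model with word size $\Theta(\log n)$, radix-sorting them lexicographically by (call identifier, $-$remainder) runs in time linear in the number of items, after which each call streams its contiguous pre-sorted slice in amortized $\cO(1)$ time per element. Carried out in a streaming, per-edge fashion so that no more than the current edge's auxiliary data is ever kept in memory, this drops the heap contribution of \textsc{PeriodicResilient} to $\cO(n\log n)$ while preserving the $\cO(n)$ space bound, and, together with the first optimization, yields the claimed time and space bounds. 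The main obstacle I foresee is formalizing the first optimization: the case analysis showing that no new run ever needs to enter $\mathcal{H}_P$ as $|P|$ shrinks along an edge, and that each updated count $m$ can be read off in $\cO(1)$ from the fixed run endpoints, requires a careful argument leveraging \cref{lem:periodic} and the invariant that all probes along a suffix-tree edge share a single occurrence set.
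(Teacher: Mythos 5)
Your high-level plan matches the paper's exactly — plug the component lemmas into the reduction and then kill the extra $\log n$ factor in the periodic case by (i) not recomputing $\mathcal{H}_P$ at every probe of a binary search and (ii) replacing per-call heaps with batch sorting. Both optimizations are the right ones, but each has a gap in the form you propose.

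For the first optimization, the statement that $m$ is ``a simple linear function of $|P|$ given the already-known run endpoints and period'' is not quite the invariant you need, and it does not cover the critical case. Because all probes $P$ on a fixed edge $(u,v)$ of $\textsf{ST}(S)$ share the occurrence set of $\textsf{str}(v)$, when $\textsf{per}(P)=\textsf{per}(\textsf{str}(v))$ the pair $(f,m)$ for each run is literally the same for every probe (the ``last'' occurrence of the shorter $P$ in a run never actually exists in $S$, by maximality of the run); $m$ does not drift with $|P|$. More importantly, when $\textsf{per}(P)<\textsf{per}(\textsf{str}(v))$, the runs of period $\textsf{per}(P)$ are not the runs in $\mathcal{H}_{\textsf{str}(v)}$ at all, so ``updating $m$ from fixed run endpoints'' is simply wrong for those probes; the fix is to observe (as the paper does in \cref{lem:combi}) that in this case $P$ has no two occurrences at distance $\textsf{per}(P)$, so $\mathcal{H}_P$ consists only of singleton runs and $P$ can be handed to \textsc{AperiodicResilient}. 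You flag that a ``careful argument'' is needed, but the dichotomy on $\textsf{per}(P)$ versus $\textsf{per}(\textsf{str}(v))$ is exactly the missing piece, not a formality.

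For the second optimization, per-edge batching does not preserve $\cO(n)$ space. A single edge $(u,v)$ below the cut can have $f_v=\Theta(n)$ leaf descendants with $\textsf{sd}(v)=\Theta(n)$ as well (take, e.g., $\tau+k=\Theta(n)$ with $f_v\approx n/2$), so its binary search has $\Theta(\log n)$ probes, each producing up to $\Theta(\tau+k)=\Theta(n)$ heap entries; the edge's ``auxiliary data'' is then $\Theta(n\log n)$, not $\cO(n)$. Sorting all $\cO(n\log n)$ entries at once also obviously breaks the space bound. The paper instead batches \emph{per step} of the (lock-step) binary searches across all cut edges: each such batch has $\cO(n)$ entries, is bucket-sorted in $\cO(n)$ time, and there are $\cO(\log n)$ batches — giving $\cO(n\log n)$ time and $\cO(n)$ space. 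Replacing your per-edge streaming with per-step streaming closes this gap.
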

\begin{proof}
    In Phase 1, we perform a DFS of $\textsf{ST}(S)$, using AVL trees to maintain the sorted list of occurrences of $\textsf{str}(u)$ for each node $u$ of $\textsf{ST}(S)$; this requires $\cO(n\log n)$ time and $\cO(n)$ space; see \cref{app:AVL_trees}.
For each node of $\textsf{ST}(S)$ with at least $\tau + k$ descendants, we first call \textsc{IsPeriodic} which costs $\cO(1)$ time 
 (\cref{the:2-period}) 
and then call either \textsc{AperiodicResilient} or \textsc{PeriodicResilient}.
Each call to \textsc{AperiodicResilient} costs $\cO(\tau +k)$ time 
(\cref{lem:aperes}), 
while each call to function \textsc{PeriodicResilient} costs $\cO((\tau +k)\log n)$ time (\cref{lem:peres}). Therefore, the total time for Phase~1 is $\cO(n \log n)$.

In Phase 2, we perform a DFS of $\textsf{ST}(S)$ maintaining AVL trees as in Phase 1.
For each node $v$, such that edge $(u,v)$ is on the preliminary cut computed in Phase 1, we compute the sorted list of occurrences of $\textsf{str}(v)$ and then we perform binary search for this edge.
In total, we perform $\cO(n\log n/ (\tau+k))$ calls to each of the functions \textsc{IsPeriodic}, \textsc{AperiodicResilient}, and \textsc{PeriodicResilient}.
All calls to the first two functions take $\cO(n\log n)$ time and $\cO(n)$ space due to 
\cref{the:2-period} and \cref{lem:aperes}.
We show that  all calls to the function \textsc{PeriodicResilient} cost $\cO(n \log n)$ time and $\cO(n)$ space.
If we simply called \textsc{PeriodicResilient} each time during the binary searches, we would obtain an algorithm running in  $\cO(n \log^2 n)$ time  (see \cref{lem:peres}).
We achieve a better running time by (i) employing the following combinatorial lemma; and (ii) avoiding the usage of heaps and rather sorting globally.

\setcounter{lemma}{14}
\begin{lemma}\label{lem:combi}
Let $(u,v)$ be an edge of $\textsf{ST}(S)$, $\ell \in [\textsf{sd}(u),\textsf{sd}(v)-1)$ be an integer, and $P = \textsf{str}(v)[0 \dd \ell]$.
If $P$ is periodic and $\mathcal{H}_P$ does not only contain elements of the form $(\star, 1)$, then $\mathcal{H}_P = \mathcal{H}_{\textsf{str}(v)}$.
\end{lemma}
\begin{proof}
If $\textsf{per}(P) < \textsf{per}(\textsf{str}(v))$, then $P$ does not have any two occurrences in $S$ at distance $\textsf{per}(P)$ as otherwise $\textsf{per}(\textsf{str}(v))$ would be at most $\textsf{per}(P)$.
Hence, all elements of $\mathcal{H}_P$ are of the form $(\star, 1)$.

In the remaining case, we have $\textsf{per}(P) = \textsf{per}(\textsf{str}(v))$.
Note that any occurrence of $\textsf{str}(v)$ implies an occurrence of $P$ at the same position.
Since $\occ_S(P) = \occ_S(\textsf{str}(v))$, the statement follows.
\end{proof}

We are now ready to explain how we can perform all calls to \textsc{PeriodicResilient} during the binary searches of Phase~2 more efficiently than by simply employing \cref{lem:peres}.
Note that \cref{alg:psurvive}
(underlying 
\cref{lem:peres})
 only incurs logarithmic factors due to the call to 
 \cref{lem:clusters}
 and to inserting/popping elements to/from a max-heap.
The $\log n$ factor from each heap's operation can be avoided by sorting all elements to be inserted to heaps (in each step of the binary search) over all edges of the cut as a batch in $\cO(n + (\tau+k) \cdot n/ (\tau+k)) = \cO(n)$ time in total using bucket sort.
As we have  $\cO(\log n)$ batches (one for each step of the binary searches), the total time required for sorting is $\cO(n \log n)$.
Moreover, due to 
\cref{lem:combi}, we need to call 
\cref{lem:clusters}
at most once for each considered edge $(u,v)$ of $\textsf{ST}(S)$ as we can reuse the computed runs in $\mathcal{H}_{\textsf{str}(v)}$ (if $\textsf{str}(v)$ is periodic) for all prefixes of $\text{str}(v)$ of lengths more than $\textsf{sd}(u)$ and with period equal to $\textsf{per}(\text{str}(v))$.
Although there might be a prefix that is periodic with period $q < \textsf{per}(\text{str}(v))$, as such a prefix does not have any two occurrences at positions that are $q$ positions apart, we can treat it as an aperiodic substring using \cref{alg:handle-aperiodic}.
We thus obtain the main result of this work.
\end{proof}

\section{Substituting with Letters from the Alphabet}\label{sec:alphabet}

In this section, we show that we can effectively destroy the occurrences of any substring $P$ of $S\in \Sigma^n$
by using only letters from $\Sigma$, with $|\Sigma|\geq 4$,
it suffices to show that we can replace substitutions with $\#$s with substitutions with letters from $\Sigma$ in a way that does not create any new occurrences of $P$.
Before proving this (\cref{lem:sigma}), we prove a few auxiliary lemmas.

\setcounter{lemma}{6}
\begin{lemma}\label{lem:1occ}
Let $T=U\cdot P\cdot V$ be a string over an alphabet $\Sigma$, such that $|\Sigma|\geq 3$,
$P$ is a non-empty string occurring only once in $T$, and $|U|,|V| < |P|$.
There exists a string $P'$ over~$\Sigma$ such that $d_H(P,P')=1$ and $T'=U\cdot P'\cdot V$ has no occurrence of $P$.
\end{lemma}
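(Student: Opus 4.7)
My plan is to exhibit a position $p \in [0,|P|-1]$ and a letter $c \in \Sigma\setminus\{P[p]\}$ such that the string $P'$ obtained by replacing $P[p]$ with $c$ has the required property. Let $q := |U|+p$ denote the single modified position in $T' = U\cdot P'\cdot V$. First I would argue that any occurrence of $P$ in $T'$ at a position $i \neq |U|$ must contain $q$ in its window $[i,i+|P|-1]$: otherwise $T'[i\dd i+|P|-1]=T[i\dd i+|P|-1]$, and $i$ would be a second occurrence of $P$ in $T$, contradicting uniqueness. Consequently, such a bad $i$ arises in $T'$ if and only if $T[i\dd i+|P|-1]$ is at Hamming distance exactly one from $P$, the unique mismatch sits at index $q-i$ of $P$, and $c=P[q-i]$.

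Next I would set up a counting/pigeonhole argument. Let $N := \{i \in [0,|T|-|P|]\setminus\{|U|\} : d_H(T[i\dd i+|P|-1], P) = 1\}$ and, for each $i\in N$, let $M(i)$ be the unique mismatch index in $P$. Since the only alignment at Hamming distance $0$ from $P$ is the one at $|U|$, and the total number of alignments of $P$ in $T$ is $|U|+|V|+1$, we obtain $|N| \leq |U|+|V| \leq 2|P|-2$. The key observation is that each $i\in N$ is a dangerous candidate for \emph{exactly one} value of $p$, namely $p = i+M(i)-|U|$ whenever this lies in $[0,|P|-1]$. Hence the sum over all $p$ of the number of dangerous candidates is at most $|N| < 2|P|$, and since there are $|P|$ possible values of $p$, by pigeonhole some $p^* \in [0,|P|-1]$ admits at most one dangerous candidate.

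For such a $p^*$, the letters that $c$ must avoid are $P[p^*]$ (to force $P'\neq P$) together with $P[q-i]$ for the at-most-one dangerous $i$ (to prevent that candidate from becoming a real occurrence). These two values are distinct: indeed $T[q]=P[p^*]$, whereas the mismatch condition at the near-occurrence $i$ gives $T[q]\neq P[q-i]$. The forbidden set therefore has size at most $2$, so since $|\Sigma|\geq 3$ there is always some $c \in \Sigma$ outside it; the resulting $P'$ destroys the original occurrence at $|U|$ without creating any new occurrence in $T'$. I expect the main obstacle to be the charging step: verifying that each $i\in N$ contributes to at most one value of $p$, which is what turns the geometric constraint $|U|+|V|<2|P|$, via pigeonhole over the $|P|$ values of $p$, into the existence of a $p^*$ whose forbidden set fits within the $|\Sigma|\geq 3$ budget.
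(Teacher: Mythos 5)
Your proof is correct, and it rests on the same numerical facts as the paper's---at most $|U|+|V|\le 2|P|-2$ alignments of $P$ with $T$ other than the one at position $|U|$, and at least two admissible substituting letters per position of $P$ since $|\Sigma|\ge 3$---but it organizes the pigeonhole dually and constructively. The paper argues by contradiction at the level of modified strings: assuming every one of the $\ge 2|P|$ single-substitution strings $T'$ still contains $P$, it assigns each such $T'$ to the starting position of a residual occurrence and derives a collision between two distinct $T'$s, which is then shown impossible because both occurrences must span substitutions that disagree. You instead make the $|P|$ substitution positions $p$ the pigeonholes, charging each near-miss alignment $i$ (Hamming distance one from $P$, mismatch at index $M(i)$) to the at most one $p=i+M(i)-|U|$ it could threaten, and conclude that some $p^\star$ carries at most one charge; you also correctly note that the two resulting forbidden letters $P[p^\star]$ and $P[q-i]$ are automatically distinct since $T[q]=P[p^\star]\neq P[q-i]$, so $|\Sigma|\ge 3$ suffices. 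The two proofs are of comparable length, but yours produces the witness pair $(p^\star,c)$ explicitly rather than only establishing existence---a small but genuine gain if one ever wanted to realize the alphabet-substitution step algorithmically---and it makes the role of the $|\Sigma|\ge 3$ hypothesis fully transparent at the single chosen position.
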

\begin{proof}
    Towards a contradiction, suppose that $P$ occurs in $T'$, for all strings $T'=U\cdot P'\cdot V$ such that $d_H(P,P')=1$. Let $m=|P|$.
    Since $|\Sigma|\geq 3$ and we can hence choose to substitute with at least two different letters in each of the $m$ positions of $T$ in $[|U|, |UP|)$,
    there are at least $2m$ different strings $T'$ of the form $U\cdot P'\cdot V$ such that $d_H(P,P')=1$.
    We have $|T'|-m \leq 2m-2$ possible starting positions for $P$ in a string such a string $T'$.
    Since $2m-2<2m$, by the pigeonhole principle, there are two such strings, say $T'_1$ and $T'_2$, with an occurrence of $P$ starting at the same position, say $i$.
    This yields a contradiction, because the occurrence of $P$ must span the substitution in each of $T'_1$ and $T'_2$ while either the position of the substitution or the substituted letter is different.
\end{proof}

\begin{lemma}\label{lem:1occ-overlap}
Let $T=U\cdot V\cdot P$ be a string over an alphabet $\Sigma$, such that $|\Sigma|\geq 3$, $P$ is a non-empty string occurring only as a prefix and as a suffix of $V\cdot P$ in $T$, $|V| \in [1,|P|)$, and $|U| < |P|$.
There exists a string~$V'$ over $\Sigma$ such that $d_H(V,V')=1$ and $T'=U\cdot V'\cdot P$ has one occurrence of~$P$.
\end{lemma}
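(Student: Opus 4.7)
My plan is to adapt the pigeonhole scheme of \cref{lem:1occ}. First, I will observe that any single-letter substitution inside $V$ automatically destroys the prefix occurrence of $P$ at position $|U|$: since $V$ lies entirely inside that occurrence and the hypothesis (that $P$ occurs both as a prefix and as a suffix of $V\cdot P$) forces $P$ to have period $|V|$ with $V = P[0\dd|V|-1]$, any substituted letter necessarily differs from the corresponding letter of $P$. The suffix occurrence at $|U|+|V|$ is untouched, since it does not overlap $V$. Hence the task reduces to exhibiting some $V'$ with $d_H(V,V')=1$ that creates no new occurrence of $P$ at any $i\in\{0,\dots,|U|+|V|\}\setminus\{|U|+|V|\}$ in $T'=U\cdot V'\cdot P$.

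Next, I will rule out $i=|U|$ from the bad-position set: an occurrence of $P$ at $|U|$ in $T'$ would force $V'[0\dd|V|-1]=P[0\dd|V|-1]=V$, contradicting $V'\ne V$. The remaining candidates lie in $\{0,\dots,|U|+|V|\}\setminus\{|U|,|U|+|V|\}$, giving at most $|U|+|V|-1$ possibly-bad positions. By the same uniqueness argument as in \cref{lem:1occ}, each such position admits at most one single-letter substitution of $V$ producing an occurrence of $P$ there: two distinct substitutions sharing the same bad $i$ would either substitute at the same position and hence have to agree on the letter dictated by $P$, or would substitute at different positions, forcing $T$ itself to contain $P$ at $i$ and contradicting the hypothesis. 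Since the total number of candidate $V'$ is $|V|(|\Sigma|-1)\ge 2|V|$, pigeonhole produces a good $V'$ whenever $2|V|>|U|+|V|-1$, i.e.\ whenever $|V|\ge|U|$.

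The main obstacle I anticipate is the complementary regime $|V|<|U|$, where the crude position count can meet or exceed the count of candidate substitutions. To handle this I plan to sharpen the analysis by exploiting the forced structure: any bad $i<|U|$ must satisfy $U[i\dd|U|-1]=P[0\dd|U|-i-1]$, and any bad $i\in(|U|,|U|+|V|)$ must make $|U|+|V|-i<|V|$ a period of $P$, which by Fine and Wilf's theorem is highly restrictive. Combined with the hypothesis that $P$ occurs in $T$ only at $|U|$ and $|U|+|V|$, which both forbids $U$ from ending in too long a prefix of $P$ and prevents $P$ from being too periodic without violating the uniqueness of its two occurrences, these constraints should bound the number of simultaneously realizable bad positions strictly below $|V|(|\Sigma|-1)$, closing the pigeonhole in this regime as well.
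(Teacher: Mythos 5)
Your first step — observing that $V = P[0\dd |V|-1]$ (because $P$ is a prefix of $V\cdot P$), so that \emph{any} single substitution in $V$ destroys the prefix occurrence of $P$ at $|U|$, while the suffix occurrence at $|UV|$ survives untouched — is correct and matches the spirit of the paper's setup. Your pigeonhole argument (at most one bad single-letter substitution per candidate starting position, proved just as in \cref{lem:1occ}) is also sound. But, as you yourself flag, the count of candidate positions $\{0,\dots,|UV|\}\setminus\{|U|,|UV|\}$ has size $|U|+|V|-1$, and $2|V| > |U|+|V|-1$ only when $|V|\ge |U|$. The regime $|U|>|V|$ is left open, and the sketch you give for it (``these constraints should bound the number of bad positions strictly below $|V|(|\Sigma|-1)$'') is not an argument — it is a hope. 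In particular, Fine--Wilf need not apply because the two known periods $|V|$ and $|UV|-i$ can sum to more than $|P|$, and nothing you say rules out $U$ ending in a very long prefix of $P$ (which is precisely what would create many bad positions $i<|U|$).

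The paper closes this gap with an additional structural argument that you are missing: it first shows that \emph{no} substituted string $T'=U\cdot V'\cdot P$ with $d_H(V,V')=1$ can have a new occurrence of $P$ that overlaps the suffix occurrence at $|UV|$. The proof is a character-chasing argument: such an occurrence at $i$ would give $P$ a period $|UV|-i$, and combined with the period $|V|$ coming from the two occurrences in $T$, one derives $T'[j]=T[j]$ at the substituted position $j$, a contradiction. Once those positions are eliminated, the only candidate bad starting positions are those $i$ with $i+|P|\le |UV|$ and $i+|P|-1\ge |U|$, of which there are at most $|V|$ — and then your pigeonhole count $2|V|>|V|$ closes the proof unconditionally. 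Without some version of that overlap-elimination lemma, your argument does not go through in the $|U|>|V|$ case, so as written the proposal is incomplete.
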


\begin{proof}
We first prove that in any string $T'=U\cdot V'\cdot P$ such that $V'$ is over $\Sigma$ and $d_H(V,V')=1$, $P$ does not have any occurrence ending  at a position in $[|UV|,|T'|)$, i.e., overlapping the occurrence of $P$ at position $|UV|$.
Let us fix any two such strings $V'$ and $T'$ and consider a substitution at position $j\in[|U|,|UV|)$ of $T$ transforming $T$ to $T'=U\cdot V'\cdot P$,
such that $T'[j]  = b$ and $T[j]  = a$, for some letters $a \neq b$ from $\Sigma$.
Further, suppose toward a contradiction that $P$ has an occurrence in $T'$ starting at some position $i\in [0, j]$, where $i+|P| > |UV|$. Due to the overlaps of the occurrences of $P$ in each of $T$ and $T'$, $P$ has periods $|V|$ and $|UV| - i$. Since~$P$ occurs at position $i \in [0, j]$ of $T'$ and at position $|UV|$ of $T$ and $T'$, we have
$T'[j] = P[j - i] = T[|UV| + j - i] = b$.
Further, since $T[j] = a$ and $P$ also has a period $|UV| - i$, we have $T[j + (|UV|-i)] = T[|UV| + j - i] = a$; a contradiction.

We now assume that each string $T'= U\cdot V'\cdot P$, where $V' \in \Sigma^*$ and $d_H(V,V')=1$ has an occurrence of $P$ ending at some position in $[|V|,|UV|)$. Let $m = |P|$. Since $|\Sigma| \geq 3$, we can choose to substitute with each of at least two different letters in each of the $|V|$ positions of $T$ in $[|U|,|UV|)$ and we can thus generate at least $2|V|$ different strings $T'$ of the form $U\cdot V'\cdot P$ such that $d_H(V,V')=1$.
However, we have $|V| < m$ possible ending positions for~$P$ in~$T'$. Since $2|V| > m > |V|$, we have two strings $T'_1$ and $T'_2$, with an occurrence of $P$ starting at the same position, say $i$.
This yields a contradiction, because the occurrence of $P$ starting at position $i$ in each of the strings $T'_1$ and $T'_2$ must span the respective substitution while either the position of the substitution or the substituted letter is different.
\end{proof}

\begin{lemma}\label{lem:2occs}
Let $T=U\cdot X\cdot V$ be a string over an alphabet~$\Sigma$, such that $|\Sigma|\geq 4$, 
$X$ has a substring~$P$ occurring only as a prefix and as a suffix of $X$, $P$ occurs $h\geq 2$ times in~$T$, $|X|<2|P|$, and $|U|,|V|<|P|$.
There exists a string $X'$, such that $d_H(X,X')=1$ and $P$ occurs $h-2$ times in $T'=U\cdot X'\cdot V$. 
\end{lemma}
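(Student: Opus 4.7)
The plan is to substitute a single letter at a position $p$ in the overlap region $[|U|+|X|-|P|, |U|+|P|-1]$ of $T$. This region has length $o=2|P|-|X|\geq 1$, and any substitution there with a letter $a\neq T[p]$ automatically destroys both the prefix-$P$ occurrence (at position $|U|$) and the suffix-$P$ occurrence (at position $|U|+|X|-|P|$), since both cover $p$. The challenge is to choose $(p,a)$ so that the net change in the number of occurrences of $P$ in $T'$ is exactly $-2$: that is, either no other occurrence of $P$ in $T$ is destroyed and no new occurrence is created, or the collateral destructions are exactly compensated by creations.

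First, I would extract the structural consequences of the assumptions. Because $P$ is both a prefix and a suffix of $X$ with overlap $o$, $P$ has period $\pi=|X|-|P|$; and the assumption that $P$ occurs only at positions $0$ and $|X|-|P|$ of $X$ forces either $\per(P)=\pi$ or $\per(P)\nmid\pi$ (otherwise one could produce an extra occurrence of $P$ inside $X$). Any additional occurrence of $P$ in $T$ covering a position of the overlap must either start in $U$ (``case (a)'') or extend into $V$ (``case (b)''); each such occurrence, by its overlap with $P_L$ or $P_R$, imposes on $P$ a second period whose combination with $\pi$ satisfies the hypothesis of the Fine--Wilf theorem, so it must be a multiple of $\per(P)$. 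This pins down the possible starting positions of these additional occurrences to a controlled set.

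Next, I would run a pigeonhole argument in the style of \cref{lem:1occ} and \cref{lem:1occ-overlap}. There are $o(|\Sigma|-1)\geq 3o$ valid substitutions $(p,a)$, each yielding a distinct $T'$, and $|\occ_{T'}(P)| = h-D(p)+N(p,a)$, where $D(p)\geq 2$ counts the occurrences of $P$ in $T$ covering $p$ and $N(p,a)$ counts new occurrences in $T'$. A newly-created occurrence at position $q$ requires $T[q\dd q+|P|-1]$ to differ from $P$ at exactly one position, namely $p$, and $a$ to equal $P[p-q]$; hence each new occurrence location is produced by a unique $(p,a)$, analogously to the sharing-a-position argument in \cref{lem:1occ}. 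Assuming for contradiction that $D(p)-N(p,a)\neq 2$ for every valid $(p,a)$, the combined count of surviving and newly-created occurrences across all $T'$'s, bounded using the structural constraints on case-(a)/case-(b) occurrences and on near-occurrences of $P$, contradicts the target total $2o(|\Sigma|-1)$, forcing the existence of a good substitution.

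The main obstacle I expect is handling the case-(a) and case-(b) occurrences: when they exist, substituting in the overlap can destroy them as collateral, and the counting must either locate an overlap position avoiding them or balance their destruction with newly-created occurrences created by a careful choice of $a$. The strengthened hypothesis $|\Sigma|\geq 4$ (versus $|\Sigma|\geq 3$ in the previous two lemmas) is consistent with exactly this extra slack: it provides the third alternative letter that is typically needed to break ties when both case-(a) and case-(b) contribute to the destruction count.
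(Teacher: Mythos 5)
Your overall plan — substitute one letter in the overlap fragment $F=T[i\dd j]$ (where $i=|UX|-|P|$, $j=|U|+|P|-1$) and run a pigeonhole argument over the $|F|(|\Sigma|-1)$ valid $(p,a)$ pairs — is the same starting idea as the paper's. The decisive observation you state, that a new occurrence at a given position $q$ can be created by only one specific pair $(p,a)$, is also the core of the paper's final pigeonhole. However, as written, your argument has a concrete counting gap and handles collateral destruction differently from the paper in a way that is left undeveloped.

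\textbf{The pigeonhole does not close without a refinement.} The set of positions at which a new occurrence spanning the substitution could appear is contained in $[i-|P|+1, j]$, which has up to $3|P|-|X|-1$ elements, while the number of valid $(p,a)$ pairs is $|F|(|\Sigma|-1)=(2|P|-|X|)(|\Sigma|-1)$. For $|\Sigma|=4$ and $|X|$ close to $2|P|$ (small overlap) the target set is strictly larger than the set of pairs, so the injection ``each new-occurrence location is hit by a unique pair'' yields no contradiction. The paper avoids this by a two-step argument: a first fact shows that, for each fixed substitution position $x$, at most one of the $|\Sigma|-1$ replacement letters can create a new occurrence starting in $[\,|U|, i)$, so at least $2|F|$ of the pairs create \emph{no} new occurrence there; then a second, separate pigeonhole is applied only to new occurrences starting in $[\,|U|-|F|, |U|)\cup(i, j]$, a set of size exactly $2|F|-1<2|F|$. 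Without an analogue of that first restriction, the naive count does not work for all $|X|<2|P|$.

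\textbf{Collateral destruction is not ``balanced'' in the paper.} You propose allowing $D(p)>2$ and compensating with creations. The paper instead shows $D(p)=2$ outright: a third occurrence of $P$ covering the substituted position would produce three occurrences of $P$ starting within fewer than $|P|$ positions, contradicting Lemma~5 (the aperiodicity lemma). Whether your balancing route could be made rigorous is unclear, and nothing in the sketch pins it down. Also, the Fine--Wilf machinery you invoke to classify case-(a)/case-(b) occurrences is not used in the paper; the only periodicity reasoning there is a short local argument inside the first fact that identifies the single forbidden replacement letter at a given position. In short, you have the right opening move, but you are missing the key intermediate lemma (the paper's Fact~2) that makes the pigeonhole count tight, and you have not established that the substitution destroys exactly two occurrences.
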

\begin{proof}
Let $m:=|P|$. Further, let $i:= |UX| - m$ be the starting position of the second occurrence of $P$ in $T[|U| \dd |UX|-1]$, and $j:=|U|+m-1$ be the ending position of the first occurrence of $P$ in $T[|U| \dd |UX|-1]$.
Then, the overlap of the two occurrences of~$P$ that are fully contained in $X=T[|U| \dd |UX|-1]$ is precisely $F:=T[i\dd j]$.

\setcounter{fact}{1}
\begin{fact}\label{fact:forall}
For any position $x \in [i\dd j]$ of $T$, there exist $b,c \in \Sigma$ such that the string $T'$ obtained via a substitution at position~$x$ of~$T$ with either of $b$ or $c$ satisfies $\occ_{T'}(P) \cap [|U| , i)= \emptyset$.
\end{fact}
\begin{proof}

Consider a string $T'=T[0 \dd i-1]\cdot F'\cdot T[j+1 \dd |T|-1]$ with $d_H(F,F')=1$ and $T[x] = a \neq T'[x] \in \Sigma$.
Suppose that $P$ has an occurrence at a position $y \in [|U|, i)$ of $T'$.
Then, the assumed occurrence of $P$ in $T'$ and the occurrences of $P$ in $X$ imply that:
\begin{itemize}
\item $T'[|U| \dd |U|+x-y-1] = P[0 \dd x-y-1] = T'[y \dd x-1]$;
\item $T'[i + x - y + 1 \dd |UX|-1] = P[x-y+1 \dd m-1] = T'[x+1 \dd y+m-1]$.
\end{itemize}
See \cref{fact:persuf} for an illustration.

Without loss of generality, let us assume that $x-y \leq m - (x-y+1)$, that is, that the string in the second bullet point is longer than the string in the first bullet point (this case is depicted in \cref{fact:persuf}). 
Let
$Y=T'[x+1 \dd y+m-1] = P[x-y+1 \dd m-1]$.
The equalities above also imply that $Y$ is a suffix of $T'[x+1 \dd |UX|-1]= P [x-i+1 \dd m-1]$.
Therefore, $Y$ is both a prefix and a suffix of
\[Z := T[x+1 \dd |UX|-1] = P[x-i+1 \dd m-1],\]  implying that $Z$ has a period $i-y$.

Now, $T'[x \dd y+m-1]$ is a suffix of $P$. It is also a suffix of~$Z$ since $y+m-x \leq i-1+m-x = |Z|$.
Then, due to $x \geq i$ and our assumption that $x-y \leq m - (x-y+1)$, we have $i-y \leq x-y \leq y+m-x-1$
and hence $T'[x \dd y+m-1]$ has a non-trivial period~$i-y$.
This is only possible if $T'[x]=T[x+(i-y)]$.
Thus, since $\Sigma \geq 4$, the claim follows.
\end{proof}

\begin{figure}[htpb!]
\centering
\includegraphics[width=.7\linewidth]{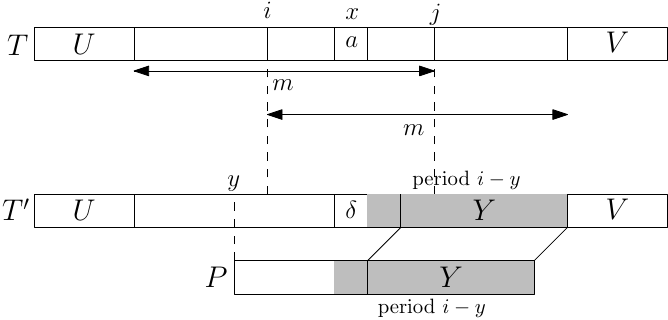}
\caption{The setting in \cref{fact:forall}, where fragments equal to $Z$ (with period $i-y$) are shaded.
In this setting, $P$ can only occur at position $y$ of $T'$ if letter $a$ at position $x$ is substituted with letter $\delta = T[x+(i-y)]$.}\label{fact:persuf}
\end{figure}

\begin{fact}\label{fact:exists}
There exists a string $T'=T[0 \dd i-1]\cdot F'\cdot T[j+1 \dd |T|-1]$ such that $d_H(F,F')=1$, $\occ_{T'}(P) \cap[0, |UX|) \subset \occ_{T}(P) \cap[0, |UX|)$.
\end{fact}
\begin{proof}
Towards a contradiction, suppose that this is not the case.
Since $|\Sigma| \geq 4$, due to \cref{fact:forall} we can choose at least two different letters to substitute each of the $|F|$ positions of $T$ in $[i,j]$ without creating a new occurrence of $P$ starting at a position in $[|U|,i)$.
We can generate at least $2|F|$ different strings $T'$ of the form specified in the claim's statement without any occurrence of~$P$ starting at a position in $[|U|,i)$.

In each of the constructed strings $T'$, each new occurrence of $P$ must then start at a position in $[|U|-|F| , |U|) \cup (i , j]$ of $T'$ and span the substitution in $F'$.
By the pigeonhole principle, we have two such strings $T'_1$ and $T'_2$, with an occurrence of $P$ starting at the same position, say $y$, in $[|U|-|F| , |U|) \cup (i , j]$, since the total size of this set is $2|F|-1$.
This yields a contradiction because the occurrences of $P$ at positions $y$ of $T'_1$ and $T'_2$ must span the substitution in each of $T'_1$ and $T'_2$ while either the position of the substitution or the substituted letter is different.
\end{proof}

To complete the proof, it suffices to note that a string $T'$, as specified in \cref{fact:exists}, has $h-2$ occurrences of $P$ since the substitution yielding string $T'$ from~$T$ destroys the two occurrences of $P$ that are fully contained in $T[|U| \dd |UX|-1]$, it does not destroy any other occurrence due to \cref{lem:aperiodic}, and it does not create any occurrence of $P$ due to \cref{fact:exists}.
\end{proof}

\begin{lemma}\label{lem:per}
Let $T$ and $T'$ be two strings in $(\Sigma \cup \{\#\})^*$; where~$\Sigma$ is an alphabet with $|\Sigma|\geq 3$ and $\# \notin \Sigma$, such that
$d_H(T, T')=1$, $T[i]\neq T'[i]=\#$ for some  $i\in[0,|T|)$, $|\occ_T(P)|- |\occ_{T'}(P)| = y>0$ for a periodic string $P$ of length~$m$,
and $|\occ_{T'[i-m+1 \dd i+m-1]}(\#)| = 1$.
We can substitute $T'[i]$ with a letter from $\Sigma$ without increasing the frequency of $P$.
\end{lemma}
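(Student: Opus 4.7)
The plan is to pin down the set of letters $c\in\Sigma$ whose substitution at position $i$ could create a new occurrence of $P$, and show this set has size at most~$2$, so that $|\Sigma|\geq 3$ guarantees a safe choice. Throughout, for $c\in\Sigma$ let $T_c$ be the string obtained from $T'$ by replacing $T'[i]=\#$ with $c$; the goal becomes to find $c$ with $|\occ_{T_c}(P)|\leq |\occ_{T'}(P)|$.

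First I would reduce the problem to a local analysis around position $i$. Since $T$ and $T'$ agree off position $i$ and $\#\notin\Sigma$, the $y$ destroyed occurrences are exactly those of $P$ in $T$ that contain position $i$; dually, no occurrence of $P$ in $T'$ contains position $i$. Because $T_c$ and $T'$ also agree off position $i$, every occurrence of $P$ in $T_c$ that is not already in $T'$ must start in $[i-m+1,i]$. The assumption $|\occ_{T'[i-m+1\dd i+m-1]}(\#)|=1$ ensures that, for every such starting position $j$, $T_c[j\dd j+m-1]$ differs from $T[j\dd j+m-1]$ only at position $i$.

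Next I would exploit the periodicity of $P$. Set $p:=\per(P)\leq m/2$. If $T_c[j\dd j+m-1]=P$ for some $j\in[i-m+1,i]$, then this fragment has period~$p$ and length $m\geq 2p$, and contains position $i$. Since its length is at least $2p$, at least one of $i-p$ or $i+p$ lies inside $[j,j+m-1]$. Using the period-$p$ relation $T_c[i]=T_c[i\pm p]$ (for whichever side belongs to the fragment) together with $T_c[i\pm p]=T[i\pm p]$ (as $i\pm p\neq i$), the letter $c$ is forced to equal $T[i-p]$ or $T[i+p]$. Hence the set of ``bad'' letters is contained in $\{T[i-p],T[i+p]\}$, of size at most~$2$. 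Because $|\Sigma|\geq 3$, any $c\in\Sigma\setminus\{T[i-p],T[i+p]\}$ works: $T_c$ contains no new occurrence of $P$, so the frequency does not exceed $|\occ_{T'}(P)|$.

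The main obstacle is identifying the correct structural constraint a new occurrence of $P$ imposes on $c$; once one observes that the occurrence is itself a period-$p$ block of length at least $2p$ containing position $i$, the period relation instantly pins $c$ down to one of two candidate letters. The only subtlety is the boundary cases $i<p$ or $i+p\geq |T|$, where one of $T[i\pm p]$ is undefined; in those cases the candidate set only shrinks, so the argument still goes through.
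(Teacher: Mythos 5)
Your proof is correct and follows essentially the same approach as the paper's: both observe that a new occurrence of $P$ created by setting $T'[i]:=c$ would be a period-$\textsf{per}(P)$ fragment of length $m\ge 2\,\textsf{per}(P)$ containing position $i$, hence containing $i-\textsf{per}(P)$ or $i+\textsf{per}(P)$, which forces $c\in\{T[i-\textsf{per}(P)],\,T[i+\textsf{per}(P)]\}$; since $|\Sigma|\ge 3$, a safe letter exists. The paper picks such a $c$ directly and derives a contradiction, while you first bound the ``bad'' set to size at most two and then choose outside it, and you also note the boundary cases explicitly — a cosmetic difference only.
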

\begin{proof}
We perform the substitution of $\#$ in $T'$ with any letter in $\Sigma \setminus \{T[i - \per(P)], T[i + \per(P)]\}$; let us denote the obtained string by $T''$.
This is possible because $|\Sigma| \geq 3$.
Any element $j \in \occ_{T''}(P) \setminus \occ_{T'}(P)$ must contain position $i$ and at least one of $i + \per(P)$ and $i - \per(P)$ (since $\per(P)\leq m/2$), and hence the period of $T''[j \dd j+m-1]$ does not divide $\per(P)$; a contradiction.
\end{proof}

We are now ready to put these lemmas together.

\begin{lemma}\label{lem:sigma}
Let $S$ be a string of length $n$ over $\Sigma$, with $|\Sigma|\geq 4$, and~$P$ be a substring of $S$.
If we can reduce the frequency of $P$ in~$S$ by $y \in \mathbb{Z}_+$ using $k$ substitutions with a letter $\#\notin\Sigma$, then we can also reduce the frequency of $P$ in~$S$ by at least $y$ using $k$ substitutions with letters from $\Sigma$.
\end{lemma}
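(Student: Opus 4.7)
The plan is to transform the $\#$-substituted string $S'$ into a string $S''\in\Sigma^n$ by replacing the $k$ occurrences of $\#$ one at a time with letters from $\Sigma$, while maintaining the invariant that $|\occ_{(\cdot)}(P)|$ does not increase at any step. Since only positions in the original substitution set $A$ are ever modified, this yields a string $S''$ with $d_H(S,S'')\leq k$ and $|\occ_{S''}(P)|\leq |\occ_{S'}(P)|=|\occ_S(P)|-y$, establishing the claim.

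Let $T_0:=S'$ and construct $T_1,\ldots,T_k=:S''$, where $T_j$ is obtained from $T_{j-1}$ by replacing a single $\#$ at some position $i$ with a letter $c\in\Sigma$. Because only $T_{j-1}[i]$ changes, the only potentially new occurrences of $P$ in $T_j$ must pass through position $i$, so it suffices to choose $c$ that creates none of them. We first try to find an \emph{isolated} $\#$: one whose window $[i-|P|+1,i+|P|-1]$ in $T_{j-1}$ contains no other $\#$. For such an isolated $\#$, the window consists of letters from $\Sigma$ apart from $i$ itself, so the earlier lemmas apply cleanly. If $P$ is periodic, \cref{lem:per} directly yields a valid $c\in\Sigma\setminus\{T_{j-1}[i-\per(P)],T_{j-1}[i+\per(P)]\}$. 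If $P$ is aperiodic, \cref{lem:aperiodic} bounds at $2$ the number of occurrences of $P$ that pass through $i$; a short case analysis based on this count, combined with the pigeonhole arguments underlying \cref{lem:1occ,lem:1occ-overlap,lem:2occs}, shows that, since $|\Sigma|\geq 4$, there is a choice of $c$ that neither restores the destroyed occurrences nor creates new near-occurrences through $i$.

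The main obstacle is the case in which \emph{no} $\#$ of $T_{j-1}$ is isolated, i.e., every remaining $\#$ has a neighbor $\#$ within distance $|P|-1$. Here I would process an entire maximal cluster of $\#$s simultaneously rather than one position at a time. The key combinatorial observation is that if a cluster of $\#$s destroys many pairwise-overlapping occurrences of $P$ (more than the aperiodic cap permitted by \cref{lem:aperiodic}), then $P$ is forced to be periodic with a period compatible with the spacing inside the cluster, and the occurrences of $P$ meeting the cluster form a run-based structure of the kind exploited in \cref{lem:per}. A simultaneous version of the argument of \cref{lem:per}---picking each replacement letter from $\Sigma$ while avoiding a small set of forbidden letters dictated by $\per(P)$ and by the neighboring letters of $S$---then replaces all the $\#$s in the cluster at once without creating occurrences of $P$ through any position of the cluster.

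The hardest step will be controlling the forbidden sets in this simultaneous replacement: each position in the cluster generates constraints of the form $c\neq P[\ell]$ for a few values of $\ell$, and one must argue, using the periodic structure and the bound $|\Sigma|\geq 4$, that these constraints can be satisfied jointly across the whole cluster. Once this clustered case is handled, the iterative replacement terminates in at most $k$ steps and yields the required $S''\in\Sigma^n$.
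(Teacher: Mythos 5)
Your plan diverges from the paper's proof at a structurally important point, and the divergence opens a gap that your sketch does not close. The paper does not process the given placement of $\#$s directly. It first \emph{normalizes}: it takes the minimum number $\kappa\leq k$ of $\#$s sufficient to destroy $y$ occurrences, observes that a minimal solution can always be chosen so that all $\#$s are pairwise at distance at least $|P|$ (hence no two $\#$s touch the same occurrence of $P$), and then, among those, selects the lexicographically smallest placement. These minimality choices are doing real combinatorial work: they guarantee that the window of occurrences affected by each $\#$ is disjoint from every other $\#$'s window, so the per-substitution lemmas (\cref{lem:1occ}, \cref{lem:1occ-overlap}, \cref{lem:2occs}, \cref{lem:per}) apply to each position independently, and the clustered case that dominates your proposal simply never arises. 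The remaining $k-\kappa$ substitutions are then filled in trivially (any letter other than $P[0]$, scanning left to right).

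Because you skip this normalization and instead iterate over the $\#$s of the given $S'$, you are forced to confront the non-isolated case, and that is exactly where the proposal breaks down. You acknowledge the simultaneous-cluster replacement is ``the hardest step'' and leave it as an intention rather than an argument; this is a genuine gap, not a detail. Worse, the ``key combinatorial observation'' you lean on --- that a cluster destroying many pairwise-overlapping occurrences forces $P$ to be periodic --- addresses the wrong configuration. A cluster of $\#$s spread over a length exceeding $|P|$ can destroy a \emph{chain} of occurrences of an \emph{aperiodic} $P$, where each occurrence overlaps its neighbors but non-adjacent ones are disjoint; by \cref{lem:aperiodic}, consecutive occurrences of an aperiodic $P$ are more than $|P|/2$ apart, which is entirely compatible with arbitrarily long chains. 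Your periodicity reduction does not touch such chains, so the aperiodic clustered case is left unhandled, and the proof as proposed does not go through without importing something like the paper's minimality preprocessing.
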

\begin{proof}
Let $m=|P|$ and $\kappa \leq k$ be the \emph{smallest} integer such that we can use $\kappa$ letters $\#\notin\Sigma$ to destroy $y$ occurrences of $P$ in $S$.
Then, there exist $\kappa \leq k$ positions in $S$ where we can place $\#$s such that the distance between any pair of $\#$s is at least $m$ (and hence no two of them touch the same occurrence of $P$).
Consider the collection~$\mathcal{U}$ of all the sets of positions that satisfy the above property.
For a set $U \in \mathcal{U}$, let $B_U$ be the bit-string of length $n$ such that $B_U[i]$ is set if and only if $i \in U$.
Let $Y := \argmin_{U \in \mathcal{U}} B_U = \{i_1, \ldots, i_{\kappa} \}$, where $i_1 < i_2 < \cdots < i_{\kappa}$, and denote the set of starting positions of occurrences of $P$ in $S$ that the letter $\#$ at position $i_j$ destroys by $I_j := \occ_S(P) \cap (i_j-|P|, i_j]$.

We perform $\kappa$ substitutions to $S$ with letters from $\Sigma$, such that the $j$-the substitution destroys the occurrences of $P$ at all positions in $I_j$ (and no others), while it does not create any new occurrences of $P$.
By the minimality of $\kappa$ and $B_Y$, for each $i_j$, if $P$ is aperiodic, we do not try to destroy a single occurrence of $P$ that overlaps with an occurrence of $P$ in both sides.
Hence, for each position $i_j$, we want to destroy one of the following:
\begin{itemize}
    \item one occurrence of $P$ that does not overlap with any other occurrence of $P$ (this can be done due to \cref{lem:1occ}); or
    \item if $P$ is aperiodic, one occurrence of $P$ that overlaps with another occurrence of $P$ only on one of its two sides (this can be done due to \cref{lem:1occ-overlap}); or
   \item two overlapping occurrences of $P$ (this can be done due to \cref{lem:2occs}); or
    \item if $P$ is periodic, some number of overlapping occurrences of $P$ (this can be done due to \cref{lem:per}). 
\end{itemize}

For the remaining $k-\kappa$ substitutions that we can perform (if any), we substitute the letters of $S$ from left to right by any letter other than $P[0]$ -- we do not decrease the budget if we substitute a letter at a position in $Y$.
This guarantees that these remaining substitutions do not create a new occurrence of $P$ starting within the updated prefix.
\end{proof}

\section{Related Work}\label{sec:related}
There has been much  interest in resilience notions for many  fundamental problems. One such notion is \emph{sensitivity} (and average sensitivity) which, informally stated, measures the difference between the output of an algorithm after its input is perturbed by adding or removing \emph{a single} element. This notion has been employed on graph~\cite{sicomp23}, clustering~\cite{sensclust,sensclust2}, and learning~\cite{senslearn} problems and several algorithms based on it have been proposed. There are also resilience notions specifically developed for clustering~\cite{clustresnotion1,kddres,perturbationstability,DBLP:conf/approx/ChekuriG18}. 
For example,~ \emph{stability} in~\cite{clustresnotion1} is applied to graph clustering and requires  the interesting instances to be only those for which small perturbations in the data do not change the optimal partition of the graph. $\gamma$-\emph{Resilience}  in~\cite{kddres} is applied to $k$-clustering~\cite{kddres} and, informally, requires an algorithm to return \emph{similar}  solutions on any two inputs that are \emph{close}. 
Our resilience notion in \cref{def:resilient} differs from all existing ones in that it: (1) applies to substrings of a string; and (2) it considers a given number of possible changes to the string. 
Also, it differs from \emph{differential privacy}~\cite{diffpriv} which 
imposes a bound on how much the probability of an algorithm's output can change when applied to any two datasets that differ by an element. 
 Our work is somewhat related to works for mining frequent approximate sequential patterns~\cite{maxcfp,reputer,DBLP:conf/icdm/ZhuYHY07} and fault-tolerant patterns~\cite{DBLP:conf/kdd/PoernomoG09a}. 
The former works consider changes to the patterns while we consider changes to the input string. The latter work mines \emph{dense} sets from set-valued data, while we mine substrings using a different resilience notion. 

\section{Experimental Evaluation}\label{sec:experiments}

\paragraph*{Algorithms.} 
As no existing algorithm can deal with \RPM, we compared the suffix tree and enhanced suffix array implementations of the algorithm in \cref{sec:fast-algo}, \RPMST and \RPMESA, to \BASELINE an implementation of the algorithm in \cref{sec:Baseline}. Also, in our  clustering case study, we compared a clustering algorithm~\cite{DBLP:journals/tkde/WuZLGZFW23} using $(\tau,k)$-resilient substrings  mined by any of our algorithms as features to: (I) the same clustering algorithm that uses $\tau$-frequent substrings as features instead; and (II) 
a widely-used frequency-based clustering algorithm~\cite{DBLP:journals/bioinformatics/VingaA03}, which we will denote by \FC. 

\begin{table}[t]
\caption{Dataset characteristics and values of parameters used. The default values are in bold within brackets.}
\label{table:dataset}
\centering
\resizebox{1.01\columnwidth}{!}{
\begin{tabular}{|c||c|c|c|c|}  
 \hline
 {\bf Dataset} & {\bf Length} $n$ & {\bf Alphabet Size} $|\Sigma|$ & {\bf Frequency} $\tau$ & {\bf Number of Substitutions} $k$ \\ [0.5ex] 
 \hline\hline
 \dna & 209,715,200 & 16 & [$10^1$,$10^6$] ($\mathbf{10^4}$) & [$10^1$,$10^6$] ($\mathbf{10^2}$) \\
 \english & 209,671,447 & 225 & [$10^1$,$10^6$] ($\mathbf{10^4}$) & [$10^1$,$10^6$] ($\mathbf{10^2}$) \\
 \proteins & 209,715,200 & 25 & [$10^1$,$10^6$] ($\mathbf{10^4}$) & [$10^1$,$10^6$] ($\mathbf{10^2}$) \\
 \sources & 209,714,417 & 230 & [$10^1$,$10^6$] ($\mathbf{10^4}$) & [$10^1$,$10^6$] ($\mathbf{10^2}$) \\ 
 \xml & 209,715,200 & 96 & [$10^1$,$10^6$] ($\mathbf{10^4}$) & [$10^1$,$10^6$] ($\mathbf{10^2}$) \\ 
 \boost & $190,898,317$ & $88$ & [$4$,$50$] ($\mathbf{4}$) & [$4$,$50$] ($\mathbf{4}$) \\
 \wiki & $629,145,600$ & $194$ & [$20$,$70$] ($\mathbf{50}$) & [$20$,$70$] ($\mathbf{50}$) \\
 \hline
 \end{tabular}
 }
\end{table}

\paragraph*{Datasets.} 
We used $5$ real-world, benchmark  datasets from~\cite{pizzachili}; see \cref{table:dataset}. 
\dna contains genomic data, \english contains English text, \proteins protein data, \sources  
source code, and \xml bibliographic information on major computer science venues.  
We also considered two popular versioned datasets following~\cite{repcorpus}, which were also used in~\cite{rindex}: \boost, which consists of $10,000$ versions of the boost library in GitHub; and 
\wiki which consists of $9,271$ versions of Wikipedia’s English Einstein page. Each version was treated as a different string. 
As expected by its time complexity, \BASELINE was not able to handle these datasets. Thus, we report results on prefixes of \dna (the results on prefixes of the other datasets were analogous). We also used two genomic datasets,  
\EBOL and \COR, from~\cite{ligen} in our clustering experiments. Each dataset is a collection of viral genomes and thus has a ground truth clustering.    

\paragraph*{Setup.}~We report running time and peak memory usage.
We also report two quality measures. The first is 
the ratio between the size of the set of the $(\tau, k)$-resilient substrings and that of the set of the $\tau$-frequent substrings.
We refer to  this ratio as \RFR (for Resilient to Frequent Ratio). Since any $(\tau,k)$-resilient  substring is also $\tau$-frequent for the same $\tau$, \RFR measures the ratio of the $\tau$-frequent substrings that are also $(\tau,k)$-resilient.
The second measure is used for versioned datasets and is denoted by \LR.
For an arbitrary set~$Y$ of $\tau$-frequent substrings (note, $(\tau,k)$-resilient substrings are also $\tau$-frequent) and a dataset version $V$, we define:
\[\LR(Y,V) : = |\{F \in Y : F \text{ is not $\tau$-frequent in $V$} \}| / |Y| .\]

All experiments 
run on an AMD EPYC 7282 CPU with 252 GB RAM. All methods were implemented in \texttt{C++}. 

\begin{figure*}[!ht]
    \centering
    \begin{subfigure}{0.45\textwidth}
        \includegraphics[width=\textwidth]{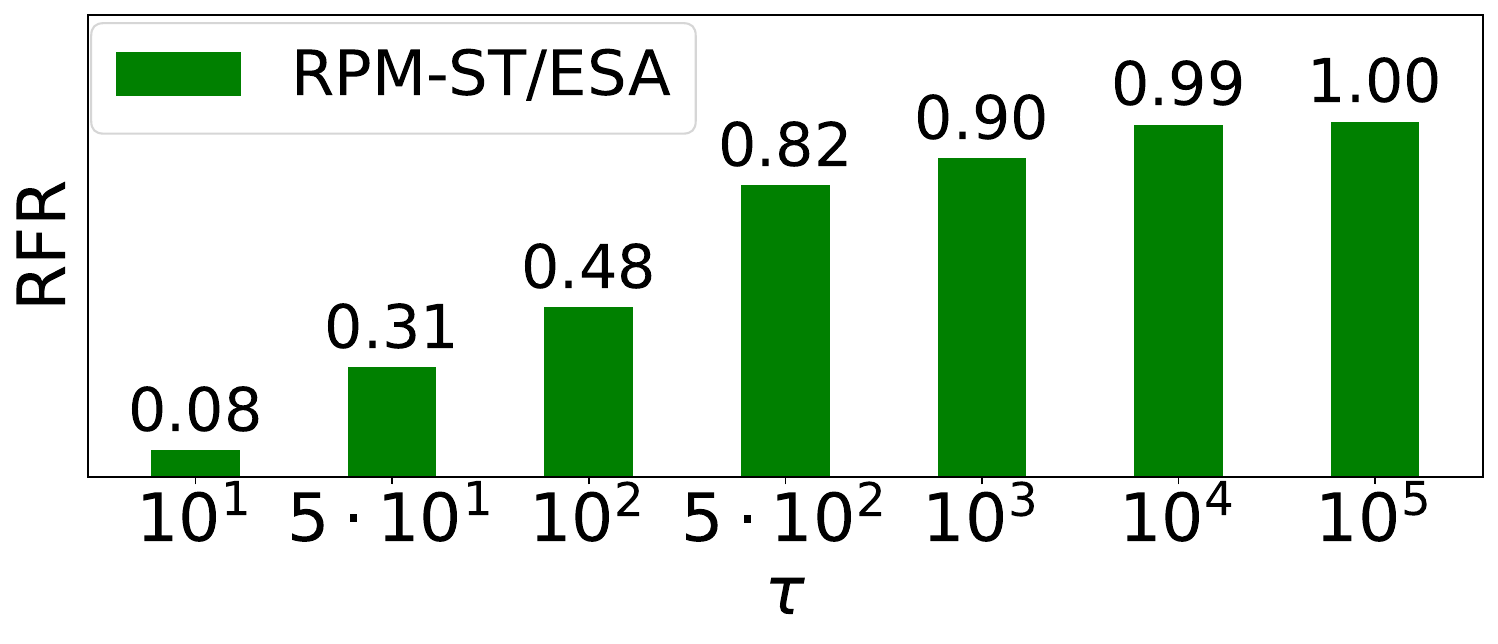}
        \caption{\dna}
    \end{subfigure}
    \begin{subfigure}{0.45\textwidth}
        \includegraphics[width=\textwidth]{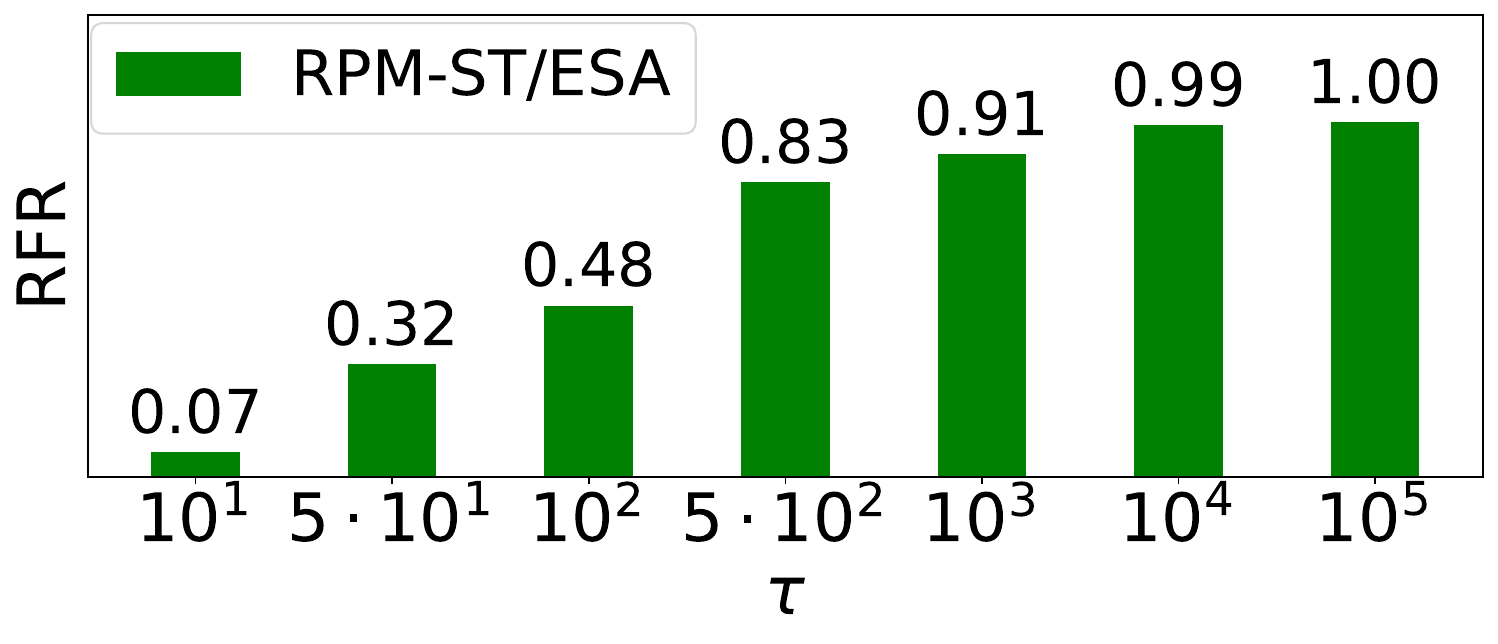}
        \caption{\english}
    \end{subfigure}
    \begin{subfigure}{0.45\textwidth}
        \includegraphics[width=\textwidth]{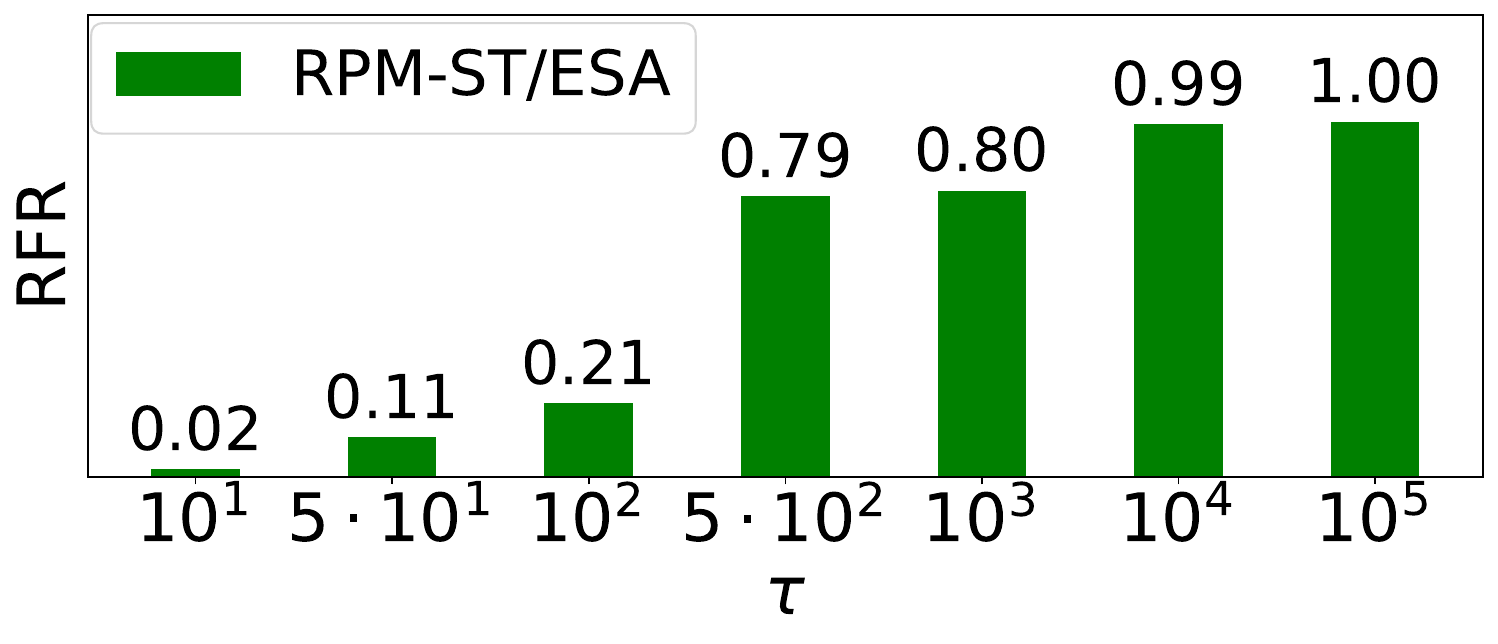}
        \caption{\proteins}
    \end{subfigure}
    \begin{subfigure}{0.45\textwidth}
        \includegraphics[width=\textwidth]{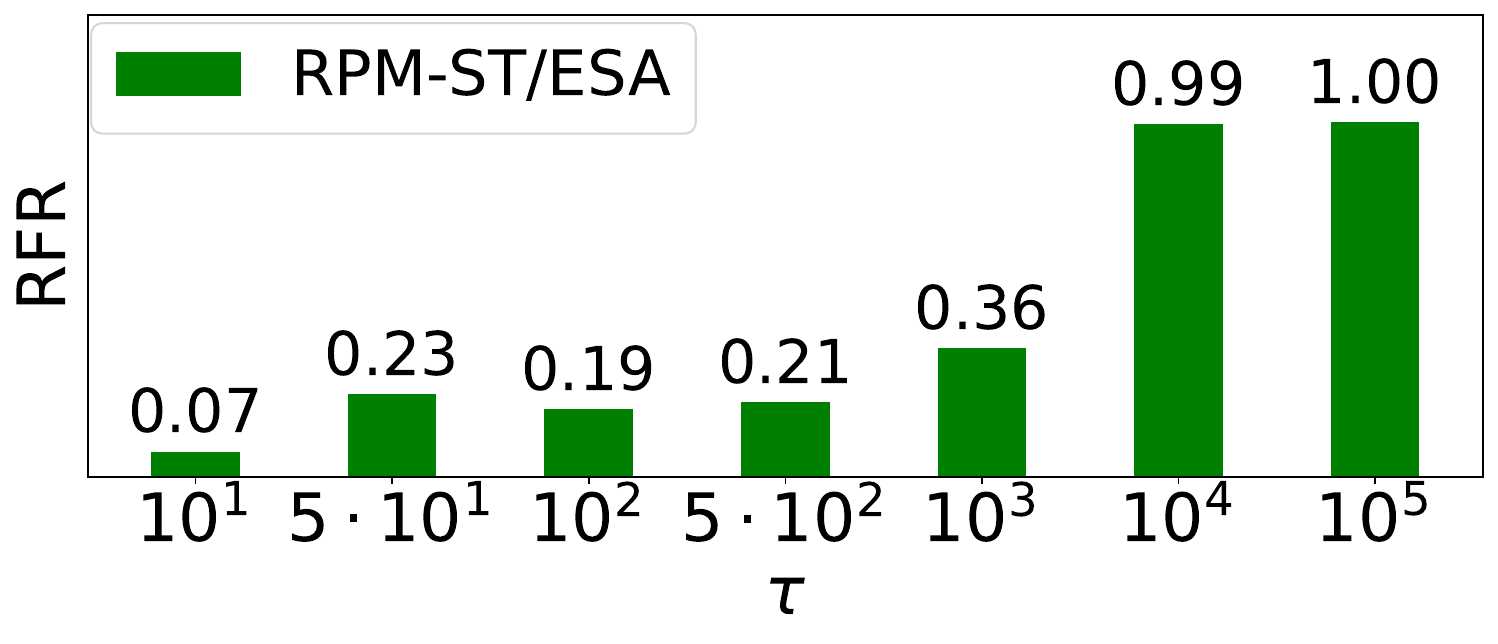}
        \caption{\sources}
    \end{subfigure}
    \begin{subfigure}{0.45\textwidth}
        \includegraphics[width=\textwidth]{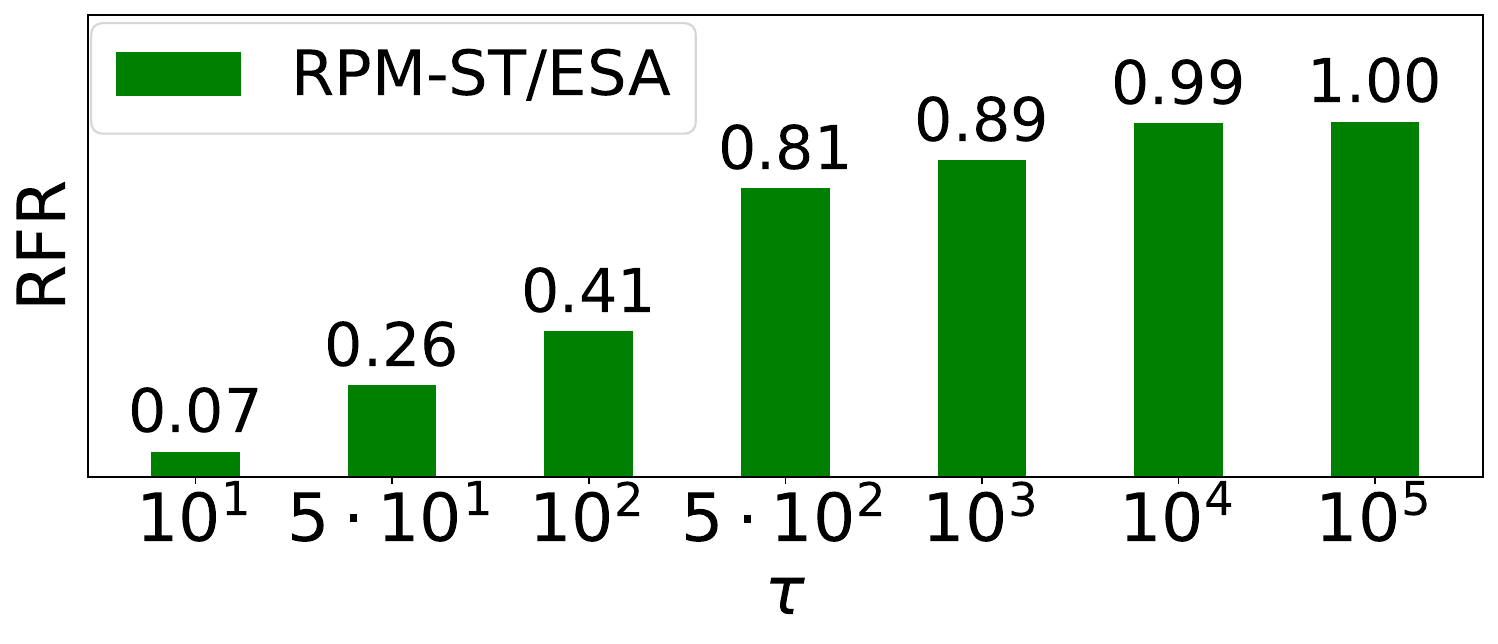}
        \caption{\xml}
    \end{subfigure}
    \caption{\RFR for varying $\tau$. The \RFR values are on the top of the bars.}
    \label{fig:effi_quality_tau}
\end{figure*}
\begin{figure*}[!ht]
    \centering
    \begin{subfigure}{0.45\textwidth}
        \includegraphics[width=\textwidth]{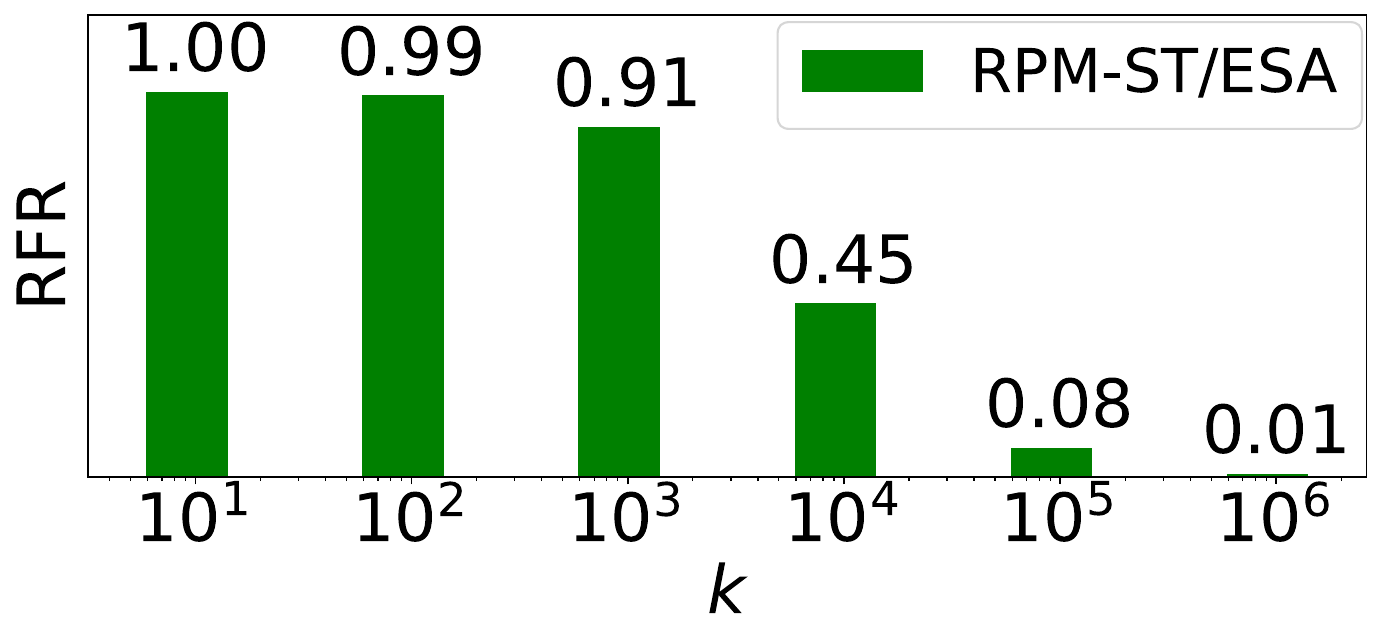}
        \caption{\dna}
    \end{subfigure}
    \begin{subfigure}{0.45\textwidth}
        \includegraphics[width=\textwidth]{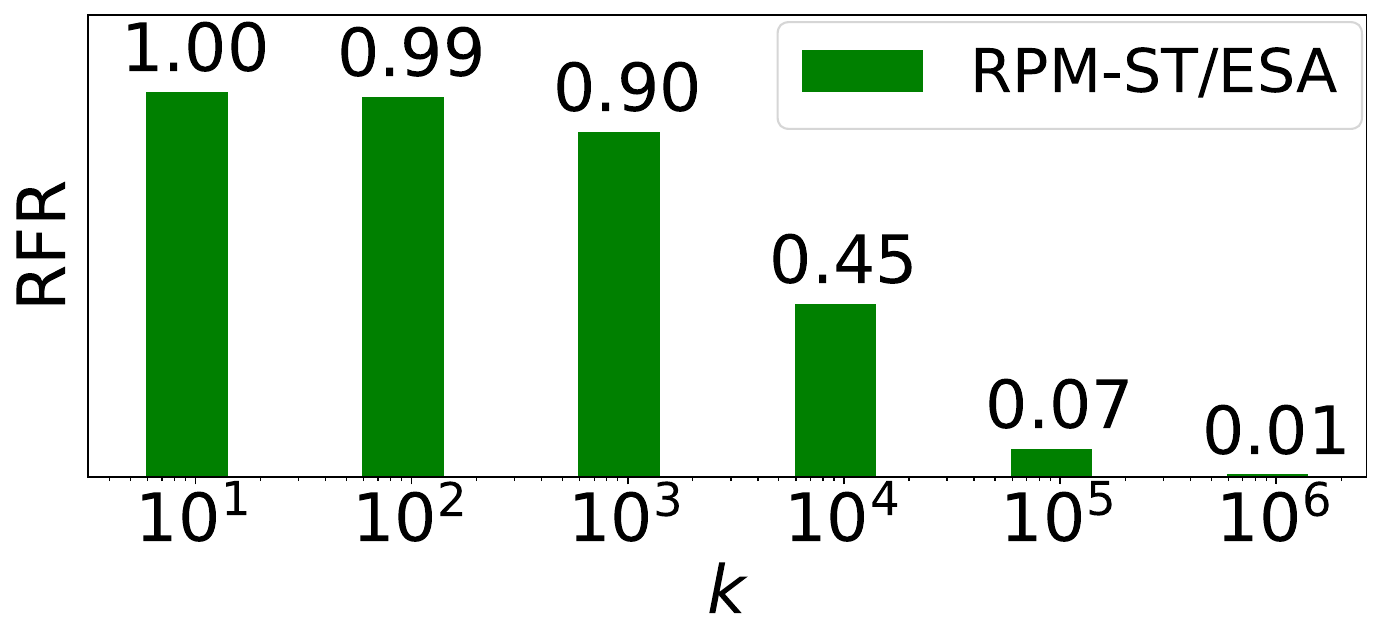}
        \caption{\english}
    \end{subfigure}
    \begin{subfigure}{0.45\textwidth}
        \includegraphics[width=\textwidth]{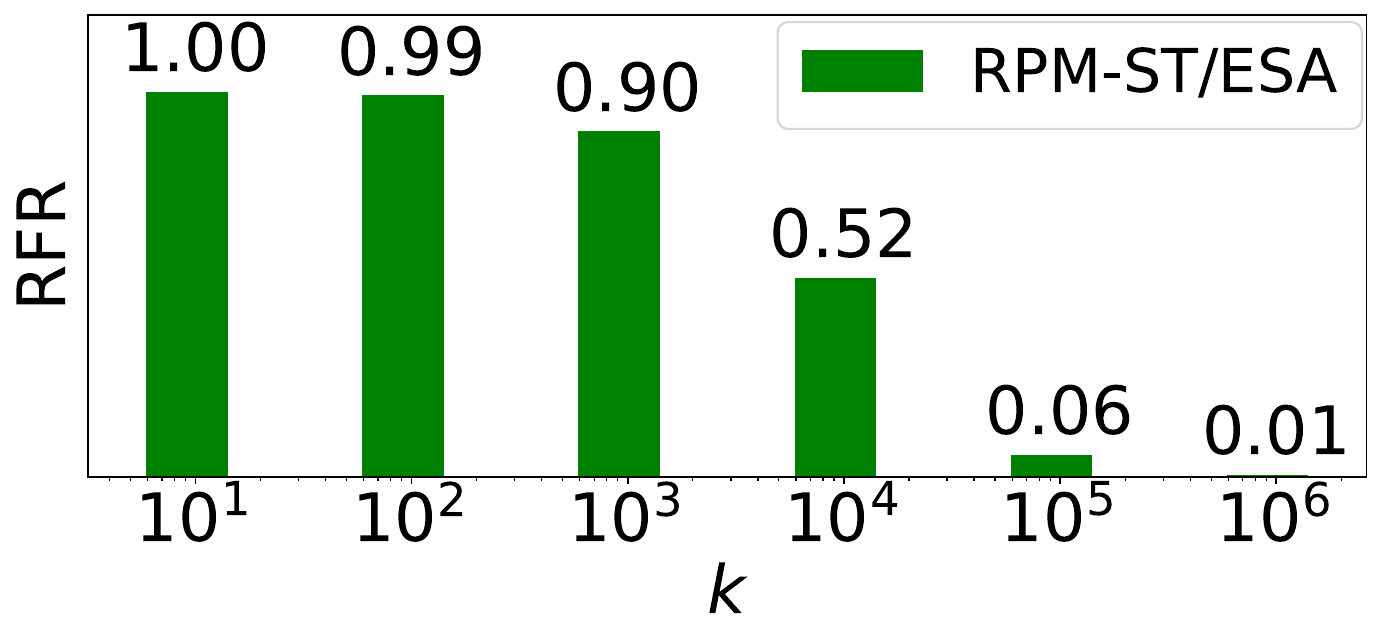}
        \caption{\proteins}
    \end{subfigure}
    \begin{subfigure}{0.45\textwidth}
        \includegraphics[width=\textwidth]{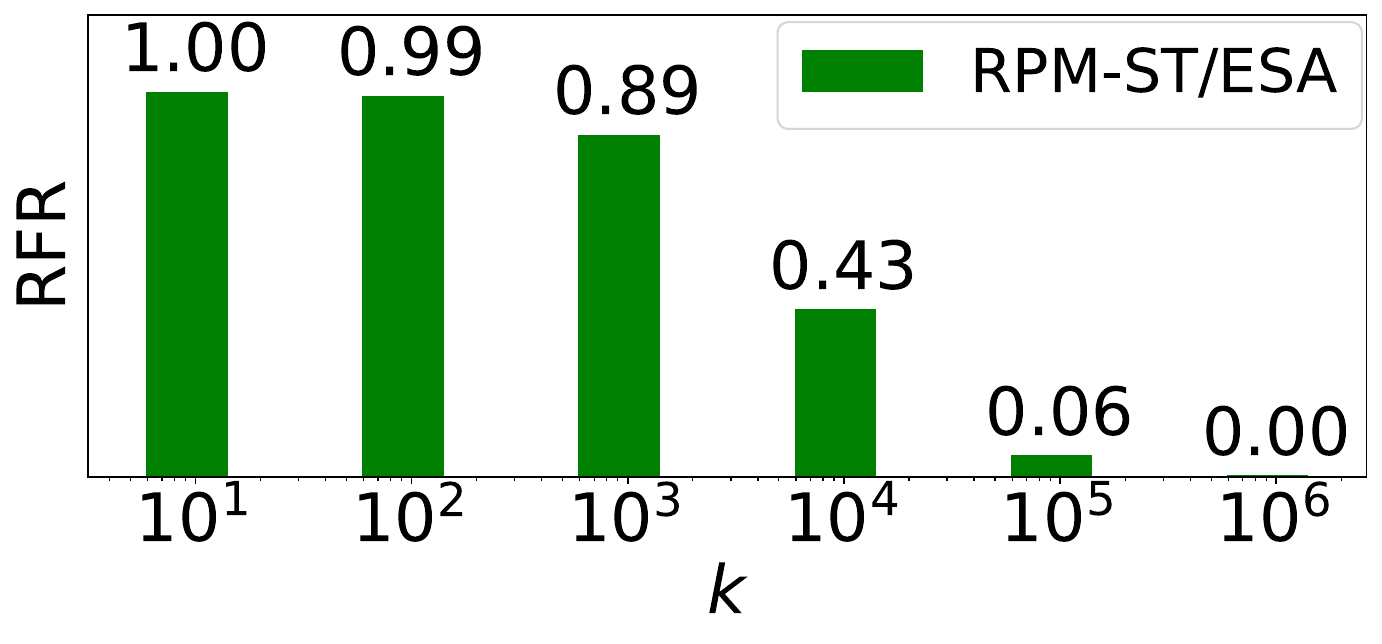}
        \caption{\sources}
    \end{subfigure}
    \begin{subfigure}{0.45\textwidth}
        \includegraphics[width=\textwidth]{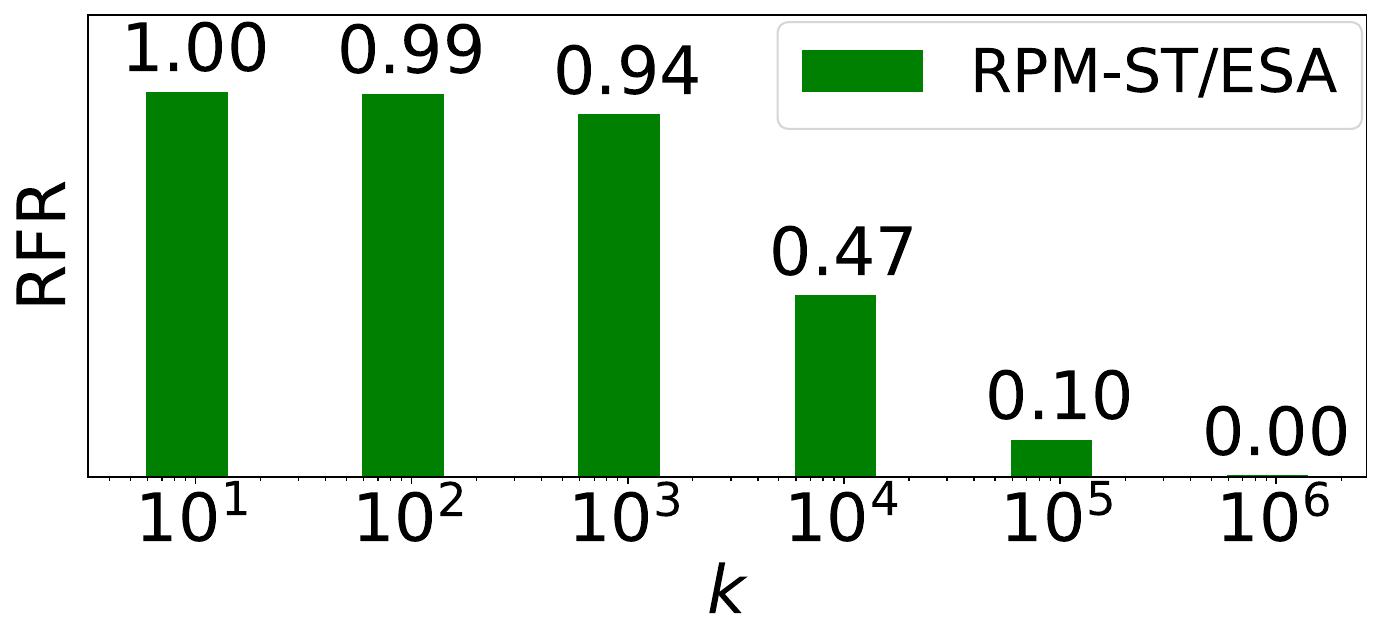}
        \caption{\xml}
    \end{subfigure}
    \caption{\RFR for varying $k$. The \RFR values are on the top of the bars.}
    \label{fig:effi_quality_k}
\end{figure*}
\begin{figure*}[!ht]
    \centering
     \begin{subfigure}{0.45\textwidth}
        \includegraphics[width=\textwidth]{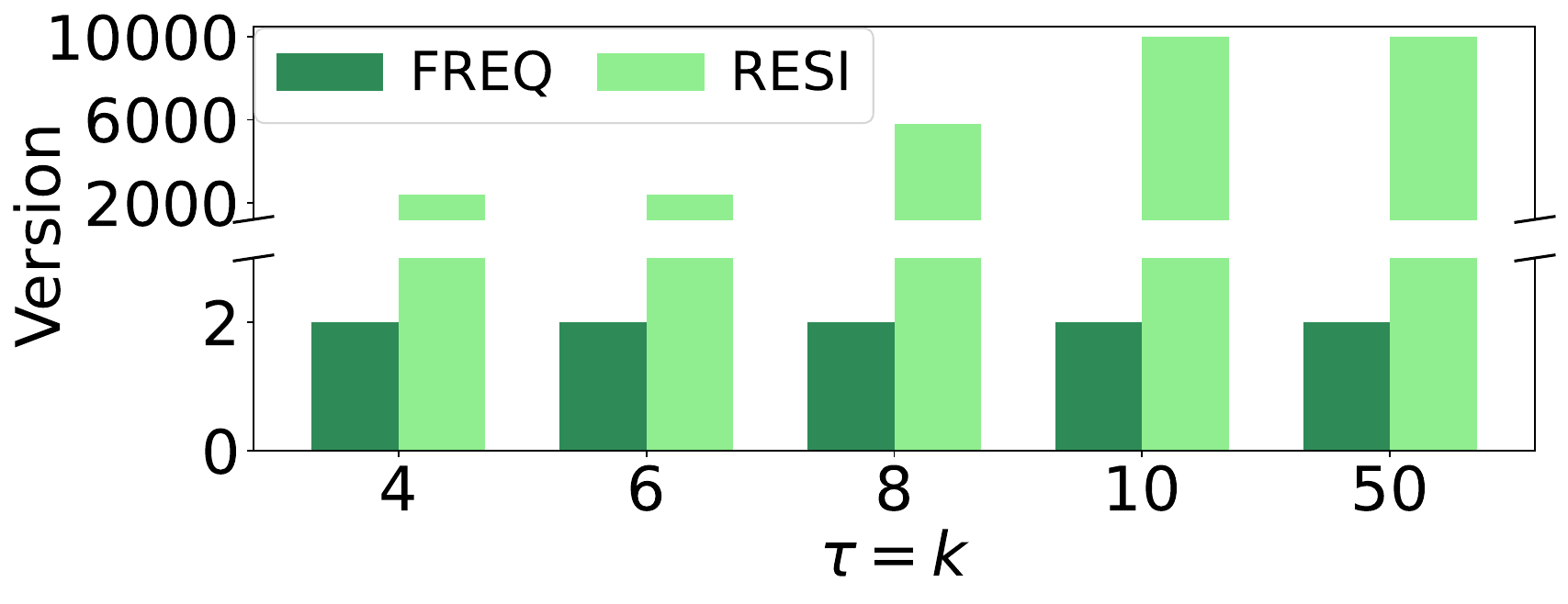}
        \caption{\boost}\label{fig:earliest_version:boost}
    \end{subfigure}
    \begin{subfigure}{0.45\textwidth}
        \includegraphics[width=\textwidth]{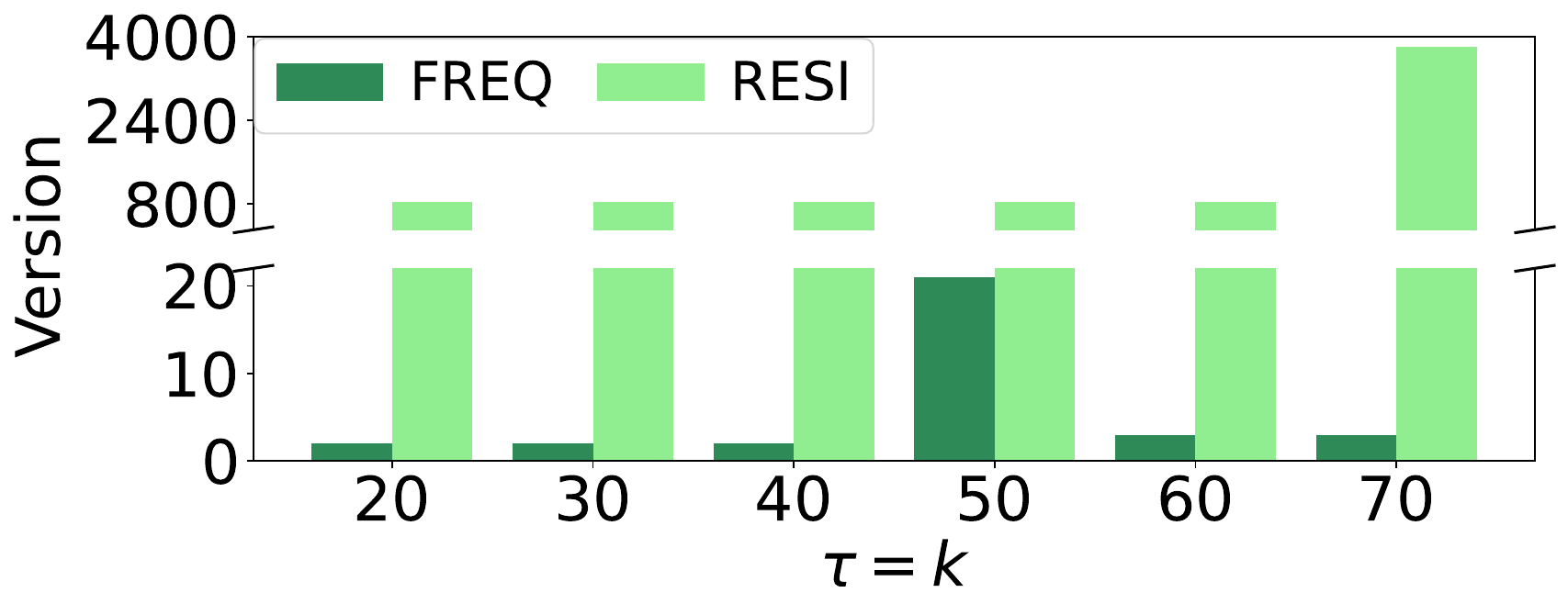}
        \caption{\wiki}\label{fig:earliest_version:wiki}
    \end{subfigure}
     \begin{subfigure}{0.45\textwidth}
        \includegraphics[width=\textwidth]{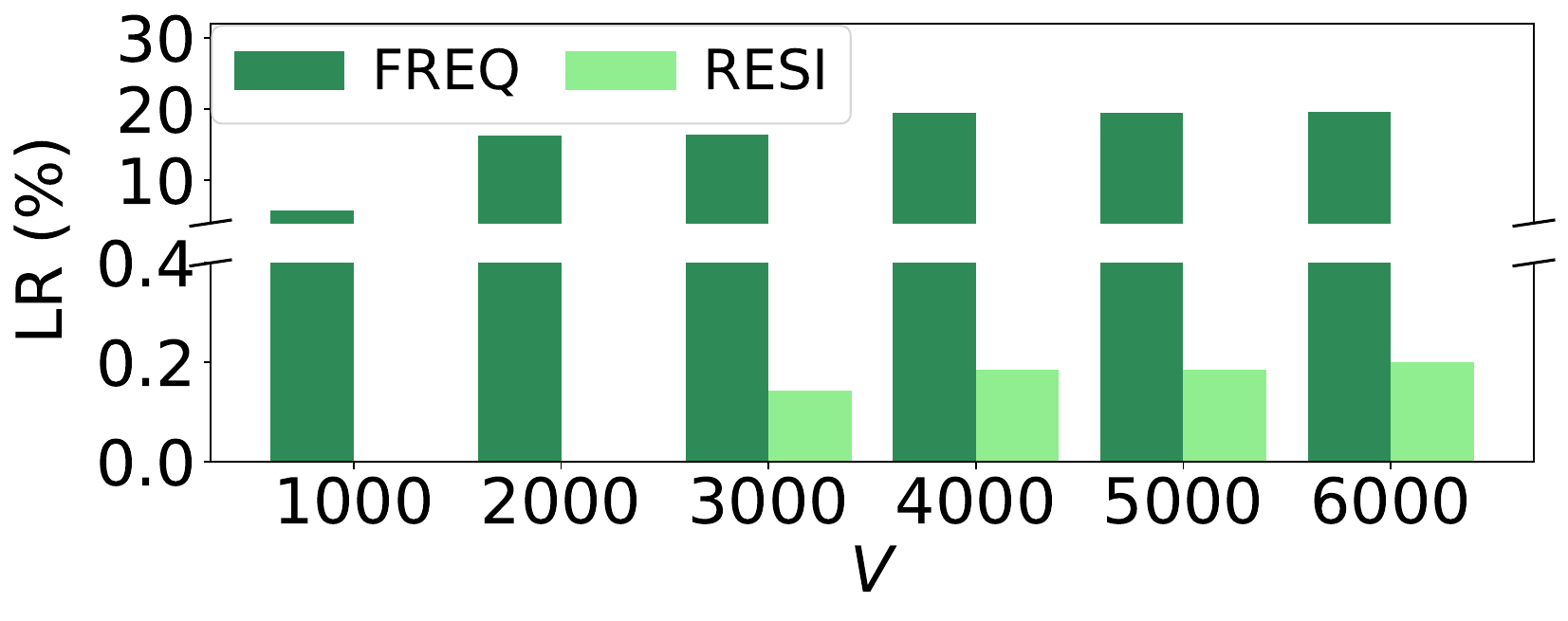}
        \caption{\boost}\label{fig:LR:boost}
    \end{subfigure}
    \begin{subfigure}{0.45\textwidth}
        \includegraphics[width=\textwidth]{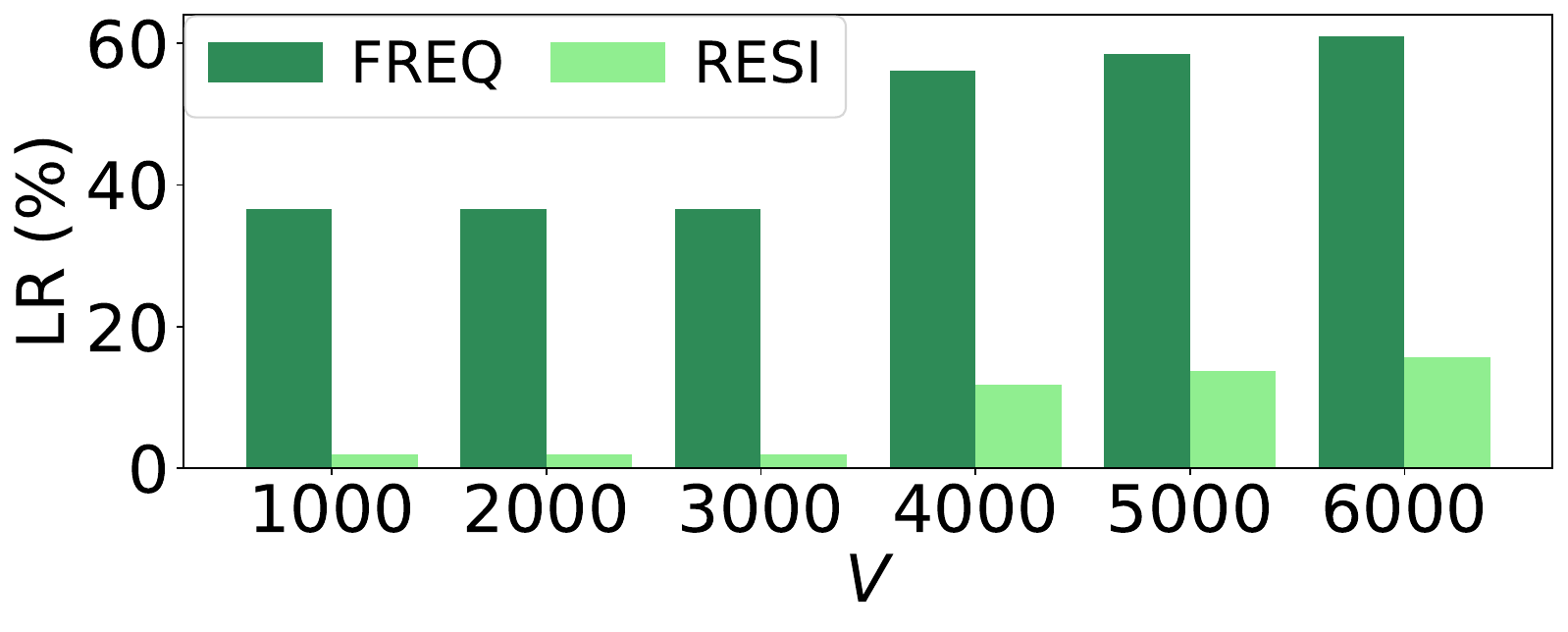}
        \caption{\wiki}\label{fig:LR:wiki}
    \end{subfigure}
    \caption{The earliest version in which at least one of the $\tau$-frequent substrings in set FREQ or of  the $(\tau,k)$-resilient substrings in set RESI that is  mined in Version $1$ is no longer $\tau$-frequent in (a) \boost and (b) \wiki.
   (c, d) \LR for varying $V$.}
\end{figure*}

\begin{figure*}[!ht]
    \centering
    \begin{subfigure}{0.45\textwidth}
        \includegraphics[width=\textwidth]{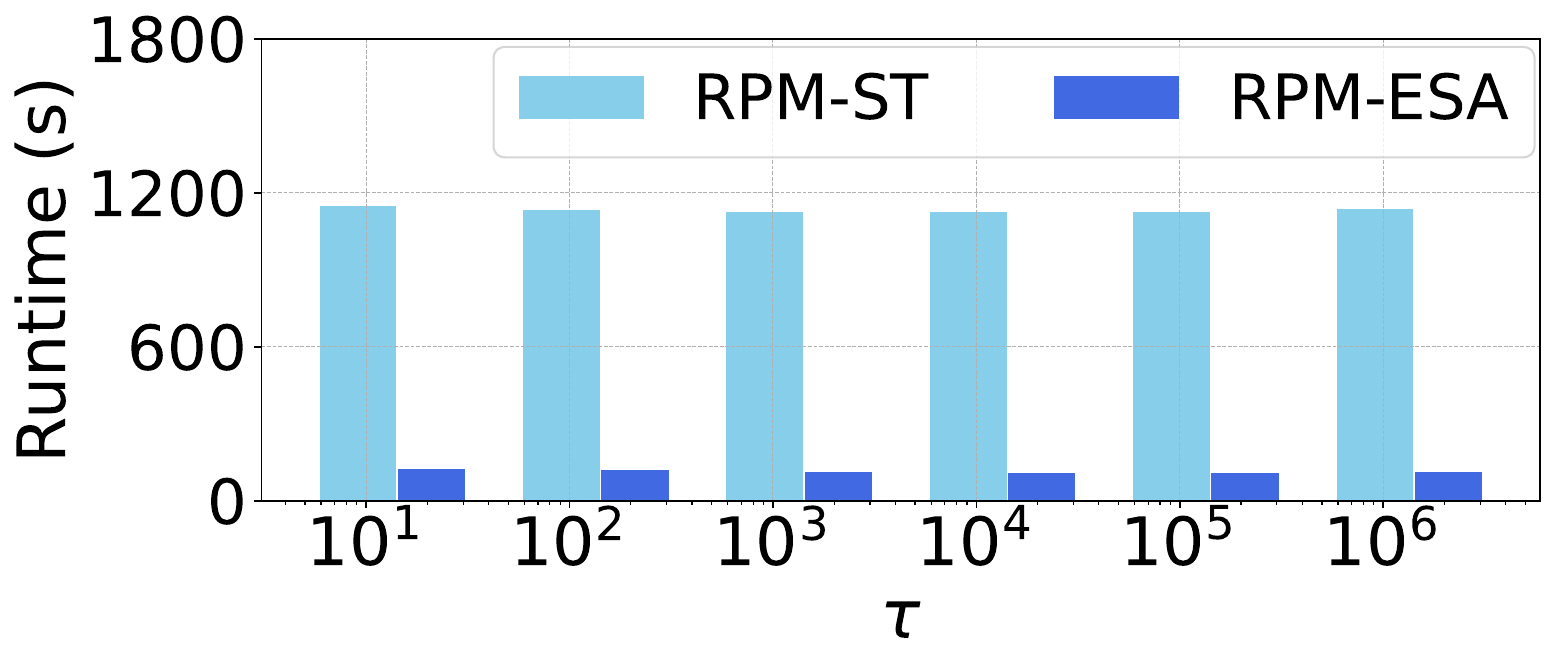}
        \caption{\dna}
    \end{subfigure}
    \begin{subfigure}{0.45\textwidth}
        \includegraphics[width=\textwidth]{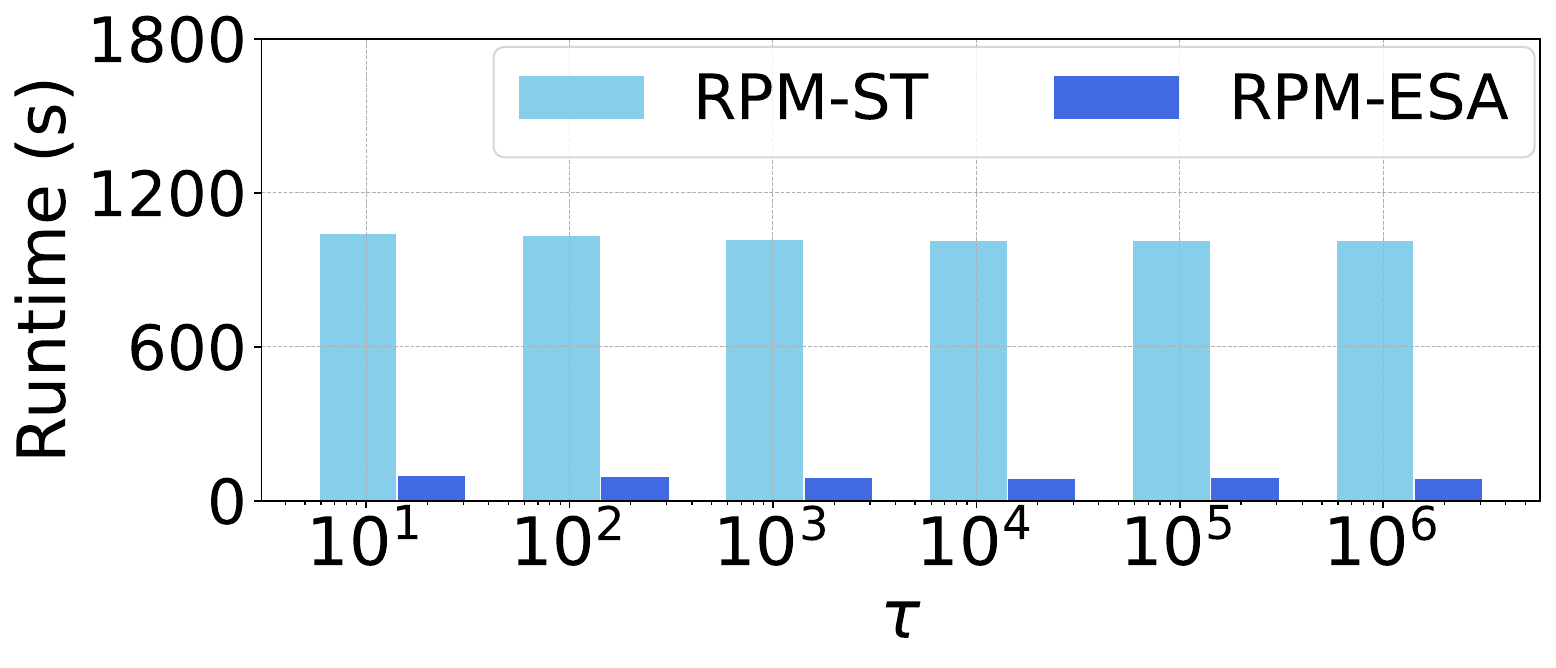}
        \caption{\english}
    \end{subfigure}
    \begin{subfigure}{0.45\textwidth}
        \includegraphics[width=\textwidth]{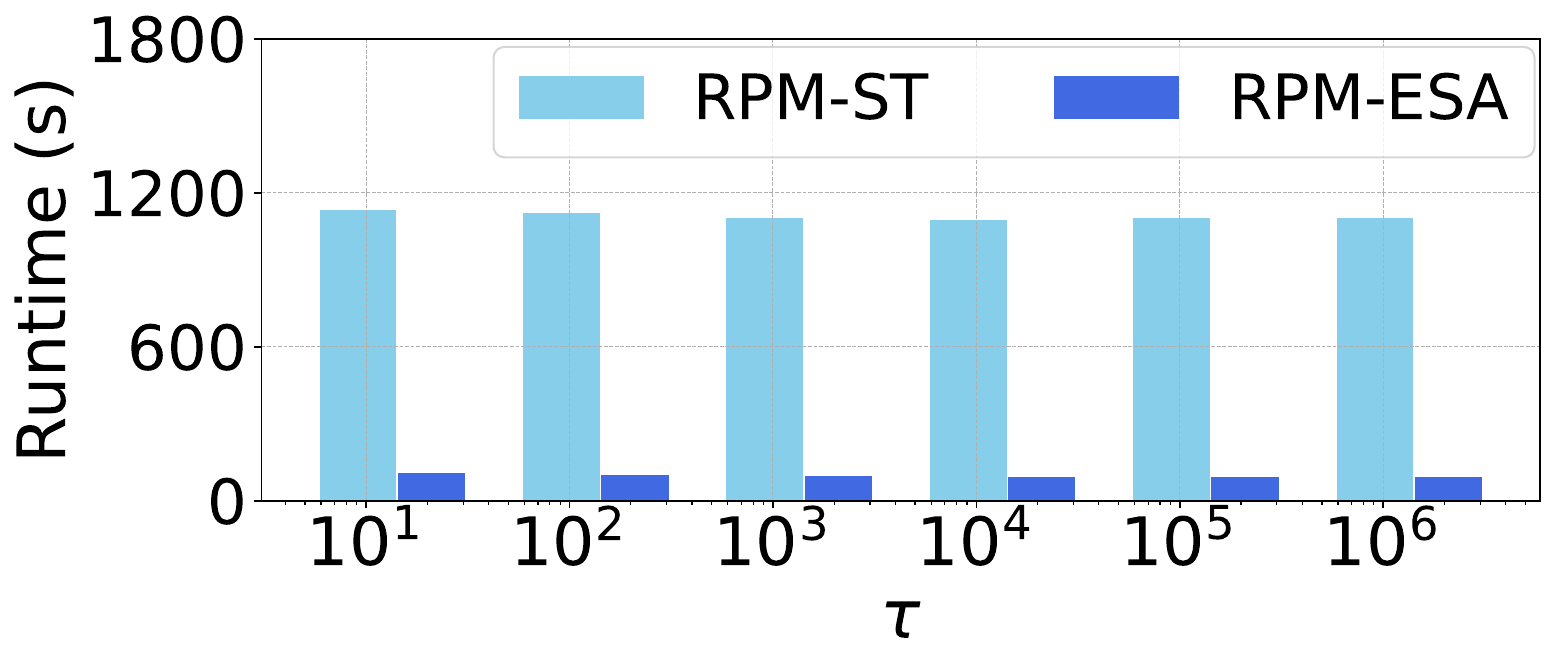}
        \caption{\proteins}
    \end{subfigure}
    \begin{subfigure}{0.45\textwidth}
        \includegraphics[width=\textwidth]{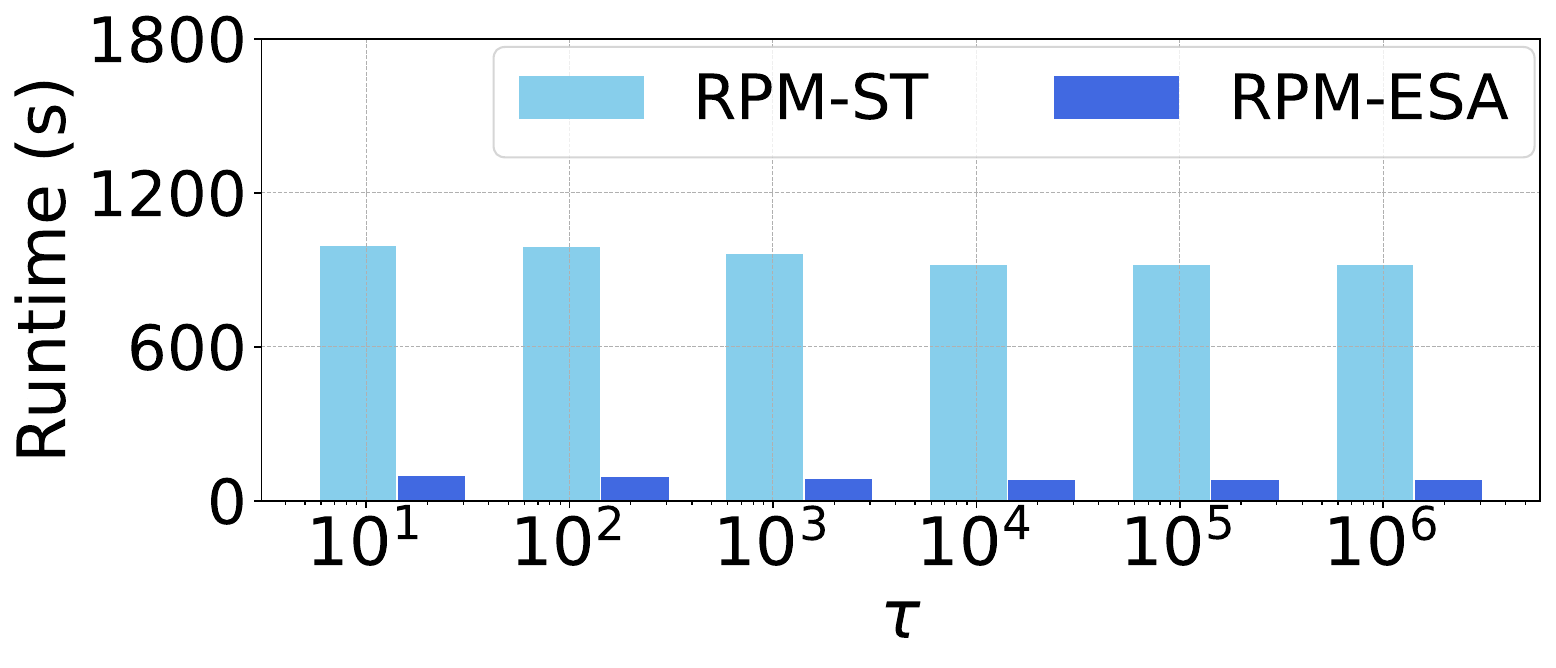}
        \caption{\sources}
    \end{subfigure}
    \begin{subfigure}{0.45\textwidth}
        \includegraphics[width=\textwidth]{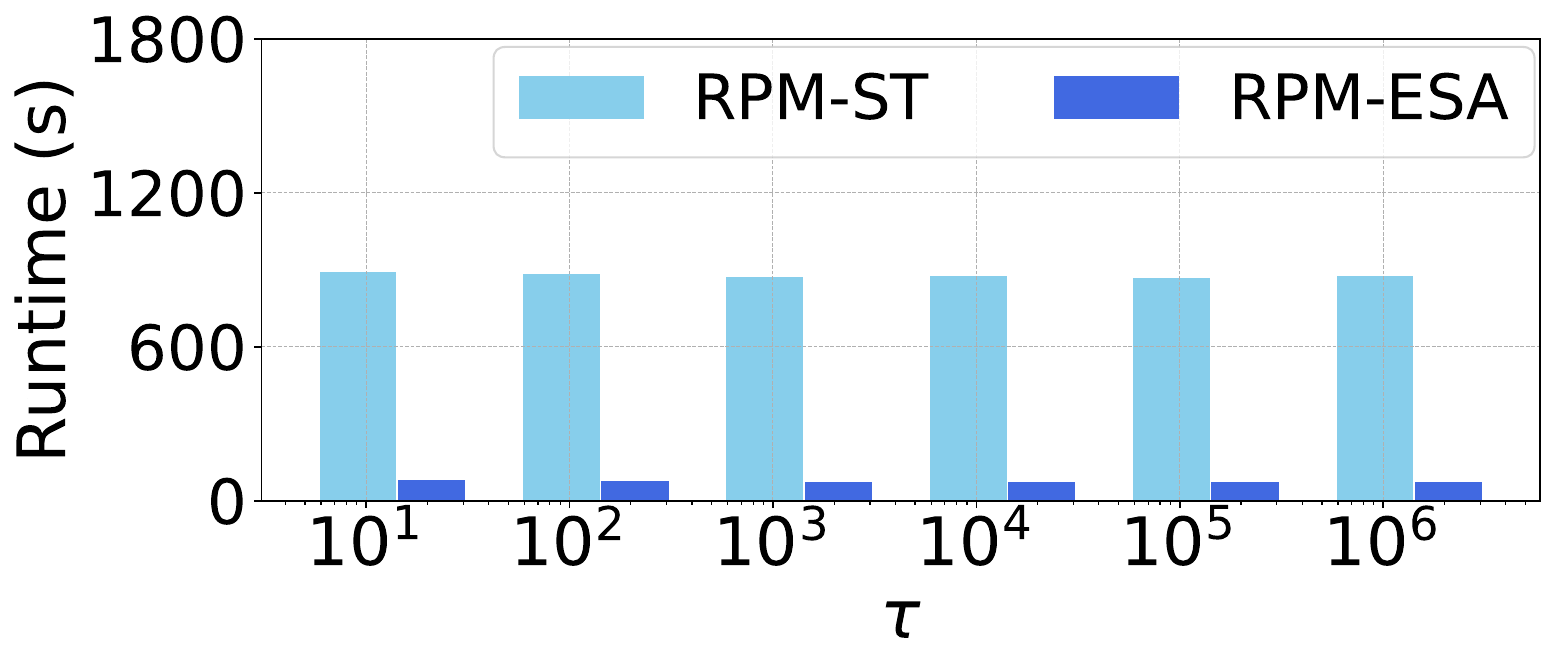}
        \caption{\xml}
    \end{subfigure}
    \caption{Running time for varying $\tau$.}
    \label{fig:effi_runtime_tau}
\end{figure*}
\begin{figure*}[!ht]
    \centering
    \begin{subfigure}{0.45\textwidth}
        \includegraphics[width=\textwidth]{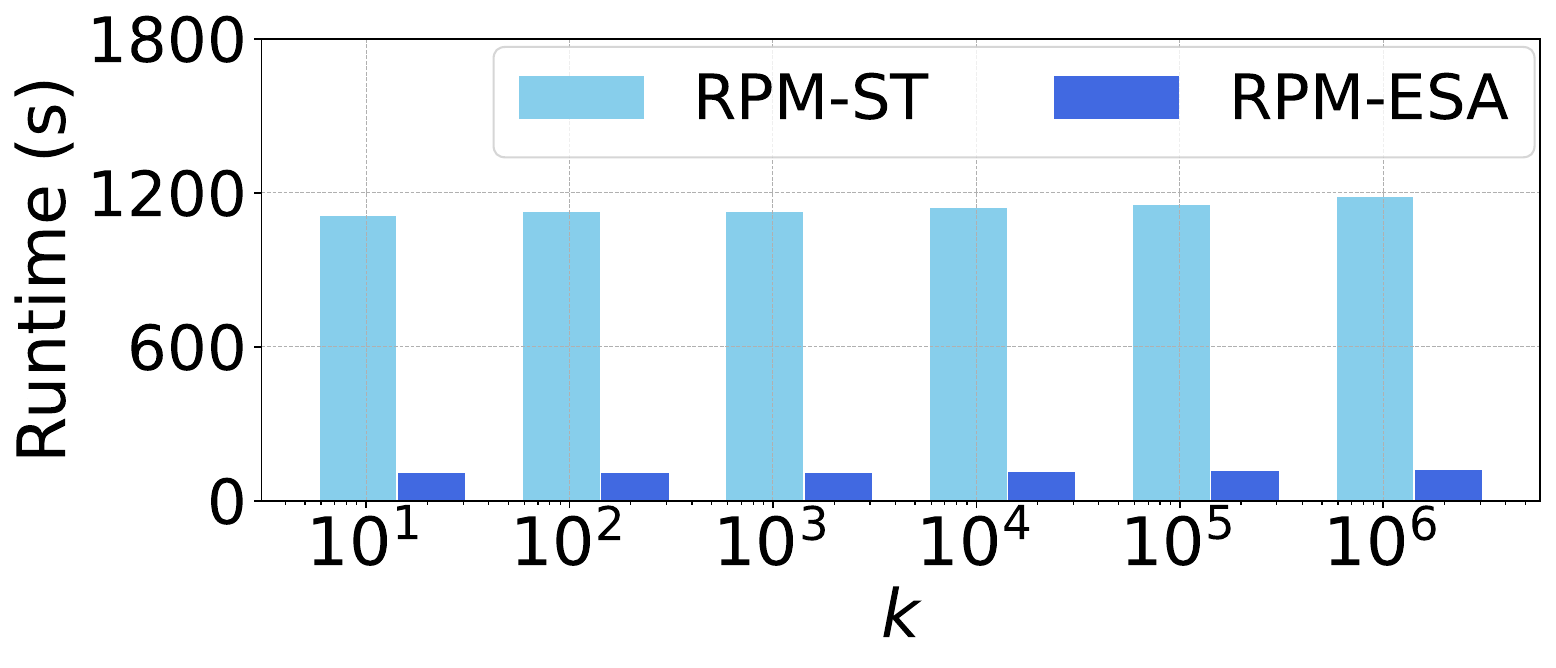}
        \caption{\dna}
    \end{subfigure}
    \begin{subfigure}{0.45\textwidth}
        \includegraphics[width=\textwidth]{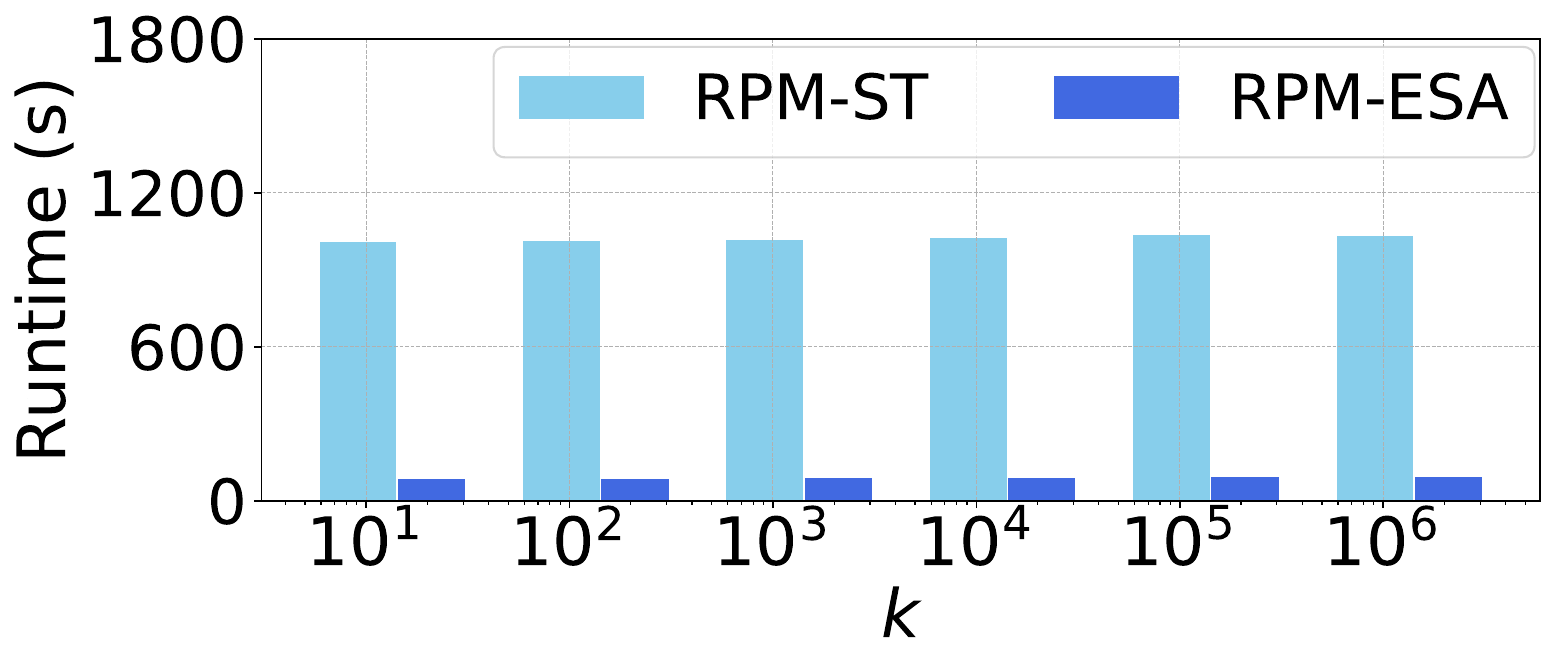}
        \caption{\english}
     \end{subfigure}
    \begin{subfigure}{0.45\textwidth}
        \includegraphics[width=\textwidth]{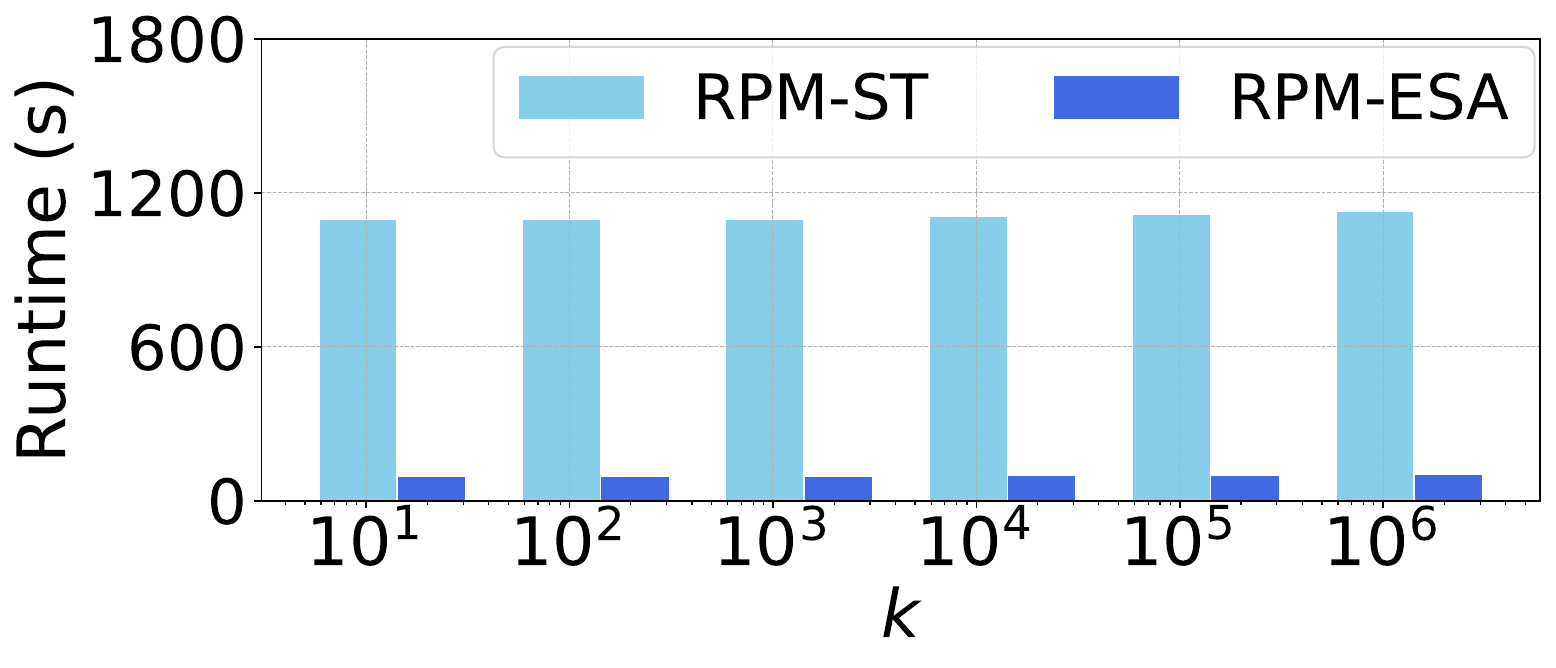}
        \caption{\proteins}
    \end{subfigure}
    \begin{subfigure}{0.45\textwidth}
        \includegraphics[width=\textwidth]{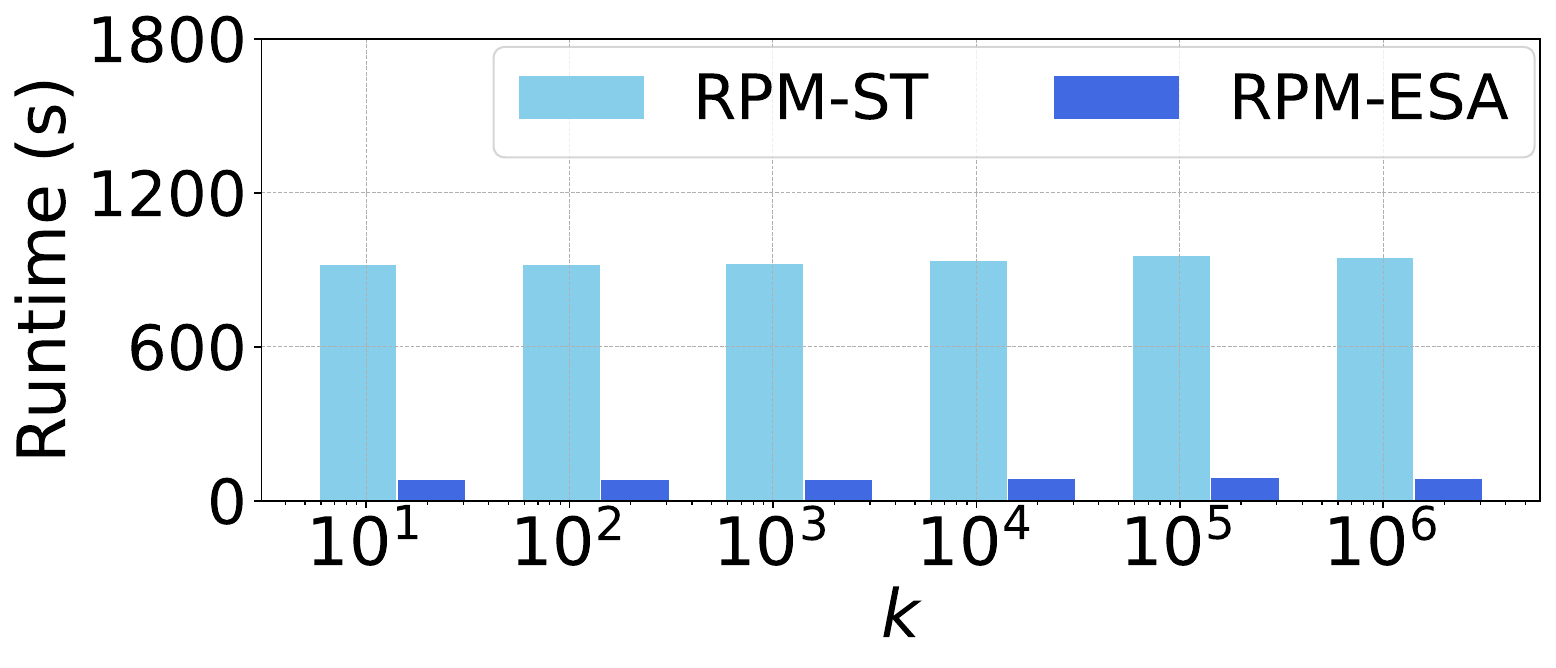}
        \caption{\sources}
    \end{subfigure}
    \begin{subfigure}{0.45\textwidth}
        \includegraphics[width=\textwidth]{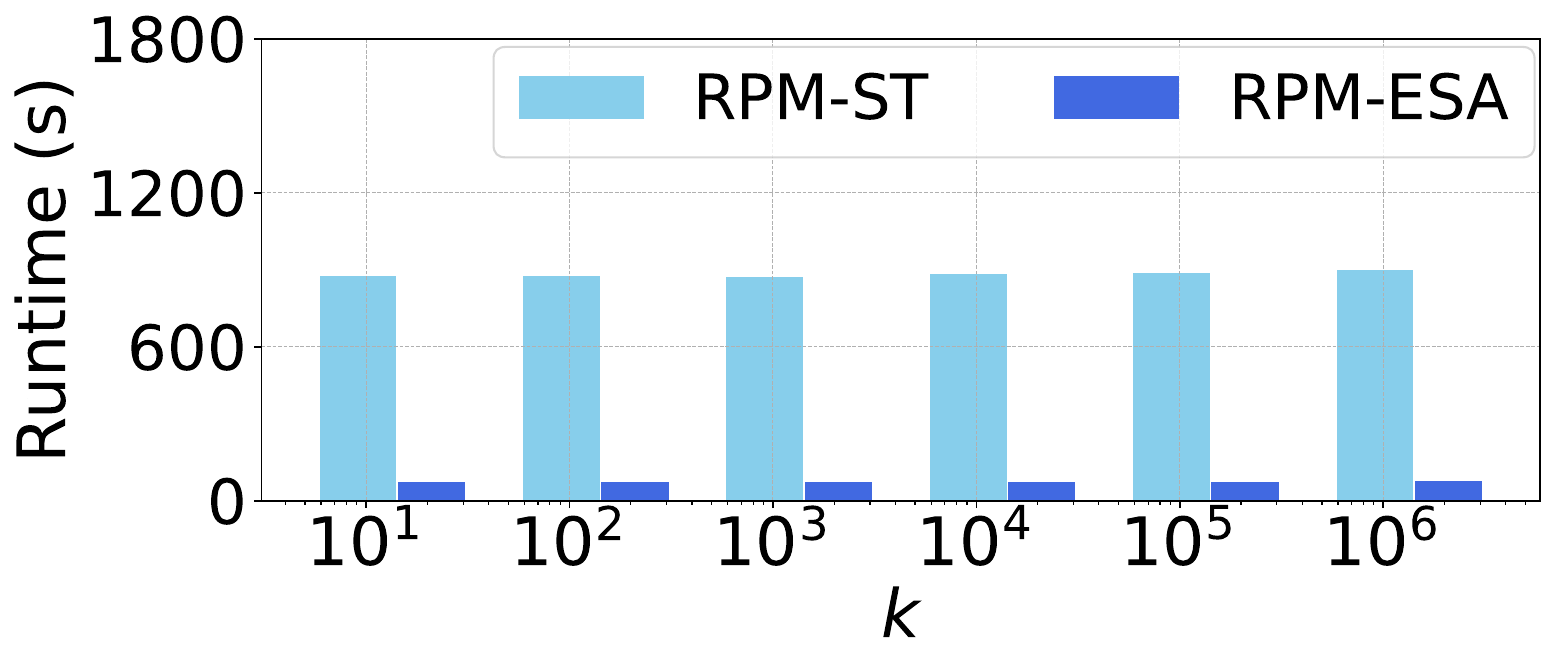}
        \caption{\xml}
    \end{subfigure}
    \caption{Running time for varying $k$.}
    \label{fig:effi_runtime_k}
\end{figure*}

\begin{figure}[!ht]
    \centering
    \begin{subfigure}{0.45\textwidth}
        \includegraphics[width=\textwidth]{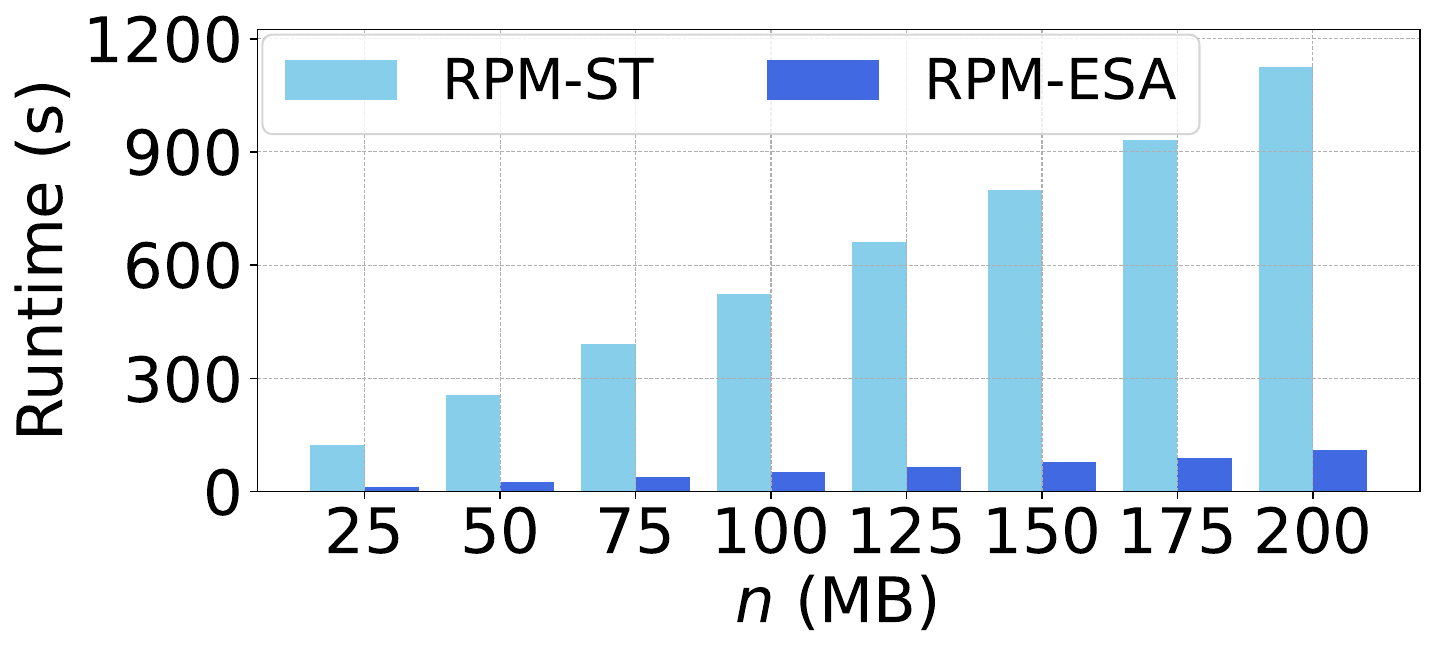}
        \caption{Prefixes of \dna}\label{fig:time:effi:n}
    \end{subfigure}
    \begin{subfigure}{0.45\textwidth}
        \includegraphics[width=\textwidth]{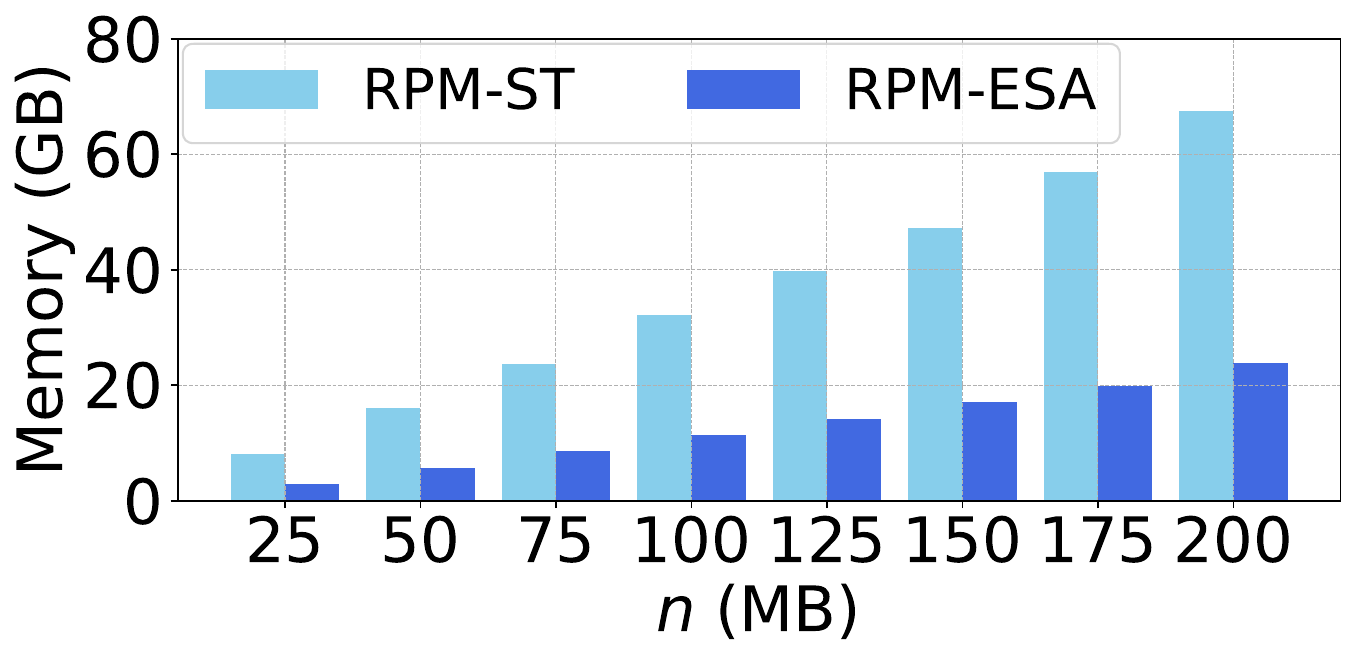}
        \caption{Prefixes of \dna}\label{fig:memory:effi:n}
    \end{subfigure}
    \caption{(a) Running time and (b) space for varying $n$.}
    \label{fig:effi_runtime_n}
\end{figure}
\begin{figure}[!ht]
    \centering
     \begin{subfigure}{0.45\textwidth}
        \includegraphics[width=\textwidth]{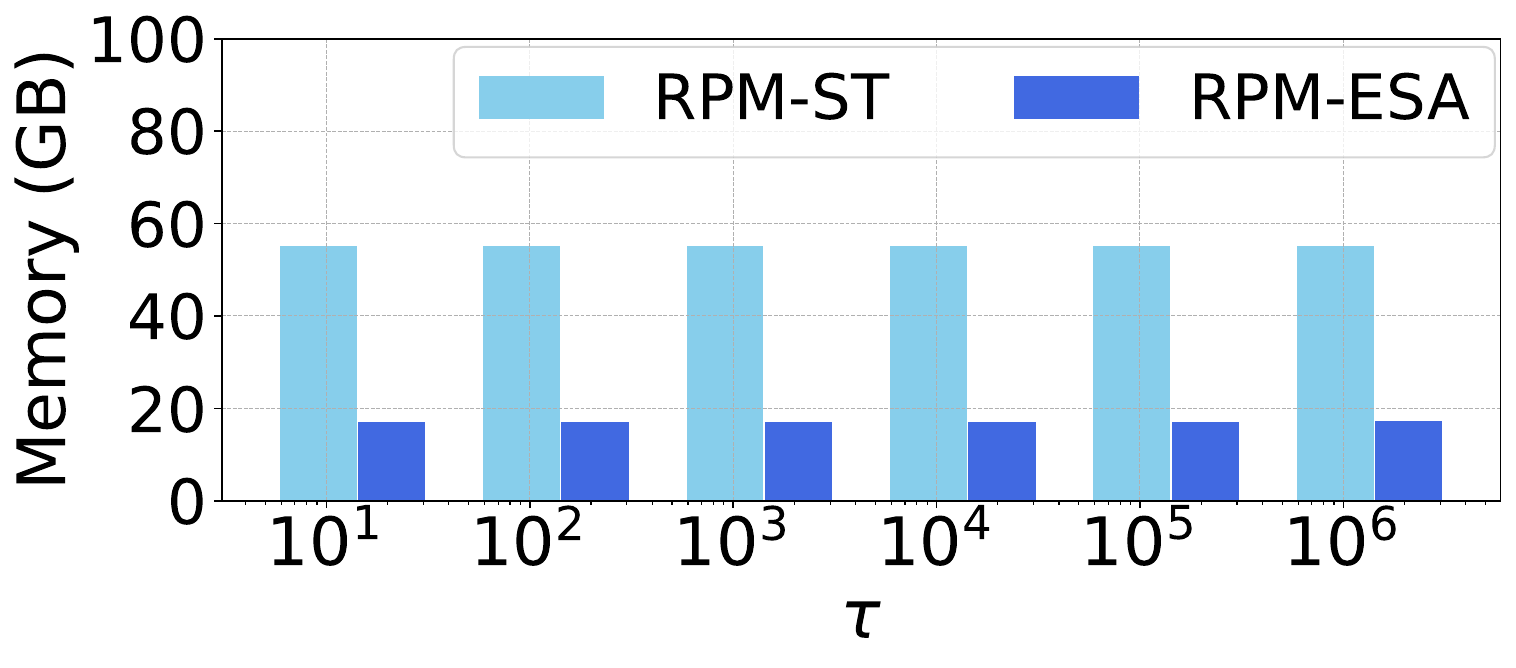}
        \caption{\xml}\label{fig:memory:effi:tau}
    \end{subfigure}
    \begin{subfigure}{0.45\textwidth}
        \includegraphics[width=\textwidth]{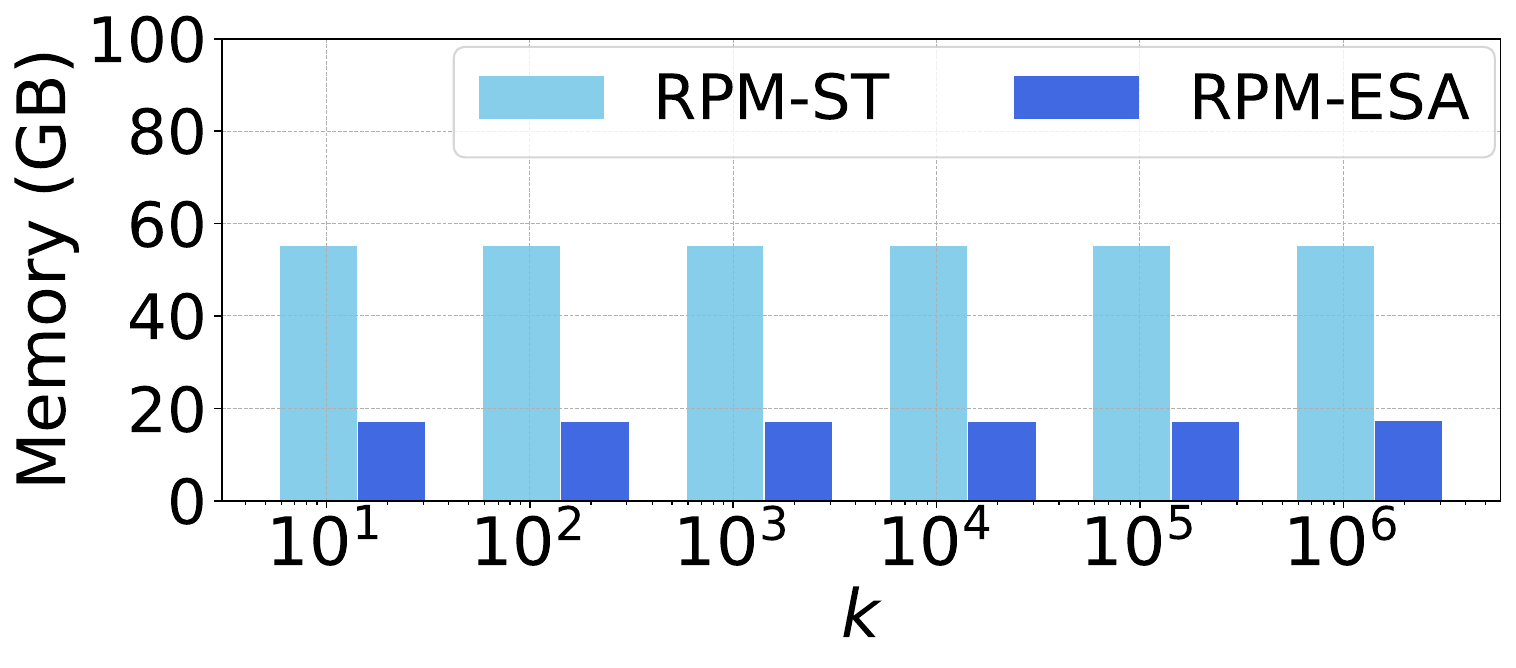}
        \caption{\xml}\label{fig:memory:effi:k}
    \end{subfigure}
    \caption{Space for varying (a) $\tau$ and (b) $k$.}
    \label{fig:effi_space}
\end{figure}

\paragraph*{\RFR.}~\cref{fig:effi_quality_tau} 
 shows that \RFR is at most $0.08$ for $\tau=10$ and $k=100$; thus most $10$-frequent substrings are \emph{not} $(10,100)$-resilient. \RFR increases with~$\tau$ since, due to the relatively small $k$ value we used, more $\tau$-frequent substrings remain $\tau$-frequent (and thus are $(\tau, 100)$-resilient) when $\tau$ increases.  \cref{fig:effi_quality_k} 
 shows that \RFR decreases (and approaches or becomes $0$) as~$k$ increases. This is  because, as $\tau$ is fixed, a larger $k$ allows more letter substitutions and thus most or all substrings that are $(\tau, k)$-resilient for one $k$ are no longer resilient for a larger one.   
  Thus, the sets of $(\tau, k)$-resilient substrings and $\tau$-frequent substrings for the same $\tau$ can be very different, which motivates the notion of $(\tau, k)$-resilient substrings.  

\paragraph*{Resilience in Versioned Datasets.}~\cref{fig:earliest_version:boost} (and~\cref{fig:earliest_version:wiki}) shows that the set of $\tau$-frequent substrings in version $1$ of the \boost (and \wiki dataset), denoted by FREQ, changes in the next version of the dataset in most cases.
For example, in \boost, the earliest  version in which it changes is $2$. 
On the other hand, the set of $(\tau,k)$-resilient substrings for the same $\tau$ and $k$ as FREQ, which is denoted by RESI, changes much later. 
For example, the earliest version in which RESI  changes is $2,412$ in \boost and $828$ in \wiki.

\cref{fig:LR:boost} and \cref{fig:LR:wiki} show that \LR for the set of $\tau$-frequent substrings is much larger compared to the \LR for the set of $(\tau,k)$-resilient substrings, for any version $V$ (on average, by more than $2$ orders of magnitude in \boost and by more than $1$ order of magnitude in \wiki). 
Thus, only a very small number of $(\tau,k)$-resilient substrings stop being $\tau$-frequent as more versions of the dataset are considered, and this happens after a very large number of versions (e.g., after version $2,000$ in \boost).  
As expected, \LR increases with $V$. 

Thus, $(\tau,k)$-resilient substrings remain truthful for longer, which is  useful in applications, as mentioned in Introduction. 

 \paragraph*{Running Time.}~Both \RPMST and \RPMESA have the same time complexity. 
 As expected, both $\tau$ and $k$ play an insignificant role in the running time of our algorithms; see \cref{fig:effi_runtime_tau} and \cref{fig:effi_runtime_k}. 
 It is important to note that \RPMESA is at least~$9$ and up to $12$ times faster than \RPMST in the experiment of \cref{fig:effi_runtime_tau} and an order of magnitude faster in the experiment of \cref{fig:effi_runtime_k}. This is because the enhanced suffix array 
 takes less memory
 and supports faster cache-friendly operations  
 compared to the suffix tree~\cite{DBLP:journals/jda/AbouelhodaKO04}. 
\cref{fig:time:effi:n} shows that both algorithms scale close to linearly in $n$, as expected by their time complexity. \RPMESA is faster (by $10.2$ times on average) for the reason outlined above. 

 \paragraph*{Space.}~\cref{fig:memory:effi:n} shows that the space occupied by both \RPMST and \RPMESA increases linearly with $n$, as expected by their $\cO(n)$ space complexity. The space is thus not affected by~$\tau$ or $k$ (see \cref{fig:memory:effi:tau} and \cref{fig:memory:effi:k} for the \xml dataset; the results for the other datasets are analogous and omitted). \RPMESA occupies much less space than \RPMST (e.g., $3.2$ times less on average in the experiments of \cref{fig:memory:effi:tau} and \cref{fig:memory:effi:k}), due to the use of the enhanced suffix array, which occupies less space than the suffix tree in practice~\cite{DBLP:journals/jda/AbouelhodaKO04}. 
 
\begin{figure}[!t]
    \centering
    \begin{subfigure}{0.45\textwidth}
        \includegraphics[width=\textwidth]{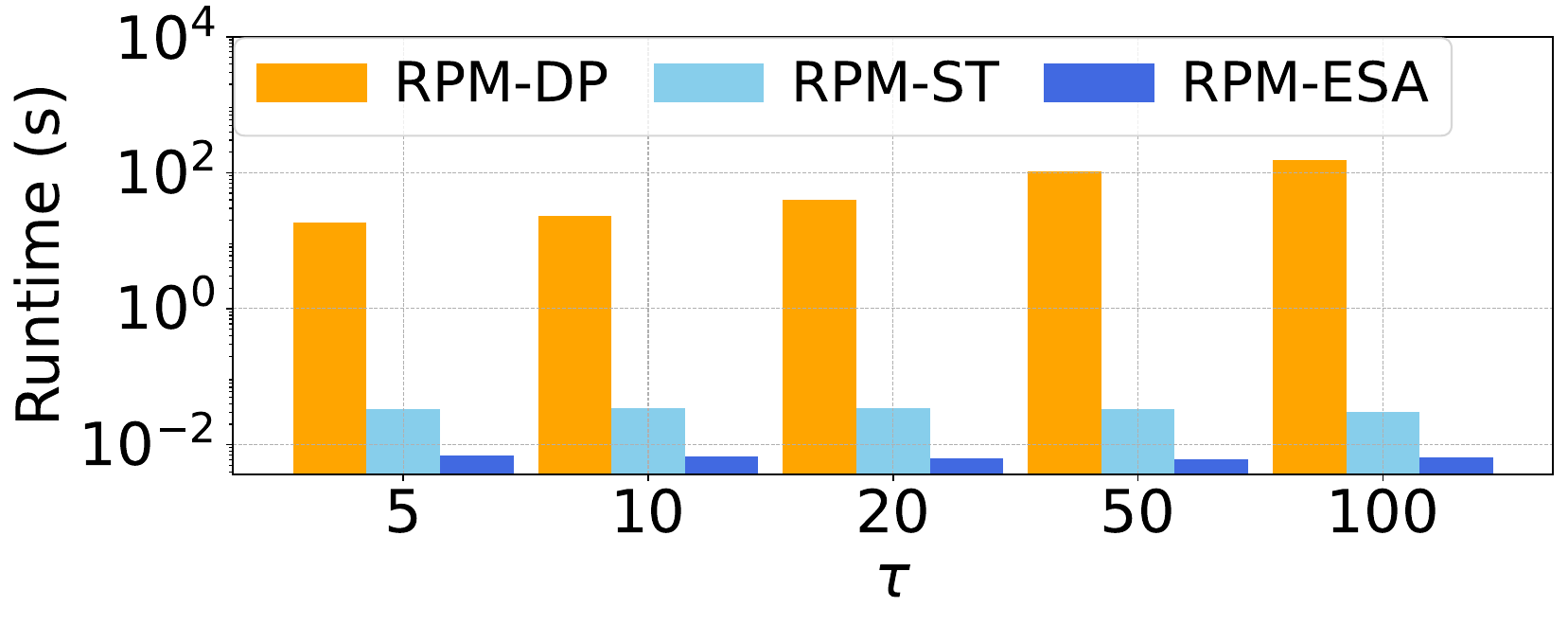}
        \caption{\dna prefix with  $n=1000$}\label{fig:runtime:baseline:tau}
    \end{subfigure}
    \begin{subfigure}{0.45\textwidth}
        \includegraphics[width=\textwidth]{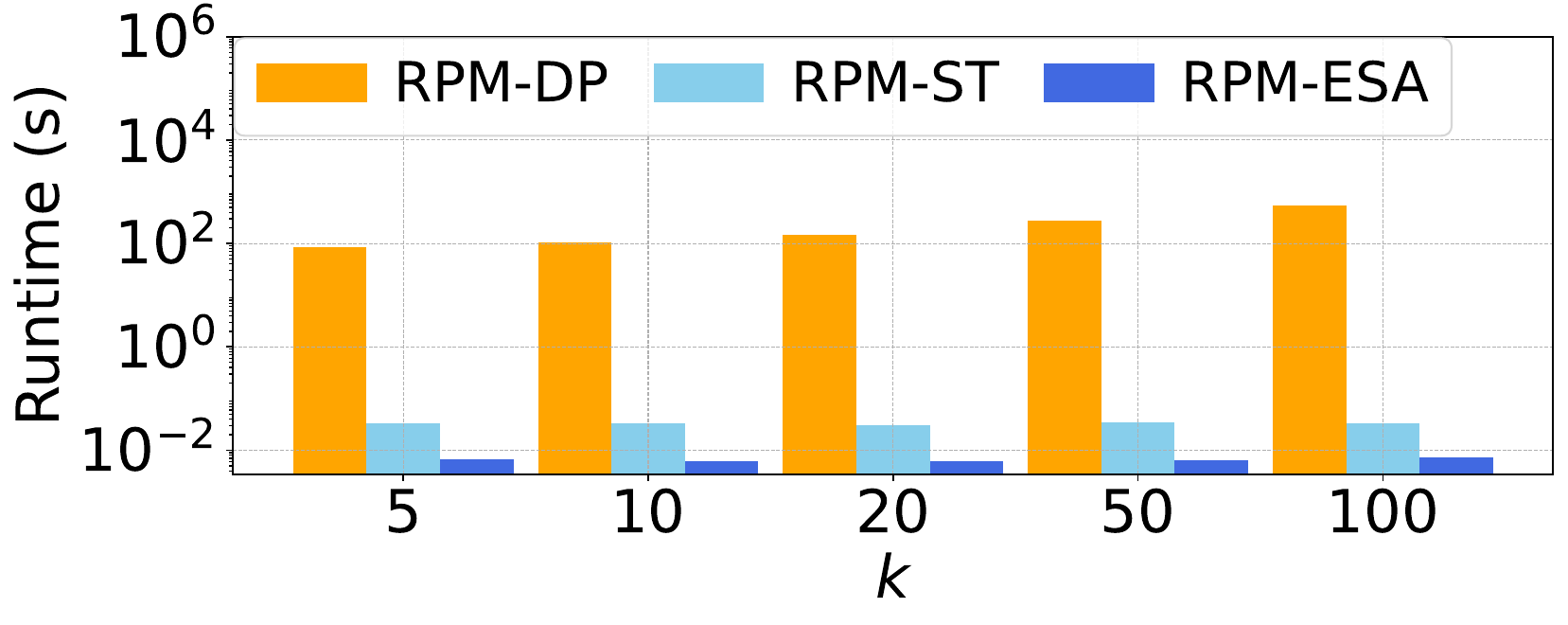}
        \caption{\dna prefix with $n=1000$}\label{fig:runtime:baseline:k}
    \end{subfigure}
    \caption{Running time of \BASELINE, \RPMST and \RPMESA for (a) varying $\tau$ and (b) varying $k$.}
    \label{fig:baseline_vs_RPM_tau_k}
\end{figure}

\begin{figure}[!ht]
    \centering
    \begin{subfigure}{0.45\textwidth}
        \includegraphics[width=\textwidth]{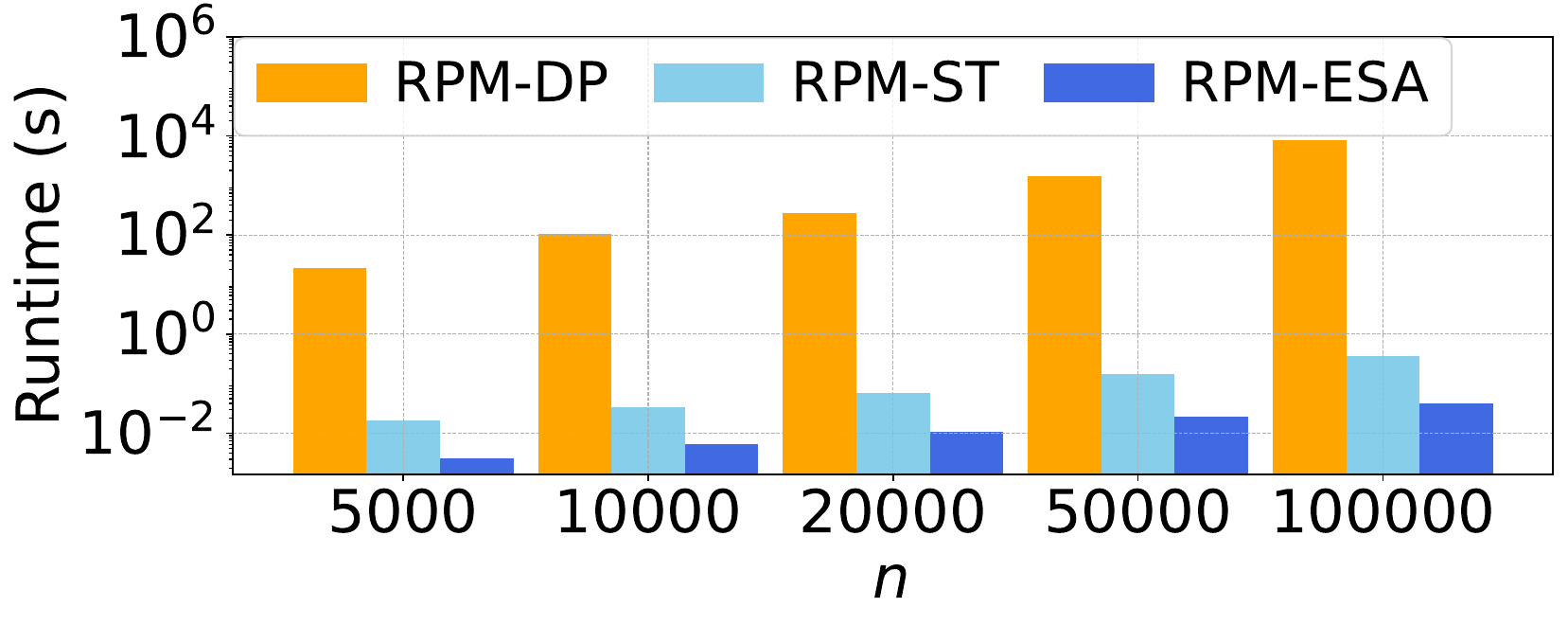}
        \caption{Prefixes of \dna}\label{fig:runtime:baseline:n}
    \end{subfigure}
    \begin{subfigure}{0.45\textwidth}
        \includegraphics[width=\textwidth]{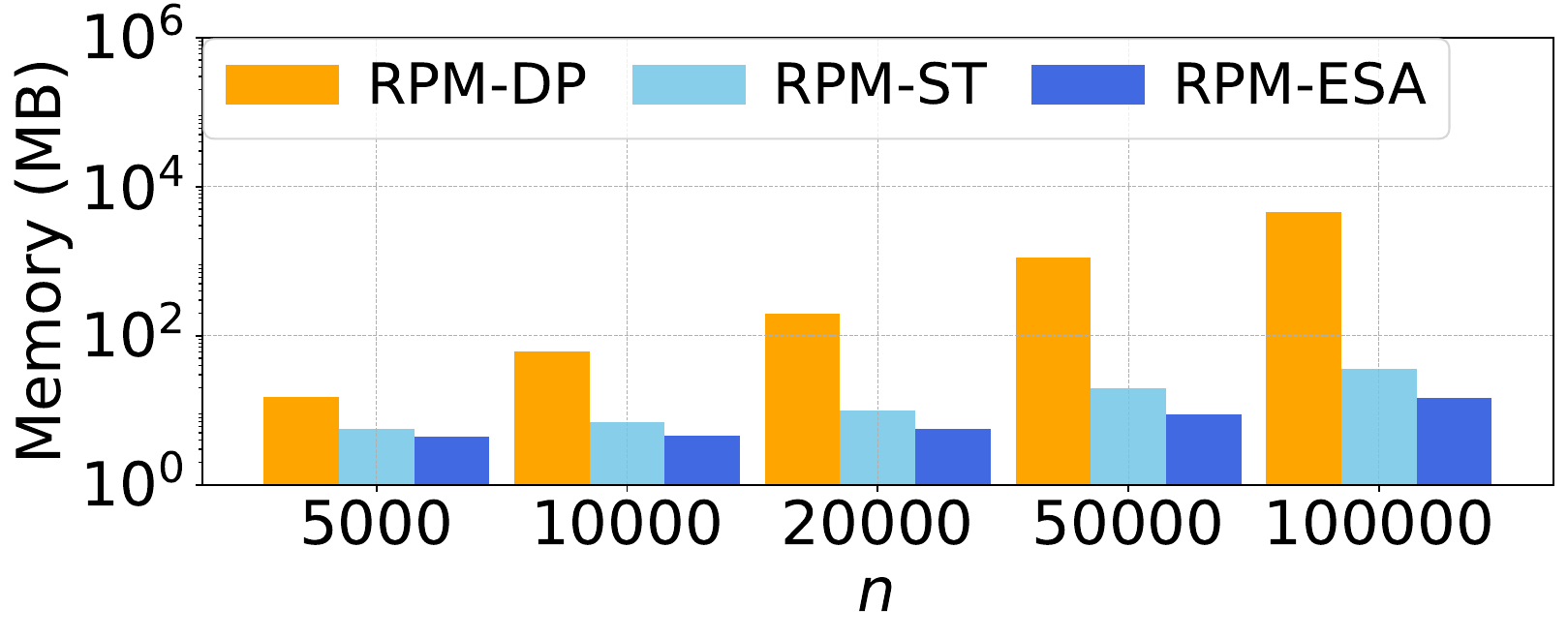}
        \caption{Prefixes of \dna}\label{fig:memory:baseline:n}
    \end{subfigure}
    \caption{(a) Running time and (b) space of \BASELINE, \RPMST and \RPMESA for varying $n$.}
    \label{fig:baseline_vs_RPM_n}
\end{figure}

\paragraph*{Comparison to \BASELINE.}~\cref{fig:runtime:baseline:tau} and \cref{fig:runtime:baseline:k} show the impact of $\tau$ and $k$, respectively, for a prefix of \dna with $n=10,000$. 
The running time of \BASELINE increases with both $\tau$ and $k$. This is because a larger $\tau$ requires more binary search iterations to locate the resilient substrings, and also slows down the DP checks due to the increased number of substring occurrences. The increase due to $k$ is expected by the $\cO(kn^3\log n)$ time complexity. 
However, the running time of \RPMST and \RPMESA does not increase, similarly to \cref{fig:effi_runtime_tau} and \cref{fig:effi_runtime_k}. 
As expected by their time complexity, 
both \RPMST and \RPMESA substantially outperform \BASELINE (\RPMST is more than~$2$ orders of magnitude faster and \RPMESA is around $4$ orders of magnitude faster).
\cref{fig:runtime:baseline:n} and \cref{fig:memory:baseline:n}
show that both the running time and the memory usage for \BASELINE increase quadratically with $n$, while those for \RPMST and \RPMESA increase linearly with $n$; in this experiment, we have set $\tau=50$ and $k=10$. 

\begin{table*}[t]
\centering
\caption{(a) Clustering with $(\tau,k)$-resilient substrings vs ground truth clustering. (b, c) Percentage increase $\frac{M_R-M_C}{M_C}\cdot 100\%$, where,  for 
a clustering quality measure (max/avg NMI and max/avg ARI),  
$M_R$ is the measure for our approach and $M_C$ is the same measure for (b) the clustering algorithm using $\tau$-frequent substrings and (c) the clustering algorithm in~\cite{DBLP:journals/bioinformatics/VingaA03}.}\label{table:clustering}
\centering
\begin{subtable}[t]{0.5\textwidth}
\caption{}\label{table:clustering_gt}
\resizebox{\columnwidth}{!}{
\begin{tabular}{|c||c|c|c|c|}  
 \hline
 {\bf Dataset} & {\bf Max NMI} & {\bf Avg NMI} & {\bf Max RI} & {\bf Avg RI} \\ [0.5ex] 
 \hline\hline
 \EBOL &  1 & 0.83 & 1 & 0.87\\
 \COR  &  0.86 & 0.80 &0.96 & 0.92\\
 \hline
 \end{tabular}
 }
\end{subtable}\hfill{}
\begin{subtable}[t]{0.5\textwidth}
\centering
\caption{}\label{table:clusteringimprovement}
\resizebox{\linewidth}{!}{
\begin{tabular}{|c||c|c|c|c|}  
 \hline
 {\bf Dataset} & {\bf Max NMI} & {\bf Avg NMI} & {\bf Max RI} & {\bf Avg RI} \\ [0.5ex] 
 \hline\hline
 \EBOL & 22.6\% & 1.5\% & 25.3\% & 9.2\% \\
 \COR  & 8.5\% & 0.1\% & 2.9\% & 0.2\% \\
 \hline
\end{tabular}
}
\end{subtable}
\hfill
\begin{subtable}[t]{0.5\textwidth}
\centering
\caption{}\label{table:clusteringimprovement2}
\resizebox{\linewidth}{!}{
\begin{tabular}{|c||c|c|c|c|}  
 \hline
 {\bf Dataset} & {\bf Max NMI} & {\bf Avg NMI} & {\bf Max RI} & {\bf Avg RI} \\ [0.5ex] 
 \hline\hline
 \EBOL & 22.6\% & 1.3\% & 25.3\% & 9.0\% \\
 \COR  & 10.2\% & 1.4\% & 11.8\% & 9.4\% \\
 \hline
\end{tabular}
}
\end{subtable}
\end{table*}

\paragraph*{Clustering Case Study.}~We show the benefit of clustering based on
$(\tau, k)$-resilient substrings  following the methodology of~\cite{DBLP:journals/tkde/WuZLGZFW23,DBLP:journals/bioinformatics/VingaA03,DBLP:conf/icdm/0012ZLPM24}. 

A clustering is a partition of a collection of strings into sets called clusters. We consider two clusterings $C$, $C'$ with sizes $|C|$ and $|C'|$, respectively, and a collection of $N$ strings. $C$ is the ground truth clustering.

We perform pattern-based clustering of a dataset comprised of multiple strings: the top-$K$ most frequent $(\tau,k)$-resilient substrings are the features, the frequency of a substring in a string is the feature value, and the $\mathrm{k}$-means clustering algorithm is used to cluster the feature matrix. We set $\mathrm{k}=5$ in $\mathrm{k}$-means, as the ground truth clustering has $5$ clusters. 

We evaluated how similar is the clustering obtained from the clustering algorithm based on $(\tau, k)$-resilient substrings to the ground truth; see \cref{table:clustering_gt}. We used 
two well-known measures: Normalized Mutual Information (NMI) \cite{DBLP:journals/jmlr/NguyenEB10} and Rand Index (RI) \cite{rand1971objective} (see \cref{app:measures} for details), 
which take values in $[0,1]$ (higher values are preferred).

NMI and RI quantify how similar is $C'$ to the ground truth clustering. The values in all these measures are in $[0,1]$, where a higher value indicates that $C'$ is closer to the ground truth clustering. 

The NMI between $C$ and $C'$ is defined as:
$$\text{NMI}(C, C') = \frac{2 \cdot \sum_{i=1}^{|C|} \sum_{j=1}^{|C'|} \frac{n_{ij}}{N} \log \left( \frac{n_{ij/N}}{n_i\cdot \hat{n}_j / N^2} \right)}{-\sum_{i =1}^{|C|} \frac{n_i}{N}  \log \frac{n_i}{N} - \sum_{j =1}^{|C'|} \frac{\hat{n}_j}{N} \log \frac{\hat{n}_j}{N}},$$

where $n_i$ denotes the number of strings in the $i$-th cluster in $C$, $\hat{n}_j$ denotes the number of strings in the $j$-th cluster in $C'$, and $n_{ij}$ denotes the number of clusters belonging both in the $i$-th cluster in $C$ and in the $j$-th cluster in $C'$. 

The Rand Index (RI)  
measures similarity between the clustering $C'$ and the ground truth clustering $C$ by 
counting the number of pairs of strings that are placed in the same clusters in $C$ and $C'$, in different clusters in $C$ and $C'$, or in the same clusters in one of the $C$ and $C'$ and in different clusters in the other.  
$\text{RI}(C, C')$ is defined as follows:

$$
\text{RI}(C, C') = \frac{a + b}{a+b+c+d},
$$

\noindent where:
\begin{itemize}
  \item $a$ is the number of pairs of strings that are in the same cluster in $C$ and $C'$. 
  \item $b$ is the number of pairs of strings that are in different clusters in both $C$ and $C'$.
  \item $c$ is is the number of pairs of strings that are in the same cluster in $C$ but in different clusters in $C'$.
  \item $d$ is is the number of pairs of strings that are in different clusters in $C$ but in the same cluster in $C'$.
\end{itemize}

The results are the maximum and average value of these measures over all combinations of $\tau$, $k$, and $K$ such that $\tau=k\in \{60,70,80,90,100\}$ and $K\in \{50,150,250\}$. The clusters constructed by our approach are very close or the same as those in the ground truth clustering. Using the same measures and setting, we also that our approach is better in terms of finding a clustering close to the ground truth  than: (1) the same clustering algorithm based on $\tau$-frequent substrings instead of $(\tau,k)$-resilient substrings (\cref{table:clusteringimprovement}); and (2) the \FC  algorithm (\cref{table:clusteringimprovement2}).

\section{Conclusions}\label{sec:conclusion}
We introduced the \RPM problem and proposed two algorithms to solve it: an $\cO(n^3k\log n)$-time and $\cO(n^2)$-space algorithm based on DP; and an $\cO(n\log n)$-time and $\cO(n)$-space algorithm based on advanced data structures and combinatorial insights. 

Our experiments show that the latter algorithm is highly efficient and that resilient substrings are fundamentally different than frequent ones and are also useful in clustering. 

\section*{Acknowledgments}
SPP is supported by the PANGAIA and ALPACA projects that have received funding from the European Union’s Horizon 2020 research and innovation programme under the Marie Skłodowska-Curie grant agreements No 872539 and 956229, respectively.

\bibliographystyle{alphaurl}
\bibliography{references}

\appendix

\newpage

\section{Listing the \boldmath{$(\tau,k)$}-Resilient Substrings}\label{app:output}  
Assume that we have access to $\textsf{SA}(S)$, $\textsf{LCP}(S)$, and
the \textsf{OUTPUT} array. We traverse the $\textsf{SA}(S)$ from top to bottom and output the set of $(\tau,k)$-resilient substrings explicitly in $\cO(n)$ time plus time proportional to their total number. In particular, if a $(\tau,k)$-resilient substring occurs in multiple positions of $S$, the latter positions will form a \emph{range} (sub-array) of $\textsf{SA}(S)$. Thus we can avoid reporting the substring again and again by traversing $\textsf{SA}(S)$ from top to bottom and checking the corresponding \textsf{LCP} values.

\section{The DP-based Algorithm}\label{app:DP}

We represent $\occ_{S}(Z)$ as a set of intervals $\mathcal{I}=\{(i_0, i_0 + |Z|-1, (i_1, i_1 + |Z|-1), \ldots, (i_{f-1}, i_{f-1} + |Z| - 1)\}$. We assume without a loss of generality that $i_0< i_1< \ldots <i_{f-1}$. The dynamic programming algorithm~\cite{DBLP:journals/ipl/Damaschke17} underlying 
\cref{the:intervals}
has three  steps: 
\begin{enumerate}
\item We compute a matrix $W$ of size $f \times f$, such that for every $i_a, i_b \in occ_S(Z)$, the element $W[a][b]$ stores the number of intervals that do not contain position $i_a + |Z| - 1$ but contain position $i_b + |Z| -1$, where $i_a <i_b$.

\item We compute a matrix $T$ of size $f\times k$ as follows.
The first column of $T$ (i.e., the element $T[b][0]$, for all $b=0, 1 \ldots, f-1$) is initialized with the number of intervals that contain $i_b + |Z| -1 $. The remaining of the first row of $T$ (i.e., the element $T[0][h]$, for all $h=1, \ldots, k-1$) is initialized with the number of intervals that contain $i_0 + |Z| - 1$. Every other element $T[b][h]$ is then computed using $\max_{a<b}\{T[a][h-1]+W[a][b]\}$. 
\item We output $\max_{b}\{T[b][k-1]\}$, as the maximum number of intervals that can be hit by a subset of $k$ points. 
\end{enumerate}

\section{Details of the Main Algorithm}

\subsection{Simulating \boldmath{$\textsf{ST}(S)$}}\label{app:practical}
Note that instead of using $\textsf{ST}(S)$, we can apply the algorithm of Abouelhoda et al.~\cite[Algorithm 4.4]{DBLP:journals/jda/AbouelhodaKO04}, which simulates a bottom-up traversal of $\textsf{ST}(S)$ using $\textsf{SA}(S)$ and the \textsf{LCP} array of $S$ in $\cO(n)$ time. This data structure is known as the \emph{enhanced suffix array} (ESA).  Simulating the $\textsf{ST}(S)$ using ESA is usually faster, due to cache-friendly operations, and more space efficient in practice.

\subsection{AVL Trees for Maintaining Sorted Lists}\label{app:AVL_trees}

The following lemmas related to the merge operations of two AVL trees  are utilized:

\setcounter{lemma}{11}
\begin{lemma}[\hspace{0.2mm}\cite{c0176eeabbb}]\label{lemmaAVL}
Two AVL trees of size at most $n_1$ and $n_2$ can be merged in time
$\cO(\log\binom{n_1+n_2}{n_1})$.
\end{lemma}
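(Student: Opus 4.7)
The plan is to reduce merging to a sequence of \emph{finger} insertions into the larger tree. Without loss of generality, assume $n_1 \leq n_2$, and let $T_1, T_2$ be AVL trees of sizes $n_1, n_2$, respectively. I would traverse the keys of $T_1$ in sorted order $x_1 < x_2 < \cdots < x_{n_1}$ and insert them one by one into $T_2$ using a \emph{finger search} starting from the location where $x_{i-1}$ was just inserted (for $x_1$, the finger starts at the leftmost leaf of $T_2$). With the standard augmentation of AVL trees by parent pointers and level tags, a finger search for a key at rank-distance $d$ from the finger can be performed in $\cO(\log(d+1))$ time: climb up from the finger to the least ancestor whose subtree contains the target key, then descend.

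For the time analysis, let $d_i$ denote the rank-distance in the current $T_2$ between the insertion points of $x_{i-1}$ and $x_i$. Since each rank of the merged sequence is crossed at most once, we have $\sum_{i=1}^{n_1} d_i \leq n_1 + n_2$, and the total cost of the finger searches is $\sum_{i=1}^{n_1} \cO(\log(d_i+1))$. By concavity of $\log(1+x)$ (Jensen's inequality), this sum is maximized under the constraint above when all $d_i$ equal $(n_1+n_2)/n_1$, yielding a bound of $\cO(n_1 \log((n_1+n_2)/n_1))$. A standard Stirling estimate gives
$$
n_1 \log\!\left(\frac{n_1+n_2}{n_1}\right) \;=\; \Theta\!\left(\log \binom{n_1+n_2}{n_1}\right)
$$
whenever $n_1 \leq n_2$, matching the claimed bound.

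The main obstacle is showing that in an AVL tree both the search and the subsequent rebalancing triggered by a single finger insertion at rank-distance $d$ can be charged $\cO(\log(d+1))$; in particular, one must argue that the propagation of balance-factor updates does not cascade arbitrarily upward. The height/size inequality for AVL trees (a subtree of height $h$ contains at least $F_{h+2}-1$ nodes) ensures that an ancestor whose subtree spans $d$ keys lies at height $\cO(\log d)$ above the finger, which controls the ascent phase of the search. A more delicate potential-function argument is then required to show that the rebalancing cost per finger insertion is $\cO(1)$ amortized, so that the total rebalancing work is dominated by the search cost; this amortized accounting is the technical heart of the cited result.
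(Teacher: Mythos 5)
The paper does not supply its own proof of this lemma; it is a cited result (the Brown–Tarjan merging theorem). Your high-level plan—insert the smaller tree's keys into the larger tree by finger search in sorted order, then bound $\sum_i\log(d_i+1)$ subject to $\sum_i d_i\le n_1+n_2$ by concavity and relate this to $\log\binom{n_1+n_2}{n_1}$ via Stirling—is exactly the shape of the standard argument, and your Jensen and Stirling steps are fine.

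However, the step you treat as routine, the per-operation bound, contains a real gap. You claim that with parent pointers and level tags, a finger search to a key at rank distance $d$ costs $\cO(\log(d+1))$ by climbing to the lowest ancestor whose subtree contains the target and descending. In a node-oriented AVL tree this is false in the worst case: take a perfect tree on keys $1,\dots,2^h-1$, put the finger at $2^{h-1}-1$ (a leaf at depth $h-1$) and search for $2^{h-1}$ (the root). The rank distance is $1$, but the lowest ancestor containing the target is the root, and the ascent is $\Theta(\log n)$. Your appeal to the Fibonacci height/size bound does not close this: it shows that the first ancestor whose subtree has $\ge d$ elements sits at height $\cO(\log d)$ above the finger, but that ancestor need not contain the target, whose rank distance is $d$ yet whose nearest common ancestor can have a far larger subtree. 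The cited result sidesteps this either by using a level-linked, leaf-oriented balanced tree (where a sideways move at the right level makes every finger search $\cO(\log d)$ in the worst case) or by a split/join argument that bounds the total work over the whole monotone sequence rather than each search in isolation. Either fix still delivers the claimed merge time, but your per-search claim as written is incorrect, and the ``technical heart'' is not only the $\cO(1)$ amortized rebalancing you flag at the end—it is also this amortization (or the level-link structure) for the search cost itself.
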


Using the {\sl smaller-half trick},  the sum over all nodes  of  an arbitrary binary tree  of the terms stated  in \cref{lemmaAVL} is given as follows:

\begin{lemma}[\hspace{0.2mm}\cite{c0176eeabbb}]
Let $T$ be an arbitrary  binary tree with $n$ leaves. The sum over all internal nodes $v$ in $T$ of terms $\log\binom{n_1+n_2}{n_1}$, where $n_1$ and $n_2$ are the numbers of leaves in the subtrees rooted at
the two children of $v$ and $n_1 \leq n_2$, is $\cO(n\log n)$.
\end{lemma}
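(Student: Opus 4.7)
The plan is to combine a Stirling-type bound with the classical \emph{smaller-half} (or heavy-path) trick. The starting point is the standard inequality $\binom{n_1+n_2}{n_1} \leq \bigl(e(n_1+n_2)/n_1\bigr)^{n_1}$, which, after taking logarithms, yields
\[
\log\binom{n_1+n_2}{n_1} \;\leq\; n_1 \log\frac{n_1+n_2}{n_1} + \cO(n_1)
\]
whenever $n_1 \leq n_2$. It therefore suffices to show that summing each of the two terms on the right-hand side over all internal nodes of $T$ gives $\cO(n \log n)$.

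For the $\cO(n_1)$ term, I would exchange the order of summation and, for each leaf $\ell$, count the ancestors $v$ of $\ell$ at which $\ell$ lies in the smaller of the two subtrees. Each time we move from one such ancestor to the next higher one, the size of the subtree containing $\ell$ at least doubles (because at the lower one it made up at most half of the ambient subtree). Hence every leaf has at most $\lceil \log_2 n \rceil$ ``small-side'' ancestors, and summing the $\cO(n_1)$ contributions this way gives $\cO(n \log n)$ in total.

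For the dominant $n_1 \log((n_1+n_2)/n_1)$ term I again switch to a per-leaf accounting. Fix a leaf $\ell$ and let $v_1,\ldots,v_k$ be its small-side ancestors ordered from bottom to top; denote the smaller-subtree size at $v_i$ by $s_i$ and the whole-subtree size at $v_i$ by $t_i$. The contribution of $v_i$ charged to $\ell$ is $\log(t_i / s_i)$. The crucial geometric fact is that between $v_i$ and $v_{i+1}$ the subtree containing $\ell$ only grows, so the smaller subtree at $v_{i+1}$ contains the entire subtree rooted at $v_i$; thus $s_{i+1} \geq t_i$. Setting $s_{k+1} := n$, this telescopes:
\[
\sum_{i=1}^{k} \log \frac{t_i}{s_i} \;\leq\; \sum_{i=1}^{k} \log \frac{s_{i+1}}{s_i} \;=\; \log \frac{s_{k+1}}{s_1} \;\leq\; \log n.
\]
Summed across all $n$ leaves this yields $\cO(n \log n)$, which together with the first part completes the bound.

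The main obstacle is establishing the inequality $s_{i+1} \geq t_i$ cleanly. Between two consecutive small-side ancestors $v_i$ and $v_{i+1}$ of $\ell$, the intermediate ancestors all place $\ell$ in their \emph{larger} subtree, so the subtree that contains $\ell$ grows monotonically along that stretch of the root-path; in particular, when we reach $v_{i+1}$ the smaller subtree there must already enclose the whole subtree rooted at $v_i$. Once this structural observation is pinned down, the telescoping argument and the per-leaf $\cO(\log n)$ budget follow mechanically.
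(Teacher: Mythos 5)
The paper does not prove this lemma---it cites it as a known fact from the reference, so there is no "paper's own proof" to compare against. Your argument is correct: the doubling property of small-side ancestors bounds their number per leaf by $\cO(\log n)$, the key structural inequality $s_{i+1}\ge t_i$ holds exactly for the reason you give (the child of $v_{i+1}$ on $\ell$'s path is an ancestor of $v_i$, hence its subtree contains all of $v_i$'s subtree), and the telescoping then bounds the per-leaf charge for the dominant term by $\log n$.

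That said, there is a noticeably shorter proof that avoids Stirling bounds and the smaller-half trick altogether, and which is likely what the cited source has in mind. Write
\[
\log\binom{n_1+n_2}{n_1} \;=\; \log\bigl((n_1+n_2)!\bigr) - \log(n_1!) - \log(n_2!).
\]
Summing this identity over all internal nodes $v$, the term $-\log(n_i!)$ charged at $v$ for a child $c$ is exactly cancelled by the term $+\log(t_c!)$ charged at $c$ (if $c$ is internal), or vanishes (if $c$ is a leaf, since $\log 1!=0$). Everything telescopes and only the root's positive term survives, giving
\[
\sum_{v}\log\binom{n_1+n_2}{n_1} \;=\; \log(n!) \;=\; \cO(n\log n).
\]
This yields the exact value $\log(n!)$ rather than just an $\cO(n\log n)$ upper bound, and it does not require establishing any ancestor-doubling lemma. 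Your proof is sound but does more work than necessary; the additive-factorial decomposition is the cleaner route.
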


The suffix tree $\textsf{ST}(S)$ can be converted into a binary tree via the addition of dummy nodes in $\cO(n)$ time and space, therefore,  maintaining the sorted list of occurrences of $\textsf{str}(v)$ for each node $v$ during a DFS traversal can be done in $\cO(n\log n)$ time.

\section{Evaluation Measures for Clustering}\label{app:measures}
In the following, we discuss the two well-known measures of clustering quality we used in the clustering case study, namely Normalized Mutual Information (NMI) \cite{DBLP:journals/jmlr/NguyenEB10} and the Rand Index (RI) \cite{rand1971objective}.

A clustering is a partition of a collection of strings into sets called clusters. We consider two clusterings $C$, $C'$ with sizes $|C|$ and $|C'|$, respectively, and a collection of $N$ strings. $C$ is the ground truth clustering. 
NMI and RI quantify how similar is $C'$ to the ground truth clustering. The values in all these measures are in $[0,1]$, where a higher value indicates that~$C'$ is closer to the ground truth clustering. 

The NMI between $C$ and $C'$ is defined as:
\[\text{NMI}(C, C') = \frac{2 \cdot \sum_{i=1}^{|C|} \sum_{j=1}^{|C'|} \frac{n_{ij}}{N} \log \left( \frac{n_{ij/N}}{n_i\cdot \hat{n}_j / N^2} \right)}{-\sum_{i =1}^{|C|} \frac{n_i}{N}  \log \frac{n_i}{N} - \sum_{j =1}^{|C'|} \frac{\hat{n}_j}{N} \log \frac{\hat{n}_j}{N}},\]
where $n_i$ denotes the number of strings in the $i$-th cluster in $C$, $\hat{n}_j$ denotes the number of strings in the $j$-th cluster in $C'$, and $n_{ij}$ denotes the number of clusters belonging both in the $i$-th cluster in $C$ and in the $j$-th cluster in $C'$. 

The Rand Index (RI) measures similarity between the clustering $C'$ and the ground truth clustering $C$ by 
counting the number of pairs of strings that are placed in the same clusters in $C$ and $C'$, in different clusters in $C$ and $C'$, or in the same clusters in one of the $C$ and $C'$ and in different clusters in the other.  
 
$\text{RI}(C, C')$ is defined as follows:

\[\text{RI}(C, C') = \frac{a + b}{a+b+c+d},\]

\noindent where:
\begin{itemize}
  \item $a$ is the number of pairs of strings that are in the same cluster in $C$ and $C'$. 
  \item $b$ is the number of pairs of strings that are in different clusters in both $C$ and $C'$.
  \item $c$ is is the number of pairs of strings that are in the same cluster in $C$ but in different clusters in $C'$.
  \item $d$ is is the number of pairs of strings that are in different clusters in $C$ but in the same cluster in $C'$.
\end{itemize}

\end{document}